\documentclass[
a4paper
]{article}

\usepackage[utf8]{inputenc}
\usepackage{amsmath}
\usepackage{amssymb}
\usepackage{amsthm}
\usepackage{mathtools}
\usepackage{fullpage}

\usepackage{xspace}
\usepackage{graphicx}
\usepackage{MnSymbol}
\usepackage[linesnumbered,ruled,vlined]{algorithm2e}
\usepackage{subcaption}
\usepackage{caption}
\usepackage{xfrac}

\usepackage{tikz}
\usetikzlibrary{automata,angles,calc,graphs,positioning,chains,arrows,decorations.pathmorphing}

\theoremstyle{plain}
\newtheorem{theorem}{Theorem}[section]
\newtheorem*{theorem*}{Theorem}
\newtheorem{proposition}[theorem]{Proposition}
\newtheorem*{proposition*}{Proposition}

\newtheorem*{lemma*}{Lemma}
\newtheorem{corollary}[theorem]{Corollary}
\newtheorem*{corollary*}{Corollary}

\newtheorem*{claim*}{Claim}

\theoremstyle{definition}

\newtheorem*{definition*}{Definition}

\theoremstyle{remark}

\newtheorem*{observation*}{Observation}
\newtheorem{example}[theorem]{Example}
\newtheorem*{example*}{Example}

\newtheorem{property}{Property}

\DeclareMathOperator*{\argmax}{arg\,max}

\newcommand{\bbN}{\mathbb{N}}

\newcommand{\bbQ}{\mathbb{Q}}
\newcommand{\bbQgeqz}{\bbQ_{\geq 0}}

\renewcommand{\paragraph}[1]{\smallskip
	\noindent\textbf{#1}.}

\newcommand{\sem}[1]{{\lsem{}{#1}\rsem}}

\newcommand{\U}{\mathcal{U}}
\newcommand{\setsU}{\finite(\U)}
\newcommand{\deltasum}{\delta_{\operatorname{sum}}}
\newcommand{\deltamin}{\delta_{\operatorname{min}}}
\newcommand{\deltaW}{\delta_{\operatorname{W}}}
\newcommand{\deltasummin}{\delta_{\operatorname{sum-min}}}

\newcommand{\dd}{\textbf{d}}
\newcommand{\ud}{\textbf{u}}
\newcommand{\udR}{\ud_{\operatorname{rel}}}

\newcommand{\Ball}{\mathcal{B}}
\newcommand{\tparent}{\operatorname{parent}}
\newcommand{\tree}{\mathcal{T}}
\newcommand{\utree}[1][S]{\tree_{#1}}
\newcommand{\tchildren}{\operatorname{children}}

\newcommand{\radius}{\operatorname{r}_S}
\newcommand{\uv}{V_{S}}
\newcommand{\ue}{E_{S}}

\newcommand{\cand}{\mathsf{C}}

\newcommand{\irI}{\mathcal{I}}

\newcommand{\irroot}{\texttt{Root}}
\newcommand{\irchildren}{\texttt{Children}}
\newcommand{\irmember}{\texttt{Member}}

\newcommand{\nop}[1]{}

\newcommand{\bin}{\text{bin}}

\newcommand{\diversityComputation}{\mathtt{DiversityComputation}}
\newcommand{\diversityExplicit}{\mathtt{DiversityExplicit}}
\newcommand{\diversityImplicit}{\mathtt{DiversityImplicit}}
\newcommand{\finite}{\operatorname{finite}}

\newcommand{\fdelta}{f_{\delta}}
\newcommand{\fQ}{f_{Q}}
\newcommand{\fT}{f_{\mathcal{T}}}

\title{Towards Tractability of the Diversity of Query Answers: Ultrametrics to the Rescue}

\usepackage{authblk}

\author[1]{Marcelo Arenas}
\author[2]{Timo Camillo Merkl}
\author[2]{Reinhard Pichler}
\author[1]{Cristian Riveros}
\affil[1]{Pontificia Universidad Católica de Chile, \texttt{marenas@uc.cl}, \texttt{cristian.riveros@uc.cl}}
\affil[2]{TU Wien, Austria, \texttt{timo.merkl@tuwien.ac.at}, \texttt{pichler@dbai.tuwien.ac.at}}

\date{} 
 
\begin{document}

\maketitle

\begin{abstract}
The set of answers to a query may be very large, potentially
overwhelming users when presented with the entire set. In such cases,
presenting only a small subset of the answers to the user may be
preferable. A natural requirement for this subset is that it should be
as diverse as possible to reflect the variety of the entire
population. To achieve this, the diversity of a subset is measured
using a metric that determines how different two solutions are and a
diversity function that extends this metric from pairs to sets.  In
the past, several studies have shown that finding a diverse subset from an 
explicitly given set is
intractable even for simple metrics (like Hamming distance) and simple
diversity functions (like summing all pairwise distances). This
complexity barrier becomes even more challenging when trying to output
a diverse subset from a set that is only implicitly given (such as 
the query answers for a given query and a database). Until
now, tractable cases have been found only for restricted problems and
particular diversity functions.

To overcome these limitations, we focus in this work on the notion of
ultrametrics, which have been widely studied and used in many
applications. Starting from any ultrametric $d$ and a 
diversity function $\delta$ extending $d$, we provide sufficient
conditions over $\delta$ for having polynomial-time algorithms to
construct diverse answers. 
To the best of our
knowledge, these conditions are satisfied by all the diversity
functions considered in the literature. Moreover, we complement these
results with lower bounds that show specific cases when these
conditions are not satisfied and finding diverse subsets becomes
intractable. We conclude by applying these results to the evaluation
of conjunctive queries, demonstrating efficient algorithms for finding
a diverse subset of solutions for acyclic conjunctive queries when the
attribute order is used to measure diversity.
\end{abstract}

	\section{Introduction}\label{sec:introduction}

The set of answers to a query may be very large, potentially
overwhelming users when presented with the entire set.
In such cases,
presenting only a small subset of the answers to the user may be
preferable.
Ideally, the selected answers should give the user 
a good overview of the variety present in the complete set of answers.
As was argued in \cite{DBLP:conf/icde/VeeSSBA08}, determining such a small subset 
by sampling will, in general, 
not constitute a satisfactory solution to this problem,
since it would most probably miss interesting but rarely occurring answers. 
Instead, the goal should be to present to the user  a {\em diverse} subset of the answer space
to reflect its variety.

This raises the question of how to define diversity among query results. 
The natural way of defining the diversity of a set of elements (see, e.g., 
\cite{DBLP:conf/aaai/IngmarBST20} for a whole framework on dealing with diversity)
is to first define the 
diversity of 2 elements by a metric (the ``distance'' function) and then to appropriately 
generalize it to arbitrary (finite) sets. 
Both, for the metric and for the generalization to arbitrary sets, many choices exist and, 
as was mentioned in \cite{weitzman1992diversity},
it ultimately depends on the application context which distance and diversity function 
is best suited.

We restrict ourselves here to the relational model. Hence, the answer to a query is a set of tuples and we are interested in outputting 
a subset with a given size $k$ so as to maximize the diversity.
A natural and simple choice for the distance between two tuples is the Hamming distance
(i.e., the number of positions in which the two tuples differ), which was used e.g., 
in the analysis of the diversity of query answers in~\cite{MerklPS23}. A more nuanced point of view was taken 
in~\cite{DBLP:conf/icde/VeeSSBA08,%
DBLP:journals/debu/VeeSA09}, where an ordering of the attributes is assumed and 
tuples are considered as more distant if they differ on an attribute that comes 
earlier in the ordering. This 
idea was exemplified by a car-relation with attributes 
make-model-color-year-description in this order. Hence, for instance, the query engine
would preferably output a subset of cars with different models rather than with different colors. Actually, this distance function is an ultrametric, i.e., a metric $\mathbf{d}$, that 
satisfies the strong triangle inequality 
$\mathbf{d}(a,c) \leq \max\{\mathbf{d}(a,b),\mathbf{d}(b,c)\}$ for any three elements.

For the generalization of the distance to a diversity $\delta$ of sets, 
one can aggregate the pairwise distances, for instance, 
by taking their sum or minimum
(see, e.g., \cite{DBLP:conf/icde/VeeSSBA08,%
DBLP:journals/debu/VeeSA09,%
MerklPS23,%
DBLP:conf/aaai/IngmarBST20}). In this work, we also want to look at a more sophisticated
diversity measure proposed by Weitzman 
in \cite{weitzman1992diversity}, which we will refer to as $\deltaW$. 
It is motivated by the 
goal of capturing the increase of diversity (measured as the minimum
distance from the already chosen elements) when yet another element is added. 
Detailed formal definitions of all concepts mentioned here will be given in 
Section~\ref{sec:preliminaries}.

Aiming at a diverse subset of the answers to a query raises several 
computational problems. The most basic problem is to actually evaluate the
diversity function $\delta$ for a given set $S$ of tuples. Clearly, 
this problem is easy to solve, if $\delta$ is defined by taking 
one
of the standard aggregate functions sum or min
over some efficiently computable metric (such as the Hamming distance).
However, if we take the more sophisticated 
diversity measure $\deltaW$ proposed by Weitzman,
this is not clear any more. In fact, only an exponential algorithm was proposed in 
\cite{weitzman1992diversity} for this task and it was left open, if a polynomial-time 
algorithm exists. We will settle this open question by proving NP-hardness.

Our ultimate goal is to select a 
small subset (say, of size $k$ for given $k > 1$) 
of the query answers so as to maximize the diversity. 
When considering data complexity and restricting ourselves to FO-queries, 
query evaluation is tractable and we may 
assume the entire set $S$ of query answers as {\em explicitly} given. 
Now the goal is to find a subset $S' \subseteq S$ of size $k$ such 
that $\delta(S')$ is maximal. In~\cite{MerklPS23}, it was shown that this task
is NP-hard even for the simple setting where $\delta$ is defined as the sum or as the 
minimum over the pairwise Hamming distances of the tuples. 
Taking the Weitzman diversity clearly makes this task yet more complex. 
It is here that ultrametrics come to the rescue. Indeed, we show tractability of the following problem: given a set $S$ of elements and integer $k > 1$, 
find a subset $S'$ of $S$ such that $\delta(S')$ is maximal, where $\delta$ 
is a diversity function extending an ultrametric and $\delta$ satisfies a certain 
monotonicity property we call \textit{weak subset-monotonicity}.
Moreover, we show that even slightly 
relaxing the monotonicity property immediately leads to NP-hardness.

Things get yet more complex if we consider combined complexity. 
Since the set $S$ of query answers can be exponentially big, we cannot afford to compute it upfront. 
In other words, 
$S$ is only given {\em implicitly} by the database $D$ and query $Q$. 
But the goal remains the same: we want to find a subset $S'$ of $S$ with $|S'| = k$
that maximizes the diversity $\delta(S')$. Since query evaluation is intractable 
for conjunctive queries even without worrying about diversity, 
we now restrict the query language to acyclic conjunctive queries. 
We then manage again to prove tractability for the task of finding a subset $S'$ with 
maximal diversity, provided that $\delta$ is \textit{subset-monotone} -- a restriction slightly stronger than weak subset-monotonicity but which is satisfied by 
$\deltaW$, for example. Again we show tightness of this tractability result by proving that
without this stronger notion of monotonicity, the problem is NP-hard.
Finally, we also identify a kind of middle ground in terms of monotonicity of 
$\delta$ that ensures fixed-parameter tractability when considering $k$ 
as parameter.

\nop{***************************

\paragraph{Summary of main results}
\begin{itemize}
    \item We analyze the complexity of evaluating the Weitzman diversity function $\deltaW$ 
    from \cite{weitzman1992diversity}
    and settle the open question concerning its complexity. More specifically, we prove NP-hardness in a general setting of a metric space. This result is then carried over to the concrete case where the metric space is 
    a set of tuples equipped with the Hamming distance.
    \item We study the problem of finding a subset $S' \subseteq S$ that maximizes the diversity for the case that 
    $S$ is explicitly given. On the positive side, we prove tractability of this task if $\delta$ is defined 
    over an ultrametric and satisfies a natural form of monotonicity. However, we also show that 
    the tractability breaks if $\delta$ does not satisfy this monotonicity condition.
    \item 
    We then also study the problem of finding a subset $S' \subseteq S$ that maximizes the diversity for the case that 
    $S$ is implicitly given. Here, we first define a general framework that formalizes the notion of 
    an implicitly given set $S$ equipped with an ultrametric. 
    We then establish again tractability for the task of finding a subset 
    $S' \subseteq S$ with maximal diversity by imposing a slightly stronger monotonicity restriction 
    on $\delta$. We also show that, for a slightly weaker notion of monotonicity, only 
    fixed-parameter tractability (considering the size $k$ of $S'$ as parameter) is achievable. 
    \item Our results for the explicit and implicit cases are first developed for general settings
    of ultrametric spaces. However, we also show how these result carry over to the 
    concrete case of query answering -- corresponding to data complexity and combined complexity. 
\end{itemize}

***************************}

\smallskip
\noindent
{\bf Structure of the paper.}
In Section \ref{sec:preliminaries}, we introduce basic notions and formally define the 
computational problems studied here. The complexity of evaluating $\deltaW$ is studied in 
Section \ref{sec:hardness}. 
Fundamental (and well-known) properties of ultrametrics
are recalled in Section \ref{sec:ultrametric}. 
In Sections \ref{sec:explicit-rep} and \ref{sec:implicit-rep},
we study the problem of finding a subset $S' \subseteq S$  maximizing $\delta(S')$ 
for the cases where $S$ is given explicitly or implicitly, respectively. 
In particular, in Section~\ref{sec:implicit-rep}, we first define a general framework that formalizes the notion of 
    an implicitly given set $S$ equipped with an ultrametric. 
The general results for the implicit setting are then studied in 
Section~\ref{sec:acq} for the 
concrete case of combined complexity of query answering for acyclic conjunctive queries. 
We discuss related work in Section~\ref{sec:relatedwork}
and we provide a conclusion and an outlook to future work in 
Section~\ref{sec:conclusions}. The full proofs of all 
result presented here are deferred to the appendix.

	\section{Preliminaries}\label{sec:preliminaries}

\newcommand{\D}{\mathbb{D}}
\newcommand{\arity}{\operatorname{arity}}
\newcommand{\relnames}{\mathcal{R}}
\newcommand{\Var}{\mathcal{X}}
\newcommand{\ba}{\bar{a}}
\newcommand{\bb}{\bar{b}}
\newcommand{\bc}{\bar{c}}
\newcommand{\bx}{\bar{x}}

\paragraph{Sets and sequences} 
We denote by $\bbN$ the set of natural numbers, by $\bbQ$ the set of rational numbers, and by $\bbQgeqz$ the set of non-negative rational numbers.
Given a set $A$, we denote by $\finite(A)$ the set of all non-empty finite subsets of $A$. For $k \in \bbN$, we say that $B \in \finite(A)$ is a \emph{$k$-subset} if $|B| = k$.
We usually use $a$, $b$, or $c$ to denote elements, and $\ba$, $\bb$, or $\bc$ to denote sequences of such elements. For $\ba = a_1, \ldots, a_{k}$, we write $\ba[i] := a_i$ to denote the $i$-th element of~$\ba$ and $|\ba| := k$ to denote the length of $\ba$. Further, given a function $f$ we write $f(\bar{a}) := f(a_1), \ldots, f(a_{k})$ to denote the function applied to each element of $\ba$.

\paragraph{Conjunctive queries} Fix a set $\D$ of data values. A relational schema $\sigma$ (or just schema) is a pair $(\relnames, \arity)$ where $\relnames$ is a set of relation names and $\arity: \relnames \rightarrow \bbN$ assigns 
each name to a number. An $R$-tuple of $\sigma$ (or just a tuple) is a syntactic object
$R(a_1, \ldots, a_{k})$ such that $R \in \relnames$, $a_i \in \D$ for every~$i$,  
and $k = \arity(R)$. We will write $R(\ba)$ to denote a tuple with values $\ba$. A \emph{relational database} $D$ over $\sigma$ is a finite set of tuples over $\sigma$.

Fix a schema $\sigma = (\relnames, \arity)$ and a set of variables $\Var$ disjoint from $\D$. A \emph{Conjunctive Query} (CQ) over $\sigma$ is a syntactic structure of the~form:
\[
Q(\bx) \ \leftarrow \ R_1(\bx_1), \ldots, R_{m}(\bx_{m})  \tag{\dag} \label{eq:cq}
\]
such that $Q$ denotes the answer relation and $R_i$ are relation names in $\relnames$, $\bx_i$ is a sequence of variables in $\Var$, 
$|\bx| = \arity(Q)$, and $|\bx_i| = \arity(R_i)$ for every $i \leq m$. Further, $\bx$ is a sequence of variables appearing in $\bx_1, \ldots, \bx_{m}$. 
We will denote a CQ like (\ref{eq:cq}) by $Q$, where $Q(\bx)$ and $R_1(\bx_1), \ldots, R_{m}(\bx_{m})$ are called the \emph{head} and the \emph{body} of $Q$, respectively. 
Furthermore, we call each $R_i(\bx_i)$ an \emph{atom} of $Q$.

Let $Q$ be a CQ like (\ref{eq:cq}), and $D$ be a database over the same schema~$\sigma$. A \emph{homomorphism} from $Q$ to $D$ is a function $h: \Var \rightarrow \D$ such that $R_i(h(\bx_i)) \in D$ for every $i \leq m$. We define the \textit{answers} of $Q$ over $D$ as the set of $Q$-tuples
\[
\sem{Q}(D) \ := \ \{Q(h(\bx)) \mid \text{$h$ is a homomorphism from $Q$ to $D$}\}.
\]

\paragraph{Diversity setting} Let $\U$ be an infinite set.
We see $\U$ as a \emph{universe} of possible solutions and $S \in \setsU$ as a candidate finite set of solutions that cannot be empty. To determine the diversity of $S$, we first determine how different the pairs of elements are in $S$, which is done through a metric. A \emph{metric} $\dd$ over $\U$ is a function $\dd : \U \times \U \rightarrow \bbQgeqz$ such that $\dd(a,b) = 0$ iff $a = b$, $\dd$ is symmetric (i.e., $\dd(a,b) = \dd(b,a)$), and $\dd$ satisfies the triangle inequality (i.e., $\dd(a,c) \leq \dd(a,b) + \dd(b, c)$). We define the distance of an element $a \in \U$ to a set $S \in \finite(\U)$ as $\dd(a, S) := \min_{b\in S} \dd(a,b)$.

Given a metric $\dd$, a \emph{diversity function} $\delta$ of $\dd$ is a function $\delta : \setsU \rightarrow \bbQgeqz$ such that $\delta(S) = 0$ iff $|S| = 1$, and $\delta(\{a,b\}) = \dd(a,b)$. Note that we see $\delta$ as a function that extends $\dd$ from pairs to sets, and that is $0$ when the set has a single element (i.e., no diversity). Moreover, we impose the 
restriction that $\delta$ should be closed under isomorphism. That is, if $f : \U \rightarrow \U$ is
a bijective function such that $\dd(a,b) = \dd(f(a),f(b))$ for every $a,b \in \U$, then $\delta(S) = \delta(f(S))$ for every $S \in \setsU$.

As proposed in \cite{DBLP:conf/aaai/IngmarBST20}, 
one way of defining a diversity function $\delta$ for a given 
metric $\dd$ over $\U$ is to define an aggregator $f$ that combines the pairwise distances. That is, 
we set 
$\delta(S): = f(\dd(a,b)_{a,b \in S})$.
Common aggregators are sum and min, which give rise to the following diversity functions 
of an arbitrary metric $\dd$%
\footnote{Note that, for $\deltasum(S)$, the distance between any two 
distinct elements $a,b$ is contained twice in this sum, 
namely as $\dd(a,b)$ and $\dd(b,a)$. We could avoid this by 
imposing a condition of the form $a < b$ or by dividing the sum by 2.
However, this is irrelevant in the sequel and, for the sake of simplifying
the notation, we have omitted such an addition.}:
\[
\deltasum(S) := \sum_{a,b \in S} \dd(a,b) \ \ \ \ \text{ and }  \ \ \ \  \deltamin(S) := \min_{a,b \in S \,:\, a \neq b} \dd(a,b).
\]
A more elaborate diversity function is the \emph{Weitzman diversity}
function $\deltaW$~\cite{weitzman1992diversity}, which is recursively
defined as follows:
\begin{equation}
\label{equ:deltaWS}
\deltaW(S) \ := \ \max_{a \in S} \big(\, \deltaW(S \setminus \{a\}) + \dd(a, S \setminus \{a\})  \, \big)
\end{equation}
where $\deltaW(\{a\}) := 0$ is the base case.
In \cite{weitzman1992diversity}, it is shown that $\deltaW$ satisfies several favorable properties. 
For instance, in many application contexts, the ``monotonicity of species" is desirable. That is, adding an 
element 
(referred to as ``species'' in~\cite{weitzman1992diversity})
to a collection should increase its diversity. Now the question is, by how much the diversity 
$\delta$
should increase. Analogously to the first derivative, 
it seems plausible to request  that 
\begin{equation}
\label{equ:derivative}
\delta(S) \ = \ \delta(S \setminus \{a\}) + \dd(a, S \setminus \{a\}) 
\end{equation}
should hold for every $a \in S$.
That is, the additional diversity achieved by adding element $a$ corresponds to 
the distance of $a$ to its closest relative in $S \setminus \{a\}$.
However, as is argued in~\cite{weitzman1992diversity}, since this property can, in general, 
not be satisfied for every element $a$, the diversity function $\deltaW$ provides a 
reasonable approximation to Equation~(\ref{equ:derivative})
by taking the maximum over all $a \in S$.

\nop{*********************
Given a universe $\U$, a metric $\dd$ over $\U$ and a diversity function $\delta$ for $\dd$, we are interested in the following computational problem:
\begin{center}
	\framebox{
		\begin{tabular}{rl}
			\textbf{Problem:} & $\mathtt{Diversity}[\delta]$\\
			\textbf{Input:} & A set $S \subseteq \U$ and $k > 1$ \\
			\textbf{Output:} & $\argmax_{S'\subseteq S: |S'| = k} \delta(S')$
		\end{tabular}
	}
\end{center}
*********************}

\paragraph{Diversity problems} 
When confronted with the task of selecting a {\em diverse} set of elements from an (explicitly or implicitly) given 
set, we are mainly concerned with three problems -- each of them depending on a concrete 
{\em diversity} function $\delta$, which in turn is defined over some universe $\U$ of elements. 
The most basic problem consists in 
computing the diversity for a given $S \in \setsU$:
\begin{center}
		\begin{tabular}{rl}
			\hline\\[-2ex]
			\textbf{Problem:} & $\diversityComputation[\delta]$\\
			\textbf{Input:} & A finite set $S \subseteq \U$ \\
			\textbf{Output:} & $\delta(S)$\\\\[-2.2ex]
			\hline
		\end{tabular}
\end{center}

\smallskip
An additional source of complexity is introduced if the task is to find a subset of $S$ 
with a certain diversity.
\begin{center}
		\begin{tabular}{rl}
			\hline\\[-2ex]
			\textbf{Problem:} & $\diversityExplicit[\delta]$\\
			\textbf{Input:} & A finite set $S \subseteq \U$ and $k > 1$ \\
			\textbf{Output:} & $\argmax_{S'\subseteq S\,:\, |S'| = k} \delta(S')$\\\\[-2.2ex]
			\hline
		\end{tabular}
\end{center}
In light of the previous problem, it is convenient to introduce the following notation: 
we call $S'$ with $S' \subseteq S$ 
a ``{\em $k$-diverse subset of $S$}'', if $S' = \argmax_{A \subseteq S \,:\, |A| = k} \delta(A)$.
In other words, the goal of the $\diversityExplicit[\delta]$ problem is to find a 
$k$-diverse subset of $S$.

\smallskip
Things may get yet more complex, if the set $S \subseteq \U$ from which we want to select a subset with 
maximal diversity is only ``implicitly'' given. 
For us, the most important example of such a setting is when $S$ is the set of answer tuples 
to a given query $Q$ (in particular, an acyclic CQ) over database $D$ and we consider combined complexity. Further settings will be introduced in 
Section~\ref{sec:implicit-rep}. All these settings have in common that 
$S$ might be 
exponentially big and one cannot afford to turn the implicit representation into an explicit one upfront. 
\begin{center}
		\begin{tabular}{rl}
			\hline\\[-2ex]
			\textbf{Problem:} & $\diversityImplicit[\delta]$\\
			\textbf{Input:} & An implicit representation of a finite set $S \subseteq \U$ and $k > 1$ \\
			\textbf{Output:} & $\argmax_{S'\subseteq S \,:\, |S'| = k} \delta(S')$\\\\[-2.2ex]
			\hline
		\end{tabular}
\end{center}

\smallskip
By slight abuse of notation, we will formulate intractability results on the 
three functional problems introduced above in the form of ``$\mathsf{NP}$-hardness'' results. 
Strictly speaking, we thus mean the decision variants of the diversity problems,
i.e., deciding if the diversity $\delta(S)$ is above a given threshold $th$ 
or if a set $S' \subseteq S$ with $\delta(S') \geq th$ exists.

\paragraph{Complexity analysis of algorithms}
For the implementation of our algorithms, we assume the computational model of
Random Access Machines (RAM) with uniform cost measure and addition and subtraction
as basic operations~\cite{DBLP:books/aw/AhoHU74}.
Further, in all the scenarios considered in this paper, a metric $\dd$ is defined
over a countably infinite set $\U$, and the value $\dd(a,b)$ is a
non-negative rational number for every $a,b \in \U$. Thus, we assume
that the codomain of every metric $\dd$ is the set $\bbQgeqz$, which in
particular implies that we have a finite representation for each
possible value of a metric that can be stored in a fixed number of RAM registers. Moreover, although $\dd$ is defined over an
infinite set, we will only need its values for a finite set, and we
assume that $\dd(a,b)$ can be computed in constant time for any pair
$a$, $b$ of elements in this set. Alternatively, one could multiply
the complexity of our algorithms by a parameter $p$ that encapsulates
the cost of computing $\dd(a,b)$ or consider the metric as given by a
look-up table at the expense of a quadratic blow-up of the
input. Neither of these alternatives would provide any additional
insights while complicating the notation or blurring the setting. We
have therefore refrained from adopting one of them.

	\section{Computing diversity is hard}\label{sec:hardness}

\newcommand{\deltaPi}{\delta_{\pi}}

\nop{*****************************
	\subsection{Our Plan}
	
	Ideas:
	\begin{enumerate}
		\item Sum-diversity is W[1]-hard even for explicit representations.
		\begin{itemize}
			\item See Section~\ref{app:hardness}.
		\end{itemize}
		
		\item Provide a result with min diversity.
		\begin{itemize}
			\item TODO?
		\end{itemize}
		
		\item Show that computing Weitzmann diversity is hard for explicit representation.
		\begin{itemize}
			\item See Section~\ref{app:weitzman}.
			\item Here say that to the best of our knowledge this has not been proved. 
		\end{itemize}
	\end{enumerate}
	*****************************}

\nop{*****************************
	\reinhard{Camillo and I have the following concerns with the current plan for Section \ref{sec:hardness}}
	\begin{enumerate}
		\item 
		The $\mathtt{Diversity}$ problem defined in Section \ref{sec:preliminaries} is only one out of 3 problems
		that arise from dealing with diversity. Maybe we can present them as a natural sequence of problems where 
		we always add yet another source of complexity: 
		\begin{enumerate}
			\item 
			first, compute $\delta(S)$, which may already be a hard problem as 
			we show for Weitzman-diversity; 
			\item then select a subset $S' \subseteq S$ of size $k > 1$ 
			from an {\em explicitly} given subset $S$ of universe $\U$ 
			to maximize diversity; 
			\item finally, select a subset $S' \subseteq S$ of size $k > 1$ 
			from an {\em implicitly} given subset $S$ of universe $\U$ 
			to maximize diversity;
		\end{enumerate}
		\item Sum-diversity is {\em not necessarily} 
		W[1]-hard for explicit representations of a set of tuples. 
		In the ICDT paper \cite{MerklPS23},
		we have shown FPT-membership for the following setting: the set of tuples is explicitly given 
		(strictly speaking, we considered data complexity for an arbitrary FO-formula),
		the distance between 2 tuples is given as the Hamming distance; 
		and the diversity of a set of tuples is obtained by a monotone aggregator over the 
		pairwise distances -- this includes, of course, sum and min. The W[1]-hardness in Appendix~\ref{app:hardness}
		is about finding $k$ diverse elements from a set of shortest paths for given RPQ and fixed start and end points. 
		Not sure if/how RPQs fit into our story about answering CQs?! 
		\\[1.1ex]
		However, also this can be presented as a sequence of increasingly hard problems: 
		\begin{enumerate}
			\item for simple distance measure (in particular, Hamming distance) and monotone aggregator
			(which is computable in FPT), then explicit representation leads to FPT-membership.
			Hamming distance + Weitzman is in this category.
			\item one possible extension: implicit representation: W[1]-hard and 
			(for ACQs and Hamming distance): in XP
			\item another possible extension: more general distance function, e.g., distances in graph: 
			W[1]-hard: holds for sum and min aggregation (and of course also for Weitzman).
			
			\item Remark (does not really fit into our concerns about ``explicit representation''; 
			but has to be mentioned somewhere): for general distance measure, even the computation of the diversity is hard for 
			Weitzman aggregator.
			
		\end{enumerate}
		
	\end{enumerate}
	*****************************}

The most basic computational problem considered here is $\diversityComputation[\deltaW]$. 
Clearly, for $\deltasum$ and $\deltamin$, this problem is efficiently solvable. %
Here, we study the complexity of computing the 
diversity $\delta(S)$ of a subset $S \subseteq \U$ 
for the more elaborate Weitzman diversity
measure $\deltaW$. In \cite{weitzman1992diversity}, it was shown that $\deltaW$ can be computed efficiently,
if the distance function $\dd$ is an {\em ultrametric}.
No efficient algorithm was provided for arbitrary distance functions, though, 
and it
was left open whether one exists at all. We settle this open question and give a negative answer by proving $\mathsf{NP}$-hardness of (the decision variant of) this problem. 

\nop{************************
First recall from \cite{weitzman1992diversity} that $\deltaW(S)$ can be computed by 
choosing a permutation $\pi$ of the elements in $S$, constructing $S$ by adding the elements of $S$ in the order of $\pi$, and summing up the distances of each newly added element from the
already existing ones. Then $\deltaW(S)$ is the maximum value attainable over all permutations of $S$.
Formally, for $S = \{ u_1, \dots, u_n \}$, we have: 
\begin{align*}
	\deltaPi(S)  & := \ \dd(u_{\pi(n)}, \{u_{\pi(1)}, \dots, u_{\pi(n-1)} \}) + \mbox{}\\
	&  \mbox{} \quad \ \  \dd(u_{\pi(n-1)}, \{u_{\pi(1)}, \dots, u_{\pi(n-2)}  \}) + \mbox{}\\
	&  \mbox{} \quad \ \  \vdots \\
	& \mbox{} \quad \ \   \ \dd(u_{\pi(3)}, \{u_{\pi(1)}, u_{\pi(2)}  \})+ \mbox{}\\
	& \mbox{} \quad \ \   \ \dd(u_{\pi(2)}, \{u_{\pi(1)} \})\\       
	\deltaW(S)   & := \ \max \big \{ \deltaPi(S) \, \mid \, \pi \mbox{ is a permutation of $S$} \}\big)
\end{align*}
************************}

\begin{theorem}
	\label{thm:weitzman:NP}
	The $\diversityComputation[\deltaW]$ problem of the Weitzman diversity function 
	$\deltaW$ is $\mathsf{NP}$-hard.
\end{theorem}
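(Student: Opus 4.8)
The plan is to reduce from a standard $\mathsf{NP}$-hard problem whose solutions correspond naturally to permutations that must ``pick up'' as much distance as possible --- a good candidate is \textsc{Hamiltonian Path} (or \textsc{Longest Path}), since $\deltaW(S)$ is itself a maximum over all permutations $\pi$ of $S$ of the quantity $\deltaPi(S) = \sum_i \dd(u_{\pi(i)}, \{u_{\pi(1)},\dots,u_{\pi(i-1)}\})$. Indeed, each inner term $\dd(u_{\pi(i)}, \{u_{\pi(1)},\dots,u_{\pi(i-1)}\})$ is a \emph{minimum} over previously inserted elements, so the ``profit'' of inserting a vertex depends on which of its neighbours have already been inserted --- this is exactly the kind of combinatorial dependency that path/Hamiltonicity problems exhibit. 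So, given a graph $G = (V,E)$, I would let $\U \supseteq S = V$ (padding $\U$ with dummy isolated elements to keep it infinite) and define a metric $\dd$ on $V$ that is ``small'' ($=1$, say) along edges of $G$ and ``large'' ($=2$) on non-edges, checking that this is a genuine metric: symmetry is immediate, $\dd(a,b)=0 \iff a=b$ holds by construction, and the triangle inequality holds trivially because all nonzero values lie in $\{1,2\}$ and $2 \le 1+1$. (Note this $\dd$ is deliberately \emph{not} an ultrametric, which is consistent with Weitzman's positive result for ultrametrics.)

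The crux of the reduction is an exact analysis of $\deltaPi(S)$ for this metric. When we insert the vertices of $V$ one by one in the order given by $\pi$, the first vertex contributes $0$, and every subsequent vertex $u_{\pi(i)}$ contributes $1$ if it has a neighbour (in $G$) among $u_{\pi(1)},\dots,u_{\pi(i-1)}$ and $2$ otherwise. Hence if $\pi$ produces exactly $c$ vertices that are ``new components'' (i.e.\ have no earlier neighbour, counting the first vertex), then $\deltaPi(S) = 2(c-1) + (|V|-c) = |V| - 2 + c$; more precisely $\deltaPi(S) = 2\cdot(\#\text{of }i\text{ with no earlier neighbour, }i\ge 2) + (\#\text{of }i\text{ with an earlier neighbour})$, and since the first vertex always contributes $0$ this equals $|V|-1 + (\text{number of }i\ge 2\text{ with no earlier neighbour})$. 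To \emph{minimize} the number of such ``jumps'' of size $2$ --- equivalently to maximize $\deltaW$ --- the order $\pi$ wants every vertex after the first to attach to something already placed; this is possible with zero extra jumps within a connected component precisely when that component admits an ordering in which each vertex (after the first of the component) has an earlier neighbour, which every connected graph does (e.g.\ a BFS/DFS order). Consequently $\deltaW(V) = |V| - 1 + (\kappa(G) - 1) = |V| + \kappa(G) - 2$ where $\kappa(G)$ is the number of connected components of $G$ --- wait, I should double-check: actually even a disconnected graph forces exactly one size-$2$ jump per component after the first, so the minimum number of size-$2$ jumps among positions $i\ge 2$ is $\kappa(G)-1$, giving $\deltaW(V) = |V|-1 + (\kappa(G)-1) = |V| + \kappa(G) - 2$.

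That computation actually shows $\deltaW$ for \emph{this particular} metric is polynomial-time computable (it only depends on $|V|$ and the component count), so the $\{1,2\}$ metric alone is not enough --- I need the contribution of inserting a vertex to encode something genuinely path-like, not just connectivity. The fix is to make distances \emph{heterogeneous}: assign each non-edge a suitably large weight and each edge a weight of $1$, but more importantly give a designated start vertex $s$ or use a gadget so that the only way to realize the target diversity is to insert vertices along a Hamiltonian path. Concretely, I would take $\dd(u,v) = 1$ if $uv \in E$, and $\dd(u,v) = M$ (a large constant, e.g.\ $M = |V|$) if $uv \notin E$; the triangle inequality still holds as long as $M \le 1 + 1$ fails --- so instead I make \emph{edge} weights large and \emph{non-edge} weights $1$? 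That breaks the triangle inequality too. The clean resolution: use weights $1$ on edges and $2$ on non-edges as above, but reduce from \textsc{Hamiltonian Path} via the observation that among orderings of a \emph{connected} graph, $\deltaPi$ is maximized (equal to $|V|-1$, all jumps of size $1$) exactly when such an ordering exists --- which, again, it always does. So the honest approach is to reduce not to ``achieving the max'' but to a more refined count, OR to choose a different base hard problem. The cleanest is to reduce from a problem where the \emph{pattern} of minima matters: I will instead encode \textsc{Max Hamiltonian-style} behaviour by using three distinct distance values and a gadget forcing a specific insertion pattern, or --- following the likely route in the paper --- reduce from \textsc{Exact Cover}/\textsc{3-Dimensional Matching} or from \textsc{Clique}, where set elements get distance $1$ within a potential clique-witness and a structured larger distance otherwise, so that $\deltaW$ hits a prescribed threshold iff a cover/clique exists.

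\medskip
\noindent\textbf{Summary of the plan.} Reduce from an $\mathsf{NP}$-hard problem (I expect \textsc{Hamiltonian Path}, or \textsc{3-Dimensional Matching}/\textsc{Clique}); build $\U$, $S$, and a metric $\dd$ (valued in a small set such as $\{1,2,3\}$ so the triangle inequality is automatic); prove a tight formula/bound for $\deltaPi(S)$ in terms of the combinatorial structure being insertion-ordered; and conclude that $\deltaW(S)$ meets a prescribed threshold iff the instance is a yes-instance. The main obstacle, as the discussion above shows, is \emph{designing the metric so that the recursive max-of-min structure of $\deltaW$ is forced to track a genuinely hard combinatorial object} rather than a polynomial-time invariant like the number of connected components --- this requires carefully chosen, non-uniform distance values (necessarily not an ultrametric, consistent with Weitzman's positive result), and the soundness direction (yes-instance $\Rightarrow$ high diversity) together with completeness (high diversity $\Rightarrow$ yes-instance) will each hinge on a precise accounting of how much distance each insertion step can contribute.
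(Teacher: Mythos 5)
There is a genuine gap, and it is a single sign error that caused you to abandon the correct construction just as you had it in hand. Your metric ($\dd = 1$ on edges, $\dd = 2$ on non-edges, which is exactly the paper's choice) and your accounting $\deltaPi(S) = (n-1) + (\#\{i \geq 2 : u_{\pi(i)} \text{ has no earlier neighbour}\})$ are both correct. But $\deltaW$ is the \emph{maximum} of $\deltaPi$ over permutations, and a distance-2 insertion contributes more than a distance-1 insertion, so the optimizing permutation wants to \emph{maximize} the number of positions with no earlier neighbour --- not minimize it, as you wrote (``To minimize the number of such jumps of size 2 --- equivalently to maximize $\deltaW$''). The vertices inserted with no earlier neighbour are pairwise non-adjacent, hence form an independent set together with the first vertex; conversely, placing an independent set of size $k$ first yields $k-1$ distance-2 insertions. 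So $\deltaW(S) = n + \alpha(G) - 2$ where $\alpha(G)$ is the independence number, and computing $\deltaW$ is $\mathsf{NP}$-hard by reduction from \textsc{Independent Set} with threshold $th = (n-k) + 2(k-1)$. This is precisely the paper's proof. What you computed instead, $n + \kappa(G) - 2$ with $\kappa$ the number of connected components, is the \emph{minimum} of $\deltaPi$ over permutations --- a different (and indeed polynomial-time) quantity.

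Because of that error you concluded the $\{1,2\}$ metric ``is not enough'' and pivoted to \textsc{Hamiltonian Path} with heterogeneous weights, where you correctly observe that the triangle inequality blocks the natural weight assignments; the proposal then ends with a list of candidate problems and gadgets but no completed reduction and no proven equivalence in either direction. As it stands the proposal is therefore not a proof. The fix is simply to undo the sign error: keep your metric, reduce from \textsc{Independent Set}, and prove both directions of ``$G$ has an independent set of size $k$ iff $\deltaW(S) \geq (n-k)+2(k-1)$'' via the observation that the distance-2 insertions identify an independent set.
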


\begin{proof}[Proof Sketch]
$\mathsf{NP}$-hardness of (the decision variant of) the $\diversityComputation[\deltaW]$ problem 
is shown by reduction from the  \textsc{Independent Set} problem.
Let an arbitrary instance of \textsc{Independent Set} be given by a graph 
$G=(V, E)$ and integer $k$. Let  $|V| = n$. 
Then we set $S = V$ and 
$th = n - k + 2(k-1)$,
and we define the distance function $\dd$ on $S$ as follows: 
	\[
	\dd(u,v) = \begin{cases} 
		0 & \text{if $u = v$} \\  
		1 & \text{if $u$ and $v$ are adjacent in $G$} \\  
		2 & \text{otherwise}  
	\end{cases} 
	\]
It is straightforward to verify that $\dd$ is a metric and that 
$G$ has an independent set of size $k$, if and only if 
$\deltaW(S) \geq th$.  
 \end{proof}

In this work, we are mainly interested in the diversity of sets of tuples -- either from the database itself or sets of tuples resulting from evaluating a query over the database. The above $\mathsf{NP}$-hardness 
proof can be adapted so as to get $\mathsf{NP}$-hardness also for 
the (decision variant of the) 
$\diversityComputation[\deltaW]$ problem if $S$ is a set of tuples, even in a very restricted setting:

\begin{theorem}
	\label{thm:weitzman:NPtuples}
	The $\diversityComputation[\deltaW]$ problem of the Weitzman diversity function 
	$\deltaW$ is $\mathsf{NP}$-hard,
 even if $S$ is a set of tuples of arity 5 and we take the Hamming distance as distance 
 between any two tuples.
\end{theorem}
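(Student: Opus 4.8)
The plan is to adapt the reduction from \textsc{Independent Set} used for Theorem~\ref{thm:weitzman:NP}. There, the instance graph induced a $\{1,2\}$-valued metric on its vertex set, and this is the one ingredient that does not survive literally: a Hamming distance of $1$ between two tuples forces them to agree on all but one coordinate, which is far too rigid to model an arbitrary graph with tuples of fixed arity. I would sidestep this by reducing from \textsc{Independent Set} restricted to cubic graphs (equivalently, graphs of maximum degree $3$), which is still $\mathsf{NP}$-hard, and by using that such graphs admit a proper $4$-edge-colouring computable in polynomial time (Vizing's theorem, via the Misra--Gries algorithm).

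So let $G=(V,E)$ be cubic with $|V|=n$ and fix a proper colouring $c\colon E\to\{1,2,3,4\}$. I would work over a value domain containing disjoint copies of $V$ and of $E$, and encode each vertex $v$ by a tuple $t_v$ of arity $5$: for $i\in\{1,2,3,4\}$, the $i$-th coordinate of $t_v$ is the edge incident to $v$ with colour $i$ if there is one, and the value $v$ otherwise; the fifth coordinate of $t_v$ is $v$. Set $S=\{\,t_v\mid v\in V\,\}$. A routine case analysis --- using that vertex-names and edge-names are disjoint and that at most one edge joins a given pair of vertices --- shows that for distinct $u,v$ the Hamming distance between $t_u$ and $t_v$ equals $4$ if $u$ and $v$ are adjacent (they then agree precisely on the coordinate indexed by $c(uv)$) and equals $5$ otherwise, since no edge is incident to both and the fifth coordinate always separates them. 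Thus $S$ carries a $\{4,5\}$-valued metric that plays exactly the role of the $\{1,2\}$-valued metric of Theorem~\ref{thm:weitzman:NP}.

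For correctness I would repeat the argument behind Theorem~\ref{thm:weitzman:NP} almost verbatim, using the standard characterisation of $\deltaW(S)$ as the maximum over orderings $u_1,\dots,u_n$ of $S$ of $\sum_{i=2}^{n}\dd(t_{u_i},\{t_{u_1},\dots,t_{u_{i-1}}\})$, where the $i$-th summand is $5$ if $u_i$ is non-adjacent to all earlier vertices and $4$ otherwise. If $m$ of the summands are $5$, then the corresponding vertices together with $u_1$ form an independent set of $G$, so $m\le\alpha(G)-1$, where $\alpha(G)$ is the size of a maximum independent set; placing a maximum independent set first shows this bound is attained. Hence $\deltaW(S)=4(n-1)+\alpha(G)-1=4n+\alpha(G)-5$, and with threshold $th=4n+k-5$ we get $\deltaW(S)\ge th$ if and only if $G$ has an independent set of size $k$. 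Everything in the construction is computable in polynomial time, so this yields $\mathsf{NP}$-hardness.

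I expect the only real work to be the embedding step: choosing an encoding whose induced Hamming distances form a metric with exactly two values while still forcing the Weitzman value to count a maximum independent set. Once the distances $4$ (adjacent) and $5$ (non-adjacent) are pinned down, the combinatorial heart of the proof is identical to that of Theorem~\ref{thm:weitzman:NP} up to rescaling the threshold; the points to be careful about are the well-definedness of the encoding (which is exactly why the edge colouring must be proper) and keeping vertex-names and edge-names in disjoint parts of the value domain so that the coordinatewise comparisons behave as claimed.
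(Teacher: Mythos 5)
Your proposal is correct and follows essentially the same route as the paper: a reduction from \textsc{Independent Set} on degree-$3$ graphs that encodes vertices as arity-$5$ tuples with Hamming distance $4$ between adjacent and $5$ between non-adjacent vertices, with the identical threshold $4(n-k)+5(k-1)=4n+k-5$ and the same ordering argument as in Theorem~\ref{thm:weitzman:NP}. The only (inessential) difference is how edges are assigned to coordinates: you use an explicit proper $4$-edge-colouring via Vizing's theorem and reserve the fifth coordinate for the vertex name, whereas the paper greedily places each edge into a coordinate still holding its initial value in both endpoint tuples, justified by a pigeonhole argument over the five positions.
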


\begin{proof}[Proof Sketch]
$\mathsf{NP}$-hardness is again shown by reduction from 
the \textsc{Independent Set} problem. As was shown in~\cite{DBLP:conf/ciac/AlimontiK97}, 
the \textsc{Independent Set} problem remains $\mathsf{NP}$-complete even if we restrict the 
graphs to degree 3. Then the crux of the problem reduction is to construct, from a given graph with 
$n$ vertices $\{v_1, \dots, v_n\}$, a set of $n$ tuples
$S = \{t_1, \dots, t_n\}$, such that, for the Hamming distance
$d$
between two tuples $t_i \neq t_j$, we have 
$\dd(t_i,t_j) = 4$ if $v_i,v_j$ are adjacent in $G$ and 
$\dd(t_i,t_j) = 5$ otherwise. For this step, we adapt
a construction that was used in 
\cite{MerklPS23}.
As threshold $th$, we set $th = 4(n-k) + 5(k-1)$. Analogously to 
Theorem~\ref{thm:weitzman:NP}, it can then be shown that $G$ has an independent set of size $k$, if and only if 
$\deltaW(S) \geq th$.  
\end{proof}

When moving from $\diversityComputation[\delta]$ to the 
$\diversityExplicit[\delta]$ problem, of course, the complexity is at least as high. 
So for the Weitzman diversity function $\deltaW$, the intractability clearly carries over. 
However, in case of the $\diversityExplicit[\delta]$ problem, even simpler diversity settings
lead to intractability. More specifically, it was shown 
in~\cite{MerklPS23,DBLP:journals/corr/abs-2301-08848}
that the $\diversityExplicit[\delta]$ problem%
\footnote{Strictly speaking, the problem considered there was formulated as the task 
of finding a set of $k$ answer tuples to an acyclic CQ $Q$ over a database $D$ with diversity $\geq th$. 
$\mathsf{NP}$-hardness was shown for data complexity,  which means that we may assume that the set $S$ of answer tuples is explicitly given since, with polynomial-time effort, 
one can compute $S$.}
is $\mathsf{NP}$-hard even in the simple setting where 
$S$ is a set of tuples of arity 5, considering the Hamming distance and one of the simple
diversity functions $\deltasum$ or $\deltamin$. Therefore, in~\cite{MerklPS23}, 
the parameterized complexity of this problem was considered (with $k$ as parameter) 
and the 
$\diversityExplicit[\delta]$ problem was shown to be fixed-parameter tractable, when 
$S$ is a set of tuples, considering the Hamming distance and very general diversity functions 
$\delta$ satisfying a certain monotonicity property. 

Clearly, for the $\diversityImplicit[\delta]$ problem, things get yet more complex. Indeed, unless $\mathsf{FPT} = \mathsf{W}[1]$, 
 fixed-parameter tractability was ruled out in~\cite{MerklPS23} by showing $\mathsf{W}[1]$-hardness
for the setting where $S$ is the set of answers to an acyclic CQ, considering Hamming distance and one of the 
simple diversity functions $\deltasum$ or $\deltamin$. On the positive side, $\mathsf{XP}$-membership 
was shown for this setting.

In the remainder of this work, we will consider ultrametrics as an important special case of distance
functions. It will turn out that they allow us to prove several positive results for otherwise
hard problems. For instance, the $\diversityImplicit[\delta]$ problem becomes tractable in this case
even when we consider the Weitzman diversity measure $\deltaW$.
 	
	\section{Ultrametrics to the rescue}\label{sec:ultrametric}

In this section, we recall the definition of an ultrametric and present some of its structural properties. The results presented here are well-known in the literature of ultrametric spaces. 
Nevertheless, they are crucial to understand the algorithms for diversity measures shown in the following~sections. 
To make the paper self-contained, we provide proofs of these properties in the appendix.

\paragraph{Ultrametrics} Let $\U$ be a possibly infinite set. An \emph{ultrametric} $\ud$ over $\U$ is a metric over $\U$ that additionally satisfies the \emph{strong triangle inequality}:
\[
\ud(a,c) \leq \max\{\ud(a,b),\ud(b,c)\}.
\]
We use $\dd$ to denote a metric and $\ud$ to denote an ultrametric, thereby making it explicit that we are using an ultrametric.

As an example, consider the following ultrametric for tuples in a relational database with schema~$\sigma$. Let $\U$ be the set of all tuples of $\sigma$. 
Define the metric $\udR$ such that 
$\udR(R(\bar{a}), R(\bar{a})) = 0$, $\udR(R(\bar{a}), S(\bar{b})) = 1$, and   $\udR(R(\bar{a}), R(\bar{a}')) =
2^{-i}$ with $i = \min\{j \mid \bar{a}[j] \neq \bar{a}'[j]\}$, for arbitrary tuples $R(\bar{a})$, $R(\bar{a}')$, and $S(\bar{b})$ of $\sigma$ with $R \neq S, \bar{a} \neq \bar{a}'$. In other words, the distance is $1$ if tuples comes from different relations, and otherwise $2^{-i}$ such that $i$ is the first position where (the arguments of) the tuples differ. One can check that $\udR$
is an ultrametric since, for arbitrary tuples $R(\bar{a})$,
$R(\bar{b})$, $R(\bar{c})$ such that $i$ is the first position where
$R(\bar{a})$ and $R(\bar{c})$ differ, it holds that $R(\bar{a})$ and $R(\bar{b})$ differ at position $i$ or
$R(\bar{b})$ and $R(\bar{c})$ differ at position $i$, so that
\[
\udR(R(\bar{a}),R(\bar{c})) \leq \max\{\udR(R(\bar{a}),R(\bar{b})),\udR(R(\bar{b}),R(\bar{c}))\}. 
\] 
Similarly, the strong triangle inequality holds for tuples of different relations.

\begin{example}
\label{ex:tuplesUltraMetric}
Consider the following running example which is a simplified version taken from~\cite{DBLP:conf/icde/VeeSSBA08} where $\udR$ is used as a metric. In Figure~\ref{fig:cars}, we show the relation $\texttt{CARS}$ that contains car models with the brand (i.e., ``Make''), model, color, and year (in that order). Each row represents a tuple and $t_i$ is the name given to refer to the $i$-th tuple. Then, one can check that $\udR(t_1, t_5) = \sfrac{1}{2}$\, given that $t_1$ is made by Honda and $t_5$ by Toyota. Similarly, $\udR(t_1, t_2) = \sfrac{1}{8}$\, given that $t_1$ and $t_2$ differ in the color, that is, at position $3$ and, thus, $\udR(t_1, t_2) =  2^{-3}$. \qed
\end{example}

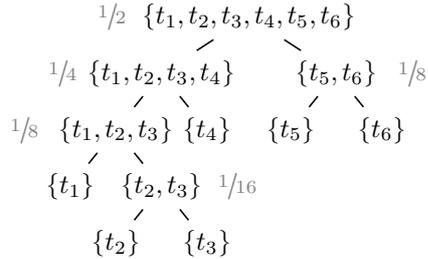
\begin{figure}[t]
	\centering
	\begin{subfigure}[b]{0.40\textwidth}
		\centering
		\texttt{CARS} \vspace{1mm}
		
		\begin{tabular}{|c|llll|} \hline
			& \textbf{Make} & \textbf{Model} & \textbf{Color} & \textbf{Year} \\ \hline
			$t_1$ & Honda & Civic & Green & 2007 \\ \hline
			$t_2$ & Honda & Civic & Black & 2007 \\ \hline
			$t_3$ & Honda & Civic & Black & 2006 \\ \hline
			$t_4$ & Honda & Accord & Blue & 2007 \\ \hline
			$t_5$ & Toyota & Corolla & Black & 2007 \\ \hline
			$t_6$ & Toyota & Corolla & Blue & 2007 \\ \hline
		\end{tabular}
		\vspace{1mm}
		\caption{\label{fig:cars}}
	\end{subfigure}
	\quad \quad \quad \quad
	\begin{subfigure}[b]{0.40\textwidth}
		\centering
		\begin{tikzpicture}[>=stealth, 
			semithick, 
			auto,
			initial text= {},
			initial distance= {4mm},
			accepting distance= {3mm},
			mradius/.style={black!50}]
			
			\path[level distance=0.75cm,
			level 2/.style={sibling distance=1.2cm}]
			node (U) {$\{t_1,t_2,t_3,t_4,t_5,t_6\}$} [sibling distance=2.3cm]
			child { node (n1) {$\{t_1,t_2,t_3,t_4\}$}
				child { node (n3) {$\{t_1,t_2,t_3\}$}
					child { node {$\{t_1\}$}}
					child { node (n4) {$\{t_2,t_3\}$}
						child { node {$\{t_2\}$}}
						child { node {$\{t_3\}$}}
					}
				}
				child { node {$\{t_4\}$}}
			}
			child { node (n2) {$\{t_5,t_6\}$}
				child { node {$\{t_5\}$}}
				child { node {$\{t_6\}$}}
			};
			
			\node [mradius, node distance=1.8cm,left of=U] {$\sfrac{1}{2}$};
			\node [mradius, node distance=1.3cm,left of=n1] {$\sfrac{1}{4}$};
			\node [mradius, node distance=1cm,right of=n2] {$\sfrac{1}{8}$};
			\node [mradius, node distance=1.2cm,left of=n3] {$\sfrac{1}{8}$};
			\node [mradius, node distance=1cm,right of=n4] {$\sfrac{1}{16}$};

		\end{tikzpicture}
		\caption{\label{fig:ultrametrictree}}
	\end{subfigure}	
	\caption{On the left, a relation $\texttt{CARS}$ where each tuple is a car model. On the right, the ultrametric tree of the ultrametric $\udR$ over the tuples $S$ in $\texttt{CARS}$. On one side of each ball $B$ (in grey) we display its radius $\radius(B)$.}
	\label{fig:running-example}
\end{figure}

\paragraph{Structure of ultrametric spaces over finite sets}
Ultrametrics form a class of well-studied metric spaces, which have
useful structural properties. 
For instance, if $\dd$ in 
Equation (\ref{equ:derivative}) from Section \ref{sec:preliminaries}
is an ultrametric, then that equation holds for every 
$a \in S$.
Moreover, every three elements form
an isosceles triangle, namely, for every $a, b, c \in \U$ it holds
that $\ud(a,b) = \ud(a,c)$, or $\ud(a, b) = \ud(b,c)$, or $\ud(a,c)
= \ud(b,c)$. In particular, this implies some well-structured hierarchy
on all balls centered at elements of a finite set, that we introduce
next.
Fix an ultrametric $\ud: \U \times \U \rightarrow \bbQgeqz$  and fix a finite set $S \subseteq \U$. For every $a \in S$ and $r \in \bbQgeqz$, let:
\[
\Ball_S(a, r) := \{b \in S \mid \ud(a, b) \leq r\}.
\]
That is, $\Ball_S(a, r)$ is the (closed) \emph{ball} centered at $a$
with radius $r$. Let $\Ball_S = \{\Ball_S(a,r) \mid a \in S \wedge
r \in \bbQgeqz\}$ be the \emph{set of all balls} of $\ud$ over $S$.
Given that $S$ is finite, $\Ball_S$ is finite as
well. Moreover, the set $\Ball_S$ follows a nested structure given
by the following standard properties of ultrametrics.

\begin{property}\label{prop:ultrametric} 
	(a) For every $B_1, B_2 \in \Ball_S$, it holds that $B_1 \cap B_2 = \emptyset$ or $B_1 \subseteq B_2$ or $B_2 \subseteq B_1$.
	
	\noindent (b) If $\ud(a_1,a_2) \leq r$, then $\Ball_S(a_1,r) = \Ball_S(a_2, r)$.
\end{property}
Property~\ref{prop:ultrametric}(a) implies a nested structure among balls in $\Ball_S$ that we 
can represent as a tree structure as follows. First, for every $B \in \Ball_S$, define the set:
\[
\tparent(B) := \{B' \in \Ball_S \mid B \subsetneq B' \wedge \neg \exists B'' \in \Ball_S \,:\,  B \subsetneq B'' \subsetneq B'\}.
\]
Since $(\Ball_S, \subseteq)$ is a partial order over a finite set, $\tparent(B)$ is non-empty for every $B \neq S$. Moreover, by Property~\ref{prop:ultrametric}(a), we have that $\tparent(B)$ has at most one element. Then, for every $B \in \Ball_S \setminus \{S\}$ we can write $\tparent(B)$ to denote this single element.

We define the \emph{ultrametric tree of $\ud$ over $S$} as the graph $\utree = (\uv, \ue)$ such that $\uv := \Ball_S$ and
\[
\ue \ := \ \{(\tparent(B), B) \mid B \in \Ball_S \setminus \{S\}\}.
\]
Given that $(\Ball_S, \subseteq)$ is a partial order and every $B$ has at most one incoming edge, we have that $\utree$ is a (directed) tree and $S$ is the root of this tree.
Therefore, we can write $\tchildren(B) = \{B' \mid (B,B') \in \ue\}$ to denote the children of $B$ in $\utree$. Note that $B$ is a leaf in the tree $\utree$ iff $B = \{a\}$ for some $a \in S$. Further, if $|B| \geq 2$, then $\tchildren(B)$ forms a partition of $B$ by Property~\ref{prop:ultrametric}(a).

\begin{example}
	Let $S$ be the set of all tuples in the relation $\texttt{CARS}$ of Example~\ref{ex:tuplesUltraMetric}. In Figure~\ref{fig:ultrametrictree}, we display the ultrametric tree of $\udR$ over $S$. One can check in this figure that each leaf contains a single tuple, and the children of each ball form a partition. \qed
\end{example} 

It will be also convenient to relate the distance between elements in $S$ with the radius of the balls in $\Ball_S$. For this purpose, for every $B \in \Ball_S$  we define its \emph{radius} as:
\[
\radius(B) \ := \ \max\{\ud(a,b) \mid a,b \in B\}.
\]
Notice that $\radius(B)$ is well defined since $B$ is a finite set. 
By Property~\ref{prop:ultrametric}(b), we have that $B = \Ball_S(a, \radius(B))$ for every ball $B \in \Ball_S$ and point $a \in B$.
Hence, in what follows, we can use any $a \in B$ as the center of the ball $B$. 

Another crucial property of the radius of $B$ is that it determines the distance between elements of different children of $B$ in $\utree$ as follows. 
\begin{property}\label{prop:radius}
	Let $B_1, B_2 \in \tchildren(B)$ with $B_1 \neq B_2$. Then $\ud(a_1, a_2) = \radius(B)$ for every $a_1 \in B_1, a_2 \in B_2$.
\end{property}  
\begin{example}
	In the ultrametric tree of Figure~\ref{fig:ultrametrictree} we display the radius of each ball. One can check that the root node $S=\{t_1, \ldots, t_6\}$ satisfies that $\radius(S) = \sfrac{1}{2}$ and $\udR(t,t') = \sfrac{1}{2}$ \, for every $t \in \{t_1, \ldots, t_4\}$ and $t' \in \{t_5,t_6\}$. Hence, Property~\ref{prop:radius} holds in this case. \qed
\end{example}

By Properties~\ref{prop:ultrametric} and~\ref{prop:radius}, the ultrametric tree $\utree$ and the radius function $\radius$ completely determine the ultrametric over a finite set $S$ and they will be the starting point for our algorithms. By the following property, we can always construct both in quadratic time over the size of $|S|$. 
\begin{property}\label{prop:build-tree}
	Given an ultrametric $\ud$ and a finite set $S$, we can construct $\utree$ and $\radius$ in $O(|S|^2)$.
\end{property}

Property~\ref{prop:build-tree} above 
can be easily seen by making use of the classical result that 
a minimum-weight spanning tree $\mathcal{T}$ of a graph $G=(S,S\times S)$, where $\ud$ expresses the edge weight, 
can be computed in time $O(|S|^2)$ and the fact that $\utree$ and $\radius$ can be 
easily computed from $\mathcal{T}$. %
On the other hand, it is also easy to see that, in general, one cannot compute the ultrametric tree $\utree$ in time $o(|S|^2)$. Indeed, we just have to consider an ultrametric on 
a set  $S=\{a_1,\dots, a_n\}$ where only one pair $(a_i,a_j)$ of elements has distance 1 and all other pairs have distance 2. Then one cannot compute 
$\utree$ and $\radius$ without finding this pair $(a_i,a_j)$.

	\section{Ultrametrics for explicit representation}\label{sec:explicit-rep}

In this section, we present our first algorithmic results for the $\diversityExplicit[\delta]$
problem, i.e., the problem of finding a $k$-diverse subset of a finite set $S$ given a diversity function $\delta$ of an ultrametric $\ud$. Here, we assume that $S$ is represented explicitly, namely, $S$ is given as a finite list $a_1, \ldots, a_n$.

In order to find tractable scenarios for the explicit case, we
introduce the notion of subset-monotonicity for diversity functions.
A diversity function $\delta$ of a metric $\dd$ is said to
be \emph{subset-monotone} if and only if for every $A \subseteq \U$
and every pair $B, B' \subseteq \U$ such that $B = \{b_1, \ldots,
b_\ell\}$, $B' = \{b_1', \ldots, b_\ell'\}$, $A \cap B = A \cap B'
= \emptyset$, $\delta(B) \leq \delta(B')$, and $\dd(a, b_i) \leq \dd(a,
b_i')$ for every $a \in A$ and $i \in \{1, \ldots, \ell\}$, it holds
that $\delta(A \cup B) \leq \delta(A \cup B')$.  Subset-monotonicity
captures the natural intuition that, if we replace $B$ with $B'$ such
that $B'$ is at least as diverse as $B$ and $B'$ is farther from $A$ than $B$,
then $A \cup B'$ is at least as diverse as~$A \cup B$.

We observe that, for ultrametrics, all diversity functions from Section~\ref{sec:preliminaries} are subset-monotone. The diversity functions $\deltasum$ and $\deltamin$ have this property even for arbitrary metrics.

\begin{proposition}
\label{propo:subsetMonotone}
The diversity functions $\deltasum$ and $\deltamin$ are subset-monotone for every metric. 
If $\deltaW$ extends an ultrametric, then it is also subset-monotone. 
If $\deltaW$ extends an arbitrary metric, then it is, in general, not subset-monotone.
\end{proposition}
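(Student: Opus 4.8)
The plan is to prove the four assertions of Proposition~\ref{propo:subsetMonotone} separately, handling $\deltasum$ and $\deltamin$ for arbitrary metrics, then $\deltaW$ under an ultrametric, and finally the counterexample for $\deltaW$ under a general metric.

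\paragraph{The easy cases: $\deltasum$ and $\deltamin$} For $\deltasum$ we decompose the sum over $A \cup B$ into three groups of pairs: both endpoints in $A$, both endpoints in $B$, and one endpoint in $A$ and one in $B$ (counted with the appropriate multiplicity coming from the definition of $\deltasum$). The first group is identical for $A\cup B$ and $A \cup B'$; the second group is $\deltasum(B) \le \deltasum(B')$ by hypothesis; the third group is $\sum_{a\in A}\sum_{i}\dd(a,b_i) \le \sum_{a\in A}\sum_i \dd(a,b_i')$ by the pointwise hypothesis $\dd(a,b_i)\le\dd(a,b_i')$ (using the index-aligned enumerations $B=\{b_1,\dots,b_\ell\}$, $B'=\{b_1',\dots,b_\ell'\}$). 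Summing the three comparisons gives $\deltasum(A\cup B)\le\deltasum(A\cup B')$. For $\deltamin$, $\delta(A\cup B)$ is the minimum of three quantities: $\deltamin(A)$ (ignoring the degenerate $|A|\le 1$ case, which is handled directly), $\deltamin(B)$, and $\min_{a\in A, i}\dd(a,b_i)$; each of the latter two is $\le$ the corresponding quantity for $B'$, so the overall minimum cannot decrease. One should state at the outset that the case $A=\emptyset$ is immediate since then $A\cup B=B$ and $A\cup B'=B'$, and take a moment to note that when $|A\cup B|=1$ both sides are $0$, so these edge cases don't break the argument.

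\paragraph{The ultrametric case for $\deltaW$} Here I would use the characterization recalled right after Equation~(\ref{equ:derivative}): when $\dd=\ud$ is an ultrametric, Equation~(\ref{equ:derivative}) holds for \emph{every} $a\in S$, so $\deltaW$ stops being a true maximum over deletions and becomes additive in a controlled way. Concretely, one expects a clean closed form: $\deltaW(S)$ equals the sum of the radii $\radius(B)$ over all internal nodes $B$ of the ultrametric tree $\utree$, weighted by $(|\tchildren(B)|-1)$ — equivalently, building $S$ greedily by inserting elements and each insertion of a new element into a ball $B$ (the smallest ball already containing one of its siblings) contributes exactly $\radius(B)$. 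I would prove this characterization by induction on $|S|$ using Equation~(\ref{equ:derivative}) and Property~\ref{prop:radius} (the distance from a newly added element to the rest is exactly the radius of the minimal ball that merges it with an already-present element), and then verify subset-monotonicity directly from it: replacing $B$ by $B'$ with $\deltaW(B)\le\deltaW(B')$ and $\ud(a,b_i)\le\ud(a,b_i')$ for all $a\in A$ only increases the pairwise distances between $A$-elements and $B$-elements, hence only coarsens (raises the radii of) the merging structure between the $A$-part and the $B$-part, while the internal contribution of $B$ is replaced by the at-least-as-large internal contribution of $B'$; the $A$-internal contribution is unchanged. I expect \textbf{this to be the main obstacle}: pinning down the exact additive formula for $\deltaW$ on ultrametrics and showing that the "merge structure" between $A$ and $B$ really is governed monotonically by the cross distances requires care — in particular one must argue that the order in which the greedy construction interleaves $A$- and $B$-elements can be chosen so as to realize $\deltaW(A\cup B)$ while exposing the $B$-internal part as $\deltaW(B)$, and that this choice remains valid (and only helps) after the substitution. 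An alternative, possibly cleaner route is a direct induction on $|A|$: strip off one element $a\in A$, apply Equation~(\ref{equ:derivative}) to both $A\cup B$ and $A\cup B'$ at the same $a$, and compare $\deltaW((A\setminus\{a\})\cup B)\le\deltaW((A\setminus\{a\})\cup B')$ (induction hypothesis, whose hypotheses are inherited) with $\ud(a,(A\setminus\{a\})\cup B)=\min\{\ud(a,A\setminus\{a\}),\min_i\ud(a,b_i)\}\le\min\{\ud(a,A\setminus\{a\}),\min_i\ud(a,b_i')\}=\ud(a,(A\setminus\{a\})\cup B')$; the base case $A=\emptyset$ is $\deltaW(B)\le\deltaW(B')$. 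I would present this inductive argument as the main proof, since it avoids committing to an explicit formula.

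\paragraph{The counterexample} For the last assertion I would exhibit a small metric space — essentially the gadget already used in the proof of Theorem~\ref{thm:weitzman:NP} — in which $\deltaW$ fails subset-monotonicity. Take a handful of points with distances in $\{1,2\}$ (a genuine metric but not an ultrametric), choose $A$, $B=\{b\}$, $B'=\{b'\}$ singletons so that the hypotheses $\delta(B)=\delta(B')=0$ and $\dd(a,b)\le\dd(a,b')$ hold for all $a\in A$, yet $\deltaW(A\cup\{b\})>\deltaW(A\cup\{b'\})$. The mechanism: $\deltaW$ on a general metric is sensitive to how a new point can serve as an intermediate "stepping stone" enabling a more diverse insertion order for the others, and moving $b'$ strictly farther from every element of $A$ can destroy such an intermediate role, lowering the achievable maximum. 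I would pick the smallest witnessing configuration (I expect $|A|\in\{2,3\}$ suffices), list the handful of pairwise distances explicitly in a \texttt{cases} display, and compute $\deltaW$ of both $A\cup\{b\}$ and $A\cup\{b'\}$ by brute force over the $\le 4!$ insertion orders, exhibiting the strict inequality. This part is routine once the configuration is found; the only real work is the search for the right small instance, which I would do by hand guided by the failure of Equation~(\ref{equ:derivative}) for general metrics.
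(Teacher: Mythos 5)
Your treatment of $\deltasum$ and $\deltamin$ matches the paper's, and your inductive argument for $\deltaW$ over an ultrametric is a valid (and arguably cleaner) reorganization of what the paper does: both hinge on the fact that, for ultrametrics, Equation~(\ref{equ:derivative}) holds for \emph{every} $a \in S$, so $\deltaW(A\cup B)$ decomposes as $\deltaW(B)$ plus the sum of the distances $\ud(a_i,\{a_{i+1},\dots,a_t\}\cup B)$; the paper writes this expansion out directly rather than peeling off one element of $A$ at a time, but the content is the same. You are right that no explicit tree-based closed form is needed.

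The genuine gap is in the counterexample. You propose to take $B=\{b\}$ and $B'=\{b'\}$ singletons with $\dd(a,b)\le\dd(a,b')$ for all $a\in A$ and to exhibit $\deltaW(A\cup\{b\})>\deltaW(A\cup\{b'\})$. No such configuration exists, for any metric: writing $\deltaW(S)$ as the maximum over orderings $s_1,\dots,s_n$ of $\sum_{i\ge 2}\dd(s_i,\{s_1,\dots,s_{i-1}\})$, fix an optimal ordering for $A\cup\{b\}$ and replace $b$ by $b'$ in the same position; every pairwise distance appearing in the formula either is between two elements of $A$ (unchanged) or involves $b$ (and weakly increases), so every term, being a minimum of such distances, weakly increases, and hence $\deltaW(A\cup\{b'\})\ge\deltaW(A\cup\{b\})$. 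This is exactly the ``weak monotonicity'' of $\deltaW$ that the paper asserts holds for all its diversity functions, so the $\ell=1$ instance of subset-monotonicity can never fail. The failure must come from $\ell\ge 2$: the hypothesis controls only the cross distances to $A$ and the single aggregate value $\deltaW(B)\le\deltaW(B')$, but not the individual internal distances of $B'$, which can be rearranged so that $B'$ is more diverse on its own yet combines worse with $A$. The paper's witness uses $|A|=4$ and $|B|=|B'|=5$ with distances in $\{1,2\}$ and equal cross distances, giving $\deltaW(B)=5\le 6=\deltaW(B')$ but $\deltaW(A\cup B)=12>11=\deltaW(A\cup B')$. Your search restricted to singletons would never terminate successfully, so this part of the proof is missing, not merely unfinished.
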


Interestingly, if $\delta$ is a subset-monotone diversity function
over an ultrametric $\ud$, then we can always find a $k$-diverse
subset of a finite set $S$ efficiently. In fact, it is possible to
prove this result even if we consider a weaker notion of
subset-monotonicity. A diversity function $\delta$ over a metric $\dd$ is said to
be \emph{weakly subset-monotone} if and only if for every $A \subseteq \U$ and
every pair $B, B' \subseteq \U$ such that $B = \{b_1, \ldots,
b_\ell\}$, $B' = \{b_1', \ldots, b_\ell'\}$, $A \cap B = A \cap B'
= \emptyset$, $\delta(B) \leq \delta(B')$ and $\dd(a, b_i) = \dd(a, b_i')$
for every $a \in A$ and $i \in \{1, \ldots, \ell\}$, it holds that
$\delta(A \cup B) \leq \delta(A \cup B')$. 
In other words,  if we replace $B$ with $B'$ such that $B'$ is
at least as diverse as $B$ and both have the {\em same pairwise distance} 
to $A$,
then $A \cup B'$ is at least as diverse as $A \cup B$. Note that this
weaker version is almost verbatim from subset monotonicity, but with
the condition $\dd(a, b_i) \leq \dd(a, b_i')$ replaced by $\dd(a, b_i) =
\dd(a, b_i')$. Clearly, subset-monotonicity implies weak
subset-monotonicity but not vice versa.

\begin{theorem} \label{theo:explicit-rep}
	Let $\delta$ be a weakly subset-monotone diversity function of an ultrametric $\ud$. Then the problem $\diversityExplicit[\delta]$ can be solved in time $O(k^2 \cdot \fdelta(k) \cdot |S| + |S|^2)$ where $O(\fdelta(k))$ is the time required to compute $\delta$ over a set of size $k$. 
\end{theorem}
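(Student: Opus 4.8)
The plan is to exploit the ultrametric tree $\utree$ together with the radius function $\radius$ (constructible in $O(|S|^2)$ by Property~\ref{prop:build-tree}), and to compute, by a bottom-up dynamic program over $\utree$, for every ball $B \in \Ball_S$ and every $j \in \{1,\dots,k\}$ a $j$-subset $\cand(B,j) \subseteq B$ of maximal $\delta$-value among all $j$-subsets of $B$. The answer to $\diversityExplicit[\delta]$ is then $\cand(S,k)$. The base case is $B = \{a\}$, a leaf of $\utree$: here only $j=1$ is possible and $\cand(B,1) = \{a\}$. For the inductive step, suppose $B$ has children $B_1,\dots,B_m$ (which, by Property~\ref{prop:ultrametric}(a), partition $B$), and suppose we have already computed $\cand(B_i, j)$ for all $i$ and all $j \le k$. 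A $j$-subset of $B$ is obtained by choosing, for each child $B_i$, a $j_i$-subset $A_i \subseteq B_i$ with $\sum_i j_i = j$ (allowing $j_i = 0$, i.e.\ $A_i = \emptyset$), and forming $A = \bigcup_i A_i$.

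The key structural point, which I would isolate as a lemma, is that within each child $B_i$ it is \emph{always} optimal to pick $\cand(B_i, j_i)$, i.e.\ an internally optimal $j_i$-subset. This is where weak subset-monotonicity enters. Indeed, by Property~\ref{prop:radius}, any two elements lying in distinct children $B_i \neq B_{i'}$ are at distance exactly $\radius(B)$; hence, if we fix the sizes $j_1,\dots,j_m$ and fix all the selected sets except the one inside $B_i$, then for \emph{any} two candidate sets $A_i, A_i' \subseteq B_i$ with $|A_i| = |A_i'| = j_i$, every element outside $B_i$ that we have chosen has the same distance $\radius(B)$ to every element of $A_i$ and to every element of $A_i'$. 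Thus, writing $A = A_i \cup R$ with $R$ the part chosen from the other children, the hypotheses of weak subset-monotonicity are met with the ``$A$'' of that definition being $R$ and the ``$B$'', ``$B'$'' being $A_i$, $A_i'$: if $\delta(A_i) \le \delta(A_i')$ then $\delta(A_i \cup R) \le \delta(A_i' \cup R)$. Iterating this replacement over all children shows that replacing each $A_i$ by $\cand(B_i, j_i)$ does not decrease $\delta$, so an optimal $j$-subset of $B$ is attained by some choice of sizes together with the internally optimal sets. (One subtlety: when $j_i = 1$, the chosen singleton inside $B_i$ contributes $0$ internally but interacts with $R$; weak subset-monotonicity still applies since $\delta$ is closed under isomorphism and all singletons inside $B_i$ are equidistant — $\radius(B)$ — from everything in $R$, and also equidistant from each other via intermediate balls. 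More carefully, any two $j_i$-subsets of $B_i$ have the same multiset of pairwise distances to a fixed $R$, namely $j_i \cdot |R|$ copies of $\radius(B)$, which is exactly the condition needed.)

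Granting the lemma, the recurrence is
\[
\cand(B, j) \ = \ \argmax \Big\{\, \delta\big(\textstyle\bigcup_{i=1}^m \cand(B_i, j_i)\big) \ \Big|\ j_1 + \dots + j_m = j,\ 0 \le j_i \le \min(j,|B_i|) \Big\}.
\]
To make this polynomial I would process the children one at a time, maintaining a table $\cand'(i, j)$ = best $j$-subset using only $B_1,\dots,B_i$, via $\cand'(i,j) = \argmax_{0 \le j' \le j} \delta(\cand'(i-1,j-j') \cup \cand(B_i, j'))$; the same equidistance argument shows $\cand'(i-1, j-j')$ may be taken internally optimal. Each such combination step evaluates $\delta$ on a set of size at most $k$, costing $O(\fdelta(k))$. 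For the running-time bound: the tree $\utree$ has $O(|S|)$ nodes and the total number of children over all nodes is $O(|S|)$; for each child $B_i$ we fill $O(k)$ table entries, each requiring $O(k)$ evaluations of $\delta$ on $k$-sets; this gives $O(k^2 \cdot \fdelta(k) \cdot |S|)$ for the dynamic program, plus the $O(|S|^2)$ for building $\utree$ and $\radius$, matching the claimed $O(k^2 \cdot \fdelta(k) \cdot |S| + |S|^2)$.

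The main obstacle I anticipate is the correctness lemma — specifically, verifying that the ``same pairwise distance to $A$'' hypothesis of \emph{weak} subset-monotonicity is genuinely satisfied at each replacement step, including the $j_i=1$ corner cases and the need to replace inside several children in sequence without the earlier replacements spoiling the distance condition for later ones. Since cross-child distances are the constant $\radius(B)$ regardless of which elements are chosen, the condition is robust to this sequencing, but stating it cleanly (and handling the case $m=1$, where $B$ has a single child and the recurrence just passes through) requires care. Everything else — the tree construction, the bottom-up schedule, the size/time accounting — is routine.
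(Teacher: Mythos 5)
Your proposal is correct and follows essentially the same route as the paper's proof: building $\utree$ and $\radius$ in $O(|S|^2)$, a bottom-up dynamic program storing an optimal $j$-subset per ball, sequential (pairwise) combination of children, and correctness via Property~\ref{prop:radius} (all cross-child distances equal $\radius(B)$) feeding the weak subset-monotonicity hypothesis. The sequencing subtlety you flag is handled in the paper exactly as you suggest, by folding the children one at a time into prefix unions $B_1^m=\bigcup_{i\le m}B_i$ and applying the replacement argument twice per combination step.
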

Note that $\fdelta(k)\leq k^2$ for $\deltasum$, $\deltamin$, and $\deltaW$, so
we conclude from Theorem~\ref{theo:explicit-rep} that the problem
$\diversityExplicit[\delta]$ can be solved in polynomial time for
these fundamental diversity functions.
\begin{proof}[Proof Sketch of Theorem~\ref{theo:explicit-rep}]
The main ideas of the algorithm for Theorem~\ref{theo:explicit-rep} are the following.
Let $S$ be a finite subset of the universe $\U$, $\ud$ an ultrametric over $\U$, and $\delta$ a weakly subset-monotone diversity function of $\ud$. By Proposition~\ref{prop:build-tree}, we can construct an ultrametric tree $\utree$ of $S$ in time $O(|S|^2)$. For the sake of simplification, assume that $\utree$ is a binary tree. Otherwise, one can easily extend the following ideas to the non-binary case. %
For each vertex $B$ (i.e., a ball) of $\utree$, we maintain a function $\cand_B\colon \{0,\dots,\min\{k, |B|\}\} \rightarrow 2^{B}$ where $\cand_B(i)$ is an 
$i$-diverse subset of $B$. That is, for every $i \in \{0,\dots,\min\{k, |B|\}\}$ we have 
$
\cand_{B}(i) \ := \  \argmax_{A \subseteq B: |A| = i} \delta(A).
$
Clearly, if we can compute $\cand_{S}$ for the root $S$ of $\utree$, then $\cand_{S}(k)$ is a $k$-diverse subset for $S$. 

The algorithm follows a dynamic programming approach, computing $\cand_B$ for each $B \in \Ball_S$ in a bottom-up fashion over $\utree$. For every ball $B$, we can easily check that $\cand_B(0) = \emptyset$ and $\cand_B(1) = \{a\}$ for some $a \in B$. In particular, $\cand_{\{a\}} = \big\{0 \mapsto \emptyset, 1 \mapsto \{a\} \big\}$ is our base case for every leaf $\{a\}$ of $\utree$.
For an inner vertex $B$ of $\utree$, the process is a bit more involved. Let $B_1$ and $B_2$ be the two children of $B$ in $\utree$ and assume that we have already computed $\cand_{B_1}$ and $\cand_{B_2}$. 
We claim that, for every $i \in \{0,\dots,\min\{k, |B|\}\}$, we can calculate $\cand_{B}(i)$ as $\cand_{B}(i) = \cand_{B_1}(i_1) \cup \cand_{B_2}(i_2)$, where $i_1,i_2$ are obtained as follows:
\[
(i_1,i_2) = \argmax_{(j_1,j_2): j_1+j_2 = i} \delta(\cand_{B_1}(j_1) \cup \cand_{B_2}(j_2)).
\]

Intuitively, when maximizing diversity with $i$ elements of $B$, one must try all combinations 
of a $j_1$-diverse subset from $B_1$ and a $j_2$-diverse subset from $B_2$, 
such that $i = j_1 + j_2$ holds.
That is, the best elements to pick from $B_1$ and $B_2$ are found in $\cand_{B_1}$ and $\cand_{B_2}$, respectively.

To see that the claim holds, let $A$ be a subset of $B$ with $i$-elements maximizing $\delta(A)$. Define $A_1 := A \cap B_1$ and $A_2 := A \cap B_2$, and their sizes $i_1 := |A_1|$ and $i_2 := |A_2|$, respectively. Due to the optimality of $\cand_{B_1}$, we know that $\delta(A_1) \leq \delta(\cand_{B_1}(i_1))$. Further, $\ud(a_1, a_2) = \ud(a_1',a_2) = \radius(B)$ for every $a_1 \in A_1, a_1' \in \cand_{B_1}(i_1), a_2 \in A_2$ by Property~\ref{prop:radius}. Then the conditions of weak subset-monotonicity are satisfied and $\delta(A_1 \cup A_2) \leq \delta(\cand_{B_1}(i_1) \cup A_2)$. Following the same argument, we can conclude that $\delta(\cand_{B_1}(i_1) \cup A_2) \leq \delta(\cand_{B_1}(i_1) \cup \cand_{B_2}(i_2))$, proving that $\cand_{B_1}(i_1) \cup \cand_{B_2}(i_2)$ is optimal. 

By the previous ideas, the desired algorithm 
with time complexity  $O(k^2 \cdot \fdelta(k) \cdot |S| + |S|^2)$
for solving the $\diversityExplicit[\delta]$ problem follows.
\end{proof}
By Theorem~\ref{theo:explicit-rep}, we get the following result for finding $k$-diverse outputs of CQ query evaluation, where
$\fQ(D) \leq |D|^{|Q|}$.
\begin{corollary}\label{cor:CQ-exp-rep}
Let $\ud$ be an ultrametric over tuples, $\delta$ be a weakly subset-monotone diverse function of $\ud$, and $Q$ be a fixed CQ (i.e., data complexity). Given a relational database $D$, and a value $k$ (in unary), we can compute a $k$-diverse subset of $\sem{Q}(D)$ with respect to $\delta$ in time $O(k^2 \cdot \fdelta(k) \cdot |\sem{Q}(D)| + |\sem{Q}(D)|^2 + \fQ(D))$ where $\fQ(D)$ is the time required to evaluate $Q$ over $D$.
\end{corollary}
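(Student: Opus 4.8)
The plan is to reduce the corollary to a direct application of Theorem~\ref{theo:explicit-rep}, treating the CQ evaluation as a preprocessing step that produces the explicit list of query answers. First I would fix the CQ $Q$ (so $|Q|$ is a constant) and, given the database $D$, evaluate $Q$ over $D$ to obtain the set $\sem{Q}(D) \subseteq \U$ of answer tuples explicitly as a finite list; by the standard data-complexity bound on CQ evaluation this takes time $O(\fQ(D))$ with $\fQ(D) \leq |D|^{|Q|}$, and in particular $|\sem{Q}(D)| \leq |D|^{|Q|}$. Since the universe $\U$ here is the set of all tuples over the schema $\sigma$ and $\ud$ is an ultrametric over $\U$, the set $S := \sem{Q}(D)$ is a finite subset of a universe equipped with an ultrametric, which is exactly the input format required by $\diversityExplicit[\delta]$.

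Next I would invoke Theorem~\ref{theo:explicit-rep} on the instance $(S, k)$: since $\delta$ is by assumption a weakly subset-monotone diversity function of the ultrametric $\ud$, the theorem yields an algorithm computing a $k$-diverse subset of $S$ in time $O(k^2 \cdot \fdelta(k) \cdot |S| + |S|^2)$. Substituting $S = \sem{Q}(D)$ and adding the cost $O(\fQ(D))$ of the evaluation step gives the claimed overall bound $O(k^2 \cdot \fdelta(k) \cdot |\sem{Q}(D)| + |\sem{Q}(D)|^2 + \fQ(D))$. One small point worth spelling out is the role of $k$ being given in unary: this ensures that the factors $k^2$ and $\fdelta(k)$ are genuinely polynomial in the size of the input rather than exponential, so the whole procedure runs in polynomial time in $|D| + |Q| + k$ for fixed $Q$.

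There is essentially no real obstacle here, as the corollary is a routine specialization. The only thing to be slightly careful about is that Theorem~\ref{theo:explicit-rep} assumes $\ud$ is given in the sense that $\ud(a,b)$ is computable in constant time for the relevant pairs (the RAM model assumption from Section~\ref{sec:preliminaries}); for $\udR$ and similar concrete ultrametrics over tuples this is immediate, since the distance between two tuples can be read off in time proportional to the (constant, for fixed $Q$) arity. Hence the construction of the ultrametric tree via Property~\ref{prop:build-tree} inside the algorithm of Theorem~\ref{theo:explicit-rep} incurs only the stated $O(|S|^2)$ overhead, and the corollary follows.
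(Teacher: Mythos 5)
Your proposal is correct and matches the paper's intended argument exactly: the corollary is obtained by materializing $\sem{Q}(D)$ in time $O(\fQ(D))$ and then running the algorithm of Theorem~\ref{theo:explicit-rep} on the explicit set, which contributes the remaining $O(k^2 \cdot \fdelta(k) \cdot |\sem{Q}(D)| + |\sem{Q}(D)|^2)$ term. The paper offers no separate proof beyond this observation, so there is nothing further to reconcile.
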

An open question is whether we can extend
Theorem~\ref{theo:explicit-rep} beyond weakly subset-monotone diversity
functions (extending ultrametrics). We provide here a partial answer by
focusing on monotone diversity functions. A diversity function
$\delta$ of a metric $\dd$ is said to be \emph{monotone} if and only if for every $A,
A' \subseteq \U$ such that $A = \{a_1, \ldots, a_\ell\}$, $A'
= \{a_1', \ldots, a_\ell'\}$, and $\dd(a_i, a_j) \leq \dd(a_i', a_j')$ for
every $i, j \in \{1, \ldots, \ell\}$, it holds that
$\delta(A) \leq \delta(A')$. Monotone diversity functions were
considered in~\cite{MerklPS23} as a general class of natural diversity
functions. 
In the next result, we show that monotonicity of the 
diversity function $\delta$ over some ultrametric is, in general, 
not enough to make the $\diversityExplicit[\delta]$ problem tractable.

\begin{theorem}\label{theo:exp-hardness}
	The $\diversityExplicit[\delta]$ problem is $\mathsf{NP}$-hard even for a monotone, efficiently computable diversity function $\delta$ based on an~ultrametric.
\end{theorem}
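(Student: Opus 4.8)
The plan is to establish $\mathsf{NP}$-hardness by reduction from a hard problem for which a ``cliff-like'' structure can be exploited. The key is to exhibit a monotone, efficiently computable diversity function $\delta$ over an ultrametric for which the behaviour of $\delta$ on sets of size $k$ is essentially arbitrary — unlike the weakly subset-monotone case where the nested ball structure forces optimal solutions to decompose along the ultrametric tree. A natural source of hardness is a problem like \textsc{Max-$k$-Cover} or, more cleanly, \textsc{Independent Set} / \textsc{Clique} again: we want to encode a combinatorial choice of which $k$ elements to take, where $\delta$ rewards exactly the ``good'' $k$-subsets.

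First I would fix the universe to be (a finite subset of) vertices of a graph $G=(V,E)$ together with the ultrametric $\ud$ that takes only two nontrivial values, say $\ud(u,v) = 1$ for $u \ne v$ — the trivial discrete ultrametric — so that the ultrametric tree is a star with root $V$ and all singletons as leaves, and $\radius(V) = 1$. This kills any structure the adversary (the algorithm designer) could use: every $k$-subset looks the same to the ultrametric. Then I would define $\delta$ so that, on pairs, $\delta(\{u,v\}) = \ud(u,v) \in \{0,1\}$ (forced by the definition of a diversity function), and on larger sets $\delta(S)$ depends on $|S|$ plus a ``bonus'' determined by whether $S$ happens to be, e.g., an independent set of $G$. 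Concretely something like $\delta(S) = |S| - 1 + \beta(S)$ where $\beta(S) \in \{0, \varepsilon\}$ is $\varepsilon$ iff $S$ induces an independent set in the graph associated with the elements of $S$. The threshold would then be $k - 1 + \varepsilon$, so that deciding $\delta(S') \ge th$ for some $k$-subset $S' \subseteq V$ is exactly asking whether $G$ has an independent set of size $k$.

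The main obstacle — and the part that needs genuine care — is verifying that this $\delta$ is simultaneously (i) a legal diversity function (value $0$ iff $|S|=1$, correct on pairs, and closed under isomorphism: here the isomorphism-invariance is delicate, since $\delta$ must not ``see'' the identity of elements except through $\ud$; this is why the bonus $\beta$ must be defined intrinsically — e.g. via the induced graph structure carried by the elements themselves, treating elements as encoding their own neighbourhoods, so that an $\ud$-preserving bijection that also happens to be a graph isomorphism preserves $\delta$, while an $\ud$-preserving bijection that is *not* a graph isomorphism does not exist once we make the elements themselves carry the graph as e.g. labelled tuples), and (ii) monotone: whenever $\ud(a_i,a_j) \le \ud(a_i',a_j')$ componentwise for equal-size sets $A, A'$, we need $\delta(A) \le \delta(A')$. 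With the discrete two-valued ultrametric, $\ud(a_i,a_j) \le \ud(a_i',a_j')$ componentwise forces $A$ and $A'$ to be ``compatible'' only in trivial ways (if any pair in $A$ is distinct, the corresponding pair in $A'$ is distinct too), so monotonicity essentially reduces to: $A$ with all-distinct elements must have $\delta$ no larger than any $A'$ with all-distinct elements — which fails unless the bonus is handled correctly. I expect the fix is to make the bonus *anti-monotone-safe* by having $\delta$ reward the ``more structured'' sets and choosing the encoding so that the only $\ud$-isomorphisms are graph automorphisms; alternatively, encode the instance so that the bonus is triggered by a property preserved under all relevant comparisons.

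Therefore I would actually lean toward a slightly different, cleaner encoding to sidestep the monotonicity subtlety: take the elements to be tuples over a relational schema, use an ultrametric like $\udR$ (or a variant with a few levels), and design the tree so that there is one ``hard'' ball $B^\star$ at which the choice of how many elements to draw, and from which grandchildren, encodes the \textsc{Independent Set} instance; make $\delta$ ignore everything except the multiset of radii of the balls spanned by $S$ together with a bonus when $S$ restricted to $B^\star$ corresponds to an independent set. One must then check monotonicity by a case analysis on the finitely many radius values — the componentwise inequality $\ud(a_i,a_j)\le\ud(a_i',a_j')$ constrains the induced subtree shapes, and $\delta$ can be made to respect this by defining it as a monotone function of the radius multiset plus an $\varepsilon$-bonus that never flips an inequality because it is dominated by the gap between consecutive radius values. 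The heart of the proof, and its one real difficulty, is thus the simultaneous juggling of three constraints on the *same* $\delta$ — legality as a diversity function (especially isomorphism-closure), monotonicity, and enough expressive power to encode an $\mathsf{NP}$-hard problem — and I would spend most of the write-up on the monotonicity verification and on arguing isomorphism-closure, since efficient computability of $\delta$ and correctness of the reduction (independent set $\Leftrightarrow$ threshold met) are then routine.
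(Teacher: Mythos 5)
You have correctly isolated the crux of the theorem --- a single $\delta$ must simultaneously be isomorphism-closed, monotone, and expressive enough to encode an $\mathsf{NP}$-hard choice --- but neither of your two constructions actually resolves it, and the second one has a gap that is not merely a matter of detail. The paper reduces from \textsc{SAT} rather than \textsc{Independent Set}, and its entire technical weight lies in a gadget you do not supply: each element of $S_\phi$ is a long tuple whose coordinates encode the binary representation of $\phi$ itself (via a $\phi$-prefix of length $m(2m+2n+3)$ plus dedicated blocks of positions marking the bits of $\bin(\phi)$ and the truth values of the variables), chosen so that the \emph{pairwise ultrametric distances inside a candidate set already determine, up to the relevant isomorphism, which formula and which assignment the set encodes}. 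Monotonicity is then proved by a long forcing argument: if $\dd(a_i,a_j)\le\dd(b_i,b_j)$ componentwise and $A$ is a ``valid'' satisfying set, the distance constraints successively force $B$ to live in some $S_{\phi'}$, force $\phi'=\phi$ bit by bit, and force $B$ to encode the same satisfying assignment, so $\delta(B)=\delta(A)$. That forcing machinery is the missing idea.

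Your fallback construction cannot be repaired along the lines you sketch. A diversity function that ``ignores everything except the multiset of radii of the balls spanned by $S$ together with a bonus when $S$ restricted to $B^\star$ corresponds to an independent set'' is not closed under isomorphism unless graph adjacency is itself recoverable from the ultrametric distances --- and it cannot be: the strong triangle inequality forces every triple to be an isosceles triangle with the two longest sides equal, so an ultrametric induces only a hierarchical (tree) partition structure and cannot realize the adjacency relation of an arbitrary graph (the two-valued ``$1$ if adjacent, $2$ otherwise'' metric used for Theorem~\ref{thm:weitzman:NP} is a metric but not an ultrametric). Consequently there exist distance-preserving bijections carrying an independent $k$-set to a non-independent one, and your $\beta$-bonus violates the isomorphism-closure requirement outright; making $\varepsilon$ small relative to radius gaps does not help, because monotonicity is violated already between sets with \emph{identical} distance profiles. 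Any correct proof must, as the paper does, blow up the elements so that the combinatorial instance is encoded \emph{in the distances themselves} and then verify that monotone domination forces structural identity --- the step your write-up defers to ``the fix'' without carrying it out.
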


That is, the previous result implies that there are monotone diversity functions beyond the weakly subset-monotone class where the algorithmic strategy of Theorem~\ref{theo:explicit-rep} cannot be used.

	\section{Ultrametrics for implicit representations}\label{sec:implicit-rep}

We move now to study the case when $S$ is represented implicitly. Our motivation for implicit representations is to model the query evaluation setting: we receive as input a query $Q$ and a database $D$, and we want to compute a $k$-diverse subset of $S = \sem{Q}(D)$. The main challenge is that $S$ could be of exponential size concerning $|Q|$ and $|D|$; namely, $S$ is implicitly encoded by $Q$ and $D$, and it is not efficient first to compute $S$ to find a $k$-diverse subset of $S$. To formalize this setting in general, given a universe $\U$, we say that an \emph{implicit schema over $\U$} is a tuple $(\irI, \sem{\cdot})$ where $\irI$ is a set of objects called implicit representations, and $\sem{\cdot}$ is a function that maps every implicit representation $I \in \irI$ to a finite subset of $\U$. Further, we assume the existence of a size function $|\cdot|: \irI \rightarrow \bbN$ that represents the size $|I|$ of each implicit representation $I \in \irI$. For example, $\irI$ can be all pairs $(Q, D)$ where $Q$ is a CQ and $D$ is a relational database, $\sem{\cdot}$ maps each pair $(Q,D)$ to $\sem{Q}(D)$, and $|(Q,D)| = |Q| + |D|$. Note that we do not impose any restriction on  the number of elements of $\sem{I}$, so it can be arbitrarily large with respect to $|I|$. Our goal in this section is to compute efficiently, given an implicit representation $I \in \irI$ and $k > 1$, a $k$-diverse subset of $\sem{I}$ with respect to a diversity function $\delta$ of an ultrametric $\ud$.

Given this general scenario, we need a way to navigate through the
elements of $\sem{I}$. In particular, we need a way to navigate the
ultrametric tree $\utree[\sem{I}]$ of $\ud$ over $\sem{I}$.
Like $\sem{I}$, $\utree[\sem{I}]$ could be arbitrarily
large with respect to $|I|$, so it could be unfeasible to construct
$\utree[\sem{I}]$ explicitly. For this reason, we will assume that our
implicit schemas admit some efficient algorithms for
traversing ultrametric trees. Formally,
an \emph{implicit ultrametric tree} for an implicit schema
$(\irI, \sem{\cdot})$ consists of three algorithms
$(\irroot, \irchildren, \irmember)$ such that, given an implicit representation
$I \in \irI$ and a ball $B \in \Ball_{\sem{I}}$:
\begin{enumerate}
	\item $\irroot(I)$ computes the root of $\utree[\sem{I}]$ in polynomial time with respect to $|I|$;
	\item $\irchildren(I, B)$ enumerates all  children of $B$ in $\utree[\sem{I}]$ with polynomial delay w.r.t.~$|I|$; and 
	\item $\irmember(I, B)$ outputs one solution in $B$ in polynomial time with respect to $|I|$.
\end{enumerate}
Further, when we say that a method receives a ball $B$ as input or enumerates $B$ as output, it means an ID representing the ball $B$.
Recall that $B \subseteq \sem{I}$ and then $B$ could be large with respect to $|I|$. For this reason, methods $\irroot(I)$ and $\irchildren(I, B)$ output IDs representing balls in $\Ball_{\sem{I}}$, that one later uses to call $\irchildren$ and $\irmember$. Here, we assume that each ID has the size of one register of the RAM or a small number of registers that one can bound by some parameter on $I$ (e.g., the arity of query answers if $(\irI, \sem{\cdot})$ models the query evaluation setting). 
For example, in the next section we show that such a representation exists in the case of acyclic CQs.
Regarding performance, we say that we can compute an implicit ultrametric tree in time $O(\fT(I))$, for some function $\fT$, if the running time of $\irroot$ and $\irmember$, and the delay of $\irchildren$ are in $O(\fT(I))$. 
Note that we can always assume that $\fT(I)\leq |I|^\ell$ for some constant $\ell$.

Unlike the results presented in Section \ref{sec:explicit-rep}, there
is a difference in the complexity of the problem $\diversityImplicit[\delta]$
depending on whether a diversity function is subset-monotone or weakly
subset-monotone. First, it is possible to show that
$\diversityImplicit[\delta]$ is tractable when restricted to the class of subset-monotone diversity functions.

\begin{theorem}\label{theo:ptime-implicit-rep}
Let $(\irI, \sem{\cdot})$ be an implicit schema and $\ud$ an ultrametric over a common universe $\U$ that admit an implicit
ultrametric tree, and $\delta$ be a subset-monotone diversity
function of $\ud$.
Further, assume that the running time of computing $\delta$ over a $k$-subset of $\U$ is bounded by $O(\fdelta(k))$, and we can compute the implicit ultrametric tree in time $O(\fT(I))$. Then, the problem $\diversityImplicit[\delta]$ can
be solved in time $O(k \cdot \fT(I) + k^2 \cdot \fdelta(k))$. %
\end{theorem}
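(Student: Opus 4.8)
\textbf{Proof plan for Theorem~\ref{theo:ptime-implicit-rep}.}

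The plan is to mimic the bottom-up dynamic programming of Theorem~\ref{theo:explicit-rep}, but to avoid materializing the (possibly exponential) ultrametric tree $\utree[\sem{I}]$ by exploring only a polynomially-sized relevant fragment of it. The key observation is that, since we ultimately want only a $k$-subset, at each ball $B$ it suffices to know $\cand_B(i)$ for $i \in \{0,\dots,k\}$, and when $B$ has many children only at most $k$ of them can contribute a nonempty piece to any $k$-subset. So the algorithm is recursive: starting from $\troot(I)$, process a ball $B$ by calling $\irchildren(I,B)$ to enumerate its children $B_1, B_2, \dots$; recursively compute the tables $\cand_{B_j}$ for the children; and then combine them. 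The base case is a leaf $B = \{a\}$, detected because $\irchildren(I,B)$ returns nothing (or returns a single representative equal to $B$), in which case $\cand_B = \{0 \mapsto \emptyset,\, 1 \mapsto \{a\}\}$ with $a = \irmember(I,B)$.

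The combination step is where subset-monotonicity (rather than merely its weak form) is needed, and this is the main obstacle to spell out carefully. For a non-leaf $B$ with children $B_1, B_2, \dots$, Property~\ref{prop:radius} tells us that any two elements chosen from distinct children are at mutual distance exactly $\radius(B)$; hence the ``profile'' of pairwise distances between the selected parts depends only on how many elements are taken from each child, not on which ones. One folds in the children one at a time, maintaining a running table $\cand^{(j)}$ for the union $B_1 \cup \dots \cup B_j$ via $\cand^{(j)}(i) = \cand^{(j-1)}(i_1) \cup \cand_{B_j}(i_2)$ where $(i_1,i_2) = \argmax_{i_1 + i_2 = i} \delta(\cand^{(j-1)}(i_1) \cup \cand_{B_j}(i_2))$, exactly as in the binary case of Theorem~\ref{theo:explicit-rep}. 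Correctness of each such step follows from the weak subset-monotonicity argument of that proof (the two parts being combined have fixed pairwise distance $\radius(B)$ to each other). The point where plain weak subset-monotonicity is not enough is pruning: if $B$ has more than $k$ children, we cannot afford to recurse into all of them, so we must argue that some $k$ of them suffice. Here one uses full subset-monotonicity: among children $B_j$ with $|B_j| \ge 1$, replacing the chosen singleton inside one child by a singleton from another child changes its distance to the already-selected elements from $\radius(B)$ to $\radius(B)$ (still equal), so in fact even weak subset-monotonicity handles which children to pick once we have fixed how many to take from each; the role of subset-monotonicity is subtler and appears when a child contributes $i_2 = 0$ — we must be able to discard it, and more importantly when comparing a solution that spreads elements over few children versus many. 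I would isolate a lemma stating: for any $k$, there is a $k$-diverse subset of $\sem{I}$ whose elements lie in at most $k$ of the children of any given ball, and whose restriction to each such child $B_j$ is $\cand_{B_j}(i_j)$ for the appropriate $i_j$ — proved by repeatedly applying subset-monotonicity to push a ``suboptimal'' piece of a child toward the optimum $\cand_{B_j}$, which is legitimate because that replacement only increases intra-child diversity while keeping (or, crucially, not decreasing, which is why $\le$ in the definition matters) the inter-child distances.

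To recurse into only $k$ children, we enumerate children of $B$ with polynomial delay via $\irchildren(I,B)$, but we must still be careful: a ball $B$ can have more than $k$ children while the $k$-diverse subset uses fewer — but it can also happen that \emph{every} child is a single point, in which case $B$ itself is essentially an explicit set and we simply read off $k$ of its elements. In general, the recursion tree we actually explore has branching at most $k$ and its ``interesting'' depth is bounded because along any root-to-leaf path the number of balls of size $\ge 2$ that have $\ge 2$ children with $\ge 2$ elements is limited; combined with the $\irchildren$ enumeration this yields $O(k)$ calls to each of $\irroot$, $\irchildren$, $\irmember$ along the relevant structure, giving the $O(k \cdot \fT(I))$ term, plus $O(k^2 \cdot \fdelta(k))$ for the table merges (each of the $O(k)$ merge steps does $O(k)$ evaluations of $\delta$ on sets of size $\le k$). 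I expect the genuinely delicate part to be the bookkeeping that shows the explored fragment of $\utree[\sem{I}]$ has size $O(k)$ rather than depending on the (unbounded) branching or depth of the full tree — i.e., formalizing that balls with exactly one child can be ``contracted'' and that a $k$-diverse subset is supported on a subtree with at most $k$ leaves.
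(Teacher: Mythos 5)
Your plan diverges substantially from the paper's proof, which does \emph{not} adapt the bottom-up dynamic programming of Theorem~\ref{theo:explicit-rep}. Instead, the paper runs a top-down \emph{greedy} procedure: it maintains the current solution $S$ and a frontier $L$ of balls (each with an iterator over its children via $\irchildren$), at every step adds the frontier candidate $b$ maximizing $\delta(S\cup\{b\})$, and proves optimality by an exchange argument (a coverage invariant, local optimality of the greedy pick via subset-monotonicity applied to singletons, and a global swap of the nearest element $s^*$ of a hypothetically better solution for the greedily chosen $s_{m+1}$). Crucially, the greedy never needs an optimal $i$-subset of any proper sub-ball for $i\geq 2$, and each ball it touches is paid for by an element added to $S$, which is what yields the $O(k\cdot\fT(I))$ term.

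The gap in your proposal is exactly the bookkeeping you defer at the end: the explored fragment of $\utree[\sem{I}]$ is \emph{not} of size $O(k)$ under your scheme. Note first that every internal ball has at least two children (the children partition $B$ and are proper subsets), so there are no unary chains to contract. Now take a complete binary ultrametric tree of depth $D\geq k$. Your DP at a ball with budget $b$ and $m<b$ children must recurse into \emph{all} $m$ children (each may receive up to $b-m+1$ elements, and to run the knapsack-style merge you need the full table $\cand_{B_j}(0),\dots,\cand_{B_j}(b-m+1)$ for every child); with $m=2$ the budget drops only from $b$ to $b-1$ per level, so the recursion expands $2^{k}$ balls. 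This reproduces the FPT bound of Theorem~\ref{theo:fpt-implicit-rep}, not the claimed $O(k\cdot\fT(I)+k^{2}\cdot\fdelta(k))$. (Your other pruning step -- when a ball has at least $b$ children, take one element from any $b$ of them -- is fine and does follow from subset-monotonicity plus closure under isomorphism, but it does not help in the binary case, which is the hard one.) To reach the stated running time you need to abandon the exact per-ball tables and argue, as the paper does, that the incremental greedy choice is globally optimal for subset-monotone $\delta$.
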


\begin{proof}[Proof sketch]
The algorithm of Theorem~\ref{theo:ptime-implicit-rep} %
follows a different approach than
the algorithm for explicit representation
in~Theorem~\ref{theo:explicit-rep}. Instead of processing the
ultrametric tree in a bottom-up fashion, it proceeds top-down by using
the interface of the implicit representation of the ultrametric tree
and following a greedy approach: 
we maintain a set of solutions $S$ and a set of candidates $L$.
At each step we add a $l\in L$ to $S$ that maximizes the ``incremental'' diversity, i.e., crucially
\[
\delta(S\cup \{l\}) = \max_{l'\in L}\delta(S\cup \{l'\}) = \max_{l''\in \sem{I}}\delta(S\cup \{l''\}).
\]
Due to subset-monotonicity, proceeding greedily is correct and due to the implicit ultrametric tree, we can maintain $L$ efficiently, i.e., it requires time $O(\fT(I))$ at each step.
\end{proof}

Unfortunately, and in contrast with Theorem~\ref{theo:explicit-rep} in Section~\ref{sec:explicit-rep}, the tractability of the problem $\diversityImplicit[\delta]$ no longer holds if we consider weakly subset-monotone diversity functions.
To prove this, let $\dd$ be any metric over a universe $\U$, and define the \emph{sum-min diversity function} $\deltasummin$ as follows. For every set $S \in \finite(\U)$: $\delta(S) = 0$ if $|S| = 1$, and $\delta(S)$ is given by the following expression if $|S| > 1$:
\[
\deltasummin(S) \ := \ \sum_{a \in S} \dd(a, S \setminus \{a\}) \ = \ \sum_{a \in S} \, \min_{b \in S \,:\, b \neq a} \dd(a, b).
\]
Intuitively, $\deltasummin$ is summing the contribution of each element $a \in S$ to the diversity of $S$, namely, how far is $a$ from the other elements in $S$. One can see $\deltasummin$ as a non-recursive version of the Weitzman diversity function. Like the other diversity functions used before, we can prove that $\deltasummin$ is also a weakly subset-monotone diversity function. 
\begin{proposition}\label{prop:sum-min}
	$\deltasummin$ is weakly subset-monotone if it extends an ultrametric. 
\end{proposition}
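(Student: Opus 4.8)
The plan is to reduce the statement to an inequality involving only $B$ and $B'$, and then to exploit the ``isosceles triangle'' structure of ultrametrics. Let $\ud$ denote the ultrametric that $\deltasummin$ extends. Note first that $\deltasummin$ depends on a set only through the multiset of its pairwise distances, hence is automatically closed under isomorphism. Next, split
\[
\deltasummin(A\cup B)\ =\ \sum_{a\in A}\ud\big(a,(A\cup B)\setminus\{a\}\big)\ +\ \sum_{i=1}^{\ell}\ud\big(b_i,(A\cup B)\setminus\{b_i\}\big)\ =:\ \Sigma_A+\Sigma_B .
\]
For every $a\in A$ we have $\ud(a,(A\cup B)\setminus\{a\})=\min\{\ud(a,A\setminus\{a\}),\ud(a,B)\}$ and $\ud(a,B)=\min_i\ud(a,b_i)=\min_i\ud(a,b_i')=\ud(a,B')$ by hypothesis, so $\Sigma_A$ does not change when $B$ is replaced by $B'$. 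Hence it suffices to show $\Sigma_B\le\Sigma_{B'}$; writing $\alpha_i:=\ud(b_i,A)=\ud(b_i',A)$, $\beta_i:=\ud(b_i,B\setminus\{b_i\})$ and $\beta_i':=\ud(b_i',B'\setminus\{b_i'\})$, this reads
\[
\sum_{i=1}^{\ell}\min\{\alpha_i,\beta_i\}\ \le\ \sum_{i=1}^{\ell}\min\{\alpha_i,\beta_i'\},\qquad\text{knowing}\qquad\sum_{i=1}^{\ell}\beta_i=\deltasummin(B)\le\deltasummin(B')=\sum_{i=1}^{\ell}\beta_i' .
\]
(The $|B|=3$ example mentioned in the text shows this last inequality genuinely needs ultrametricity: for an arbitrary metric the capping by the $\alpha_i$'s can destroy it.)

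The central tool is the following consequence of the strong triangle inequality inside $A\cup B$: whenever $\ud(b_i,b_j)<\ud(b_i,A)$, the triangle $\{a,b_i,b_j\}$ is isosceles with $\ud(a,b_i)=\ud(a,b_j)$ for \emph{every} $a\in A$; in particular $\alpha_i=\alpha_j$ and $\ud(b_i,b_j)<\ud(b_j,A)$ as well. Using $A\cap B=\emptyset$ for reflexivity and the strong triangle inequality again for transitivity, the relation ``$\ud(b_i,b_j)<\ud(b_i,A)$'' is thus an equivalence relation on $B$; let $Q_1,\dots,Q_p$ be its classes, each carrying a common value $\alpha^{(s)}$, with elements of distinct classes $Q_s,Q_t$ at distance $\ge\max\{\alpha^{(s)},\alpha^{(t)}\}$. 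A short case analysis then gives the clean identity
\[
\Sigma_B\ =\ \sum_{s\,:\,|Q_s|\ge 2}\deltasummin(Q_s)\ +\ \sum_{s\,:\,|Q_s|=1}\alpha^{(s)} ,
\]
because a $b_i$ in a class of size $\ge 2$ has its nearest $B$-neighbour already inside its own class, strictly closer than $\alpha_i$, whereas a singleton class is ``cut off'' from the rest of $B$ at distance $\ge\alpha_i$. The identical statement holds for $B'$, over the same index set and with the same $\alpha_i$'s, but possibly with a different partition $Q_1',\dots,Q_{p'}'$.

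It remains to compare the two identities, equivalently to show that the ``loss'' $\deltasummin(B')-\Sigma_{B'}=\sum_i(\beta_i'-\alpha_i)^{+}$ exceeds $\deltasummin(B)-\Sigma_{B}=\sum_i(\beta_i-\alpha_i)^{+}$ by at most the slack $\deltasummin(B')-\deltasummin(B)\ge 0$ (both losses are supported exactly on the singleton classes). This is the step I expect to be the main obstacle, precisely because the clusterings of $B$ and $B'$ need not coincide — only the $\alpha_i$'s are shared, so one cannot simply match classes. I would attack it by induction on $|B|$ (the cases $|B|\le 2$ are immediate; for $|B|=2$ one has $\beta_1=\beta_2=\ud(b_1,b_2)$ and $\min$ is monotone), simultaneously peeling from both sides a suitable ``outermost'' layer of coarsest balls of the ultrametric trees of $B$ and $B'$ and checking that on the remaining indices either $\deltasummin(\cdot)\le\deltasummin(\cdot)$ survives or the deficit is exactly covered by the peeled layer's contribution, which is controlled by the shared $\alpha_i$'s and is therefore the same on both sides. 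Making the choice of the peeled layer precise, and handling non-binary trees and indices that straddle the threshold being peeled, is the technical heart of the argument.
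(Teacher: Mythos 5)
Your setup matches the paper's: you correctly observe that the $A$-contributions are unchanged when $B$ is replaced by $B'$, and your equivalence classes $Q_s$ are essentially the paper's sets $B(b_i) = \{b_j \in B \mid \ud(b_i,b_j) \leq \min\{\ud(b_i,A),\ud(b_j,A)\}\}$. But the proof is not complete: you explicitly defer the final comparison ("the technical heart of the argument") to an induction that you do not carry out, and the premise on which you base that detour --- that "the clusterings of $B$ and $B'$ need not coincide" --- is exactly backwards. This is the missing idea. Because $\ud(a,b_i) = \ud(a,b_i')$ for all $a \in A$, you can route the strong triangle inequality \emph{through $A$}: if $\ud(b_i,b_j) \leq \min\{\ud(b_i,A),\ud(b_j,A)\}$, pick $a \in A$ realizing that minimum (the isosceles property gives $\ud(a,b_i)=\ud(a,b_j)$ equal to it), and then $\ud(b_i',b_j') \leq \max\{\ud(b_i',a),\ud(a,b_j')\} = \min\{\ud(b_i',A),\ud(b_j',A)\}$. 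Hence $b_j \in B(b_i)$ iff $b_j' \in B'(b_i')$ --- the clusters correspond index-by-index (note the non-strict inequality in the definition matters here; with your strict version the correspondence can fail at the boundary). A second application of the same trick shows that $\ud(b_i,b_j) = \ud(b_i',b_j')$ whenever $b_j \notin B(b_i)$, i.e., all inter-cluster distances are preserved.

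With these two facts your "main obstacle" evaporates: the singleton classes of $B$ and $B'$ occupy the same index set, and on them $\ud(b_i,B\setminus\{b_i\}) = \ud(b_i',B'\setminus\{b_i'\})$ and $\ud(b_i,(A\cup B)\setminus\{b_i\}) = \ud(b_i',(A\cup B')\setminus\{b_i'\})$, so the correction terms $\deltasummin(B)-\Sigma_B$ and $\deltasummin(B')-\Sigma_{B'}$ are \emph{equal}, not merely comparable, and $\Sigma_B \leq \Sigma_{B'}$ follows directly from $\deltasummin(B) \leq \deltasummin(B')$. Without this lemma, your abstraction (keeping only the shared $\alpha_i$'s, the two ultrametric structures, and $\sum_i \beta_i \leq \sum_i \beta_i'$) discards the constraint that ties the internal geometry of $B$ and $B'$ together, and it is not clear the capped inequality survives in that generality; in any case, as written the argument stops short of a proof.
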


As we show next, $\deltasummin$ serves as an example of a
weakly subset-monotone diversity function of an ultrametric~$\ud$ for which
one can find an implicit representation that admits an implicit
ultrametric tree, but where it is hard to find a $k$-diverse subset.
\begin{theorem}\label{theo:hardness-sum-min}
	There exists an implicit schema $(\irI, \sem{\cdot})$ and an ultrametric $\ud$ over a common universe $\U$ which admit an implicit ultrametric tree but for which $\diversityImplicit[\deltasummin]$ is $\mathsf{NP}$-hard.  
\end{theorem}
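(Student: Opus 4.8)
\textbf{Proof plan for Theorem~\ref{theo:hardness-sum-min}.}
The plan is to design an implicit schema together with an ultrametric $\ud$ whose ultrametric tree is ``wide and shallow'' so that all three navigation algorithms $(\irroot,\irchildren,\irmember)$ run in polynomial time, yet the combinatorial freedom in choosing which children to descend into encodes an $\mathsf{NP}$-hard selection problem. A natural source of hardness is a subset-selection problem such as \textsc{Independent Set} or a packing/covering problem, mirroring the reductions used in Section~\ref{sec:hardness}. Concretely, I would start from an instance of a suitable $\mathsf{NP}$-hard problem (say \textsc{Independent Set} on a graph $G = (V,E)$ with target size $k$), and build an implicit representation $I$ of size polynomial in $|G|$ that ``describes'' an exponentially large universe $\sem{I}$: for each vertex $v$ we create a large cluster (ball) $C_v$ of solutions, and the internal structure of $C_v$ (its sub-balls and their radii) is used to record the adjacency information of $v$ in $G$. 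The ultrametric tree then has the root $\sem{I}$ at the top, the clusters $C_v$ as its children, and below each $C_v$ a modest gadget tree whose leaves are individual solutions; crucially the whole tree has depth bounded by a constant (or by $\log|G|$), so $\irchildren$ can enumerate the children of any ball with polynomial delay just by inspecting $I$, and $\irmember$ can name a representative leaf of any ball in polynomial time.

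The heart of the argument is to choose the radii so that $\deltasummin$ of a candidate $k$-subset $S'$ is maximized exactly when $S'$ picks one representative from each of $k$ pairwise ``non-conflicting'' clusters, where the conflict relation is precisely the edge relation of $G$. The key mechanism is that $\deltasummin(S') = \sum_{a \in S'} \ud(a, S'\setminus\{a\})$, and for two elements $a \in C_u$, $b \in C_v$ the distance $\ud(a,b)$ equals the radius of the least common ancestor of $a$ and $b$ in the tree. So I would arrange that the LCA of $C_u$ and $C_v$ has a \emph{larger} radius when $u,v$ are non-adjacent and a \emph{smaller} radius when $u,v$ are adjacent (this can be done by inserting an intermediate ball grouping together the clusters of any edge, with an appropriately chosen intermediate radius). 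Then a $k$-subset that avoids all conflicts — i.e., that picks representatives from an independent set of size $k$ — has every element contributing the large ``non-adjacent'' radius to the sum, whereas any $k$-subset that includes two clusters of an edge forces at least one element to contribute the smaller radius, strictly lowering $\deltasummin$. One must also handle the alternative of picking two solutions from the same cluster $C_v$: by making the internal radii of each $C_v$ strictly smaller than even the ``adjacent'' inter-cluster radius, such a choice is always dominated, so an optimal $k$-subset indeed spreads across $k$ distinct clusters. Balancing the numeric values (root radius, non-adjacent radius, adjacent radius, intra-cluster radii) and setting the threshold $th$ so that ``$\deltasummin(S') \ge th$ iff $G$ has an independent set of size $k$'' is the routine-but-delicate bookkeeping step.

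The main obstacle I anticipate is the interaction between the two requirements: the schema must be \emph{simultaneously} rich enough to encode an $\mathsf{NP}$-hard instance and tame enough to admit a polynomial-time implicit ultrametric tree (in particular, $\irchildren$ with polynomial delay). A tree that is too uniform makes $\deltasummin$ insensitive to the encoded structure; a tree that encodes the problem too directly may make even listing the children of the root (the $n$ vertex-clusters, plus the $|E|$ edge-grouping balls) subtle, since these balls must be named by short IDs and recognized by $\irmember$ without materializing their exponentially many elements. I would resolve this by letting IDs carry just enough syntactic data (e.g., a vertex name, an edge name, or a short address within a gadget) for the three algorithms to decode membership and adjacency from $I$ in polynomial time, while keeping $|\sem{I}|$ exponential by padding each leaf-cluster with exponentially many indistinguishable ``dummy'' solutions that all sit at a common small radius. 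A secondary point to verify carefully is that the constructed $\ud$ really is an ultrametric — but since $\ud$ is read off as the radius of the LCA in a rooted tree with radii non-increasing along root-to-leaf paths, the strong triangle inequality holds automatically, so this check is immediate once the radii are assigned monotonically.
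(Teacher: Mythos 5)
Your construction cannot work as described, for a structural reason rooted in the very definition of an ultrametric. You propose to arrange the radii so that the least common ancestor of the clusters $C_u$ and $C_v$ has a \emph{smaller} radius when $u,v$ are adjacent in $G$ and a \emph{larger} one when they are not, so that the ``conflict relation'' read off from the distances is exactly the edge relation of $G$. But in any ultrametric, for every fixed threshold $r$ the relation ``$\ud(a,b)\leq r$'' is an equivalence relation (this is exactly Property~\ref{prop:ultrametric}(b), i.e., the laminar structure of balls): if $u$ is adjacent to both $v$ and $w$, then the balls grouping $C_u\cup C_v$ and $C_u\cup C_w$ intersect, hence one must contain the other, forcing $v$ and $w$ to also be at the small ``adjacent'' distance. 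So the only graphs you can encode this way are disjoint unions of cliques, for which \textsc{Independent Set} is trivial. This is not a bookkeeping issue that can be patched by tuning radii; it is the precise reason the paper's positive results for ultrametrics are possible at all, and any hardness proof must find its intractability elsewhere than in encoding an arbitrary conflict graph into pairwise distances.

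The paper's actual proof locates the hardness in the \emph{depth} of the tree rather than in its branching structure. It reduces from \textsc{SAT}: the implicit representation is the formula $\phi(x_1,\dots,x_n)$ itself, the universe consists of pairs $(\alpha,0)$ for all assignments $\alpha$ and $(\alpha,1)$ for all models $\alpha$, and the ultrametric is $3^{-i}$ where $i$ is the length of the common prefix of the two assignments. The balls are prefixes of assignments, so $\irroot$, $\irchildren$, $\irmember$ are all polynomial-time (checking whether a \emph{total} assignment is a model is easy). The key calculation is that a $(n+2)$-subset attains $\deltasummin$ equal to $3^{-n}+\sum_{i=0}^{n}3^{-i}$ if and only if it contains a chain of elements splitting at every depth \emph{plus} two elements $(\alpha,0)$ and $(\alpha,1)$ agreeing on all variables --- which exist only if $\alpha\vDash\phi$. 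Finding such a subset thus requires locating a satisfying assignment among the exponentially many leaves, even though each individual navigation step is cheap. If you want to salvage an \textsc{Independent Set}-style reduction you would have to abandon the idea of encoding adjacency in the metric; the SAT-based route is the natural one here.
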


A natural question is whether subset-monotonicity can be relaxed in
other ways to allow for the tractability of the problem
$\diversityImplicit[\delta]$. We conclude this section by providing an
answer to this question when considering the notion of
fixed-parameter tractable (FPT) algorithms. 
More precisely, we say that a diversity function $\delta$ of a metric
$\dd$ over $\U$ is \emph{weakly monotone} if, and only if, for every set
$A \subseteq \U$ and every $b, b' \in \U$ such that $A \cap \{b,b'\}
= \emptyset$ and $\dd(a, b) \leq \dd(a, b')$ for every $a \in A$, it holds that
$\delta(A \cup \{b\}) \leq \delta(A \cup \{b'\})$. In other words, if
we extend $A$ with $b$ or $b'$, then $A \cup \{b'\}$ will be at least as diverse as
$A \cup \{b\}$ given that $b'$ is farther from $A$ than $b$.

Every subset-monotone function is also weakly monotone, but weak
subset-monotonicity does not necessarily imply weak
monotonicity. However, one can check that all diversity functions
used in this paper are weakly monotone, in particular, $\deltasummin$.

\begin{theorem}\label{theo:fpt-implicit-rep}
For every implicit schema $(\irI, \sem{\cdot})$ and
ultrametric $\ud$ over a common universe\, $\U$ which admits an implicit ultrametric
tree, and for every computable weakly monotone diversity function $\delta$ of
$\ud$, the problem $\diversityImplicit[\delta]$ is fixed-parameter
tractable in $k$.
\end{theorem}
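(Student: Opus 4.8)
\textbf{Proof proposal for Theorem~\ref{theo:fpt-implicit-rep}.}

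The plan is to combine the greedy top-down strategy behind Theorem~\ref{theo:ptime-implicit-rep} with exhaustive branching over the polynomially many ``relevant'' children at each level, charging the blow-up entirely to the parameter $k$. The key observation is that weak monotonicity, unlike subset-monotonicity, only guarantees that pushing a \emph{single} new element farther from the current set $A$ cannot decrease diversity; it says nothing about replacing a whole block. Hence a purely greedy choice may be wrong, but we can afford to guess. Concretely, I would maintain, as in the proof of Theorem~\ref{theo:ptime-implicit-rep}, a partial solution together with a frontier $L$ of balls of the implicit ultrametric tree $\utree[\sem{I}]$, where each ball in $L$ is represented by one witness element obtained via $\irmember$. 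Initially $L = \{\irroot(I)\}$. As long as some ball $B \in L$ has $|B| \geq 2$, we \emph{refine} it: replace $B$ in $L$ by its children $\irchildren(I,B)$. The crucial combinatorial fact is that, when building a $k$-subset, a ball $B$ at the frontier can ``contribute'' at most... well, it can contribute many elements, but the branching is controlled differently: at any moment the frontier $L$ need contain at most $k$ balls from which we will still pick, because we only ever need to pick $k$ elements total, and (by Property~\ref{prop:radius}) once two picked elements lie in distinct children of some $B$, all cross-distances are fixed to $\radius(B)$. So the algorithm maintains $|L| \le k$ balls that are pairwise disjoint and, together, are the only balls from which future picks are taken.

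The recursion then works as follows. Given a frontier $L$ of at most $k$ disjoint balls and a target to pick $k$ elements total, I would branch over all ways to assign a ``quota'' $q_B \in \{0,\dots,k\}$ to each $B \in L$ with $\sum_{B\in L} q_B = k$; since $|L|\le k$, there are at most $k^k$ such quota vectors. For a fixed quota vector, for each $B$ with $q_B \ge 1$: if $|B| = 1$ (a singleton, detectable because $\irchildren(I,B)$ returns a single child equal to $B$, or by $q_B$ forcing it) we simply take its unique element; if $|B| \ge 2$ and $q_B \ge 2$, we must split the $q_B$ picks among the children of $B$, so we recurse on the children of $B$ with target $q_B$. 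If $q_B = 1$ we take the witness $\irmember(I,B)$. This recursion has depth bounded by the height of $\utree[\sem{I}]$, but that height is not bounded by $k$; to fix this, observe that a ball $B$ with $q_B = 1$ is never split further, and a ball is split only if at least two picks go into it, so along any root-to-leaf branch of the recursion the quota strictly... it is the total number of ``active'' balls that matters. A clean accounting: the recursion tree has at most $k$ leaves that are singletons actually contributing an element, and every internal node of the recursion corresponds to a ball that is split because $\ge 2$ picks fall inside it; a standard argument shows there are at most $k-1$ such internal splits. Hence the whole recursion makes $O(k)$ calls to $\irchildren$/$\irmember$ per quota vector, giving at most $k^{O(k)} \cdot \poly(|I|)$ candidate $k$-subsets overall; each is scored with $\delta$ in time $O(\fdelta(k))$, and we return the best.

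For correctness I would argue by induction on the recursion that the set of candidate $k$-subsets enumerated contains an optimal one. The only place weak monotonicity is used is to replace, within a fixed ball $B$ at the frontier that we decided \emph{not} to split (i.e.\ $q_B \le 1$) or for which the quota is realized in a child $B' \in \tchildren(B)$, an arbitrary chosen witness by $\irmember(I,B')$: by Property~\ref{prop:radius} all elements of $B'$ have the same distance $\radius(B)$ to everything already picked outside $B$, and the same distance to everything picked in sibling children of $B'$, so swapping in the witness changes no cross-distance that is ``seen'' from outside $B'$ — which is exactly the single-element hypothesis of weak monotonicity, applied repeatedly (one element at a time) to transform an optimal solution's block in $B'$ into the canonical witness-based block without decreasing $\delta$. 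The main obstacle I anticipate is precisely this last step: weak monotonicity only lets us move \emph{one} element at a time, so I must take care that after each single swap the hypotheses are still met (the other not-yet-swapped elements of the block are still present, at the same distances), and that for quotas $\ge 2$ inside a ball the argument bottoms out correctly in the recursion rather than requiring a block-replacement lemma we do not have. Bounding the recursion so that the exponential factor depends only on $k$ (and not on the tree height) is the other point that needs the careful $O(k)$-splits accounting sketched above.
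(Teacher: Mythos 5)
Your overall strategy --- a top-down exploration of the implicit ultrametric tree whose branching is charged to $k$, followed by canonicalizing witnesses one element at a time via weak monotonicity --- is in the same spirit as the paper's algorithm. But there is a genuine gap in the complexity accounting, and it sits exactly where the hypothesis of the theorem has to do real work. Your recursion admits the quota vector in which all $q_B$ picks are assigned to a single child of $B$; that child is then split again with the same quota, and nothing prevents this unary chain from descending to the full height of $\utree[\sem{I}]$, which is bounded only by $|\sem{I}|$ and not by any function of $k$ (nor polynomially in $|I|$). Your claim that ``a standard argument shows there are at most $k-1$ internal splits'' is the tree-with-$k$-leaves argument, and it requires every internal node of the recursion to have at least two children with nonzero quota --- precisely what your branching does not guarantee. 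Relatedly, the assertions that the frontier has at most $k$ balls and that each ball has ``polynomially many'' children are unjustified in the abstract implicit-schema setting: a ball may have exponentially many children (only polynomial-delay enumeration of $\irchildren$ is guaranteed), so one cannot branch over quota assignments to all of them.

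The missing ingredient is a spreading lemma, and it is here that weak monotonicity is really used: if $S'$ is a $k$-subset with $m = |S' \cap B| \ge 2$ and $B$ has children $B_1, \dots, B_l$ in enumeration order, then replacing some $b \in S' \cap B$ by an arbitrary $a$ taken from a child $B_i$ disjoint from $S'$ cannot decrease $\delta$, because the strong triangle inequality together with Property~\ref{prop:radius} gives $\ud(b,s) \le \ud(a,s)$ for every $s \in S' \setminus \{b\}$. Hence, without loss of generality, an optimal solution meets each of the \emph{first} $\min\{l,m\}$ children, so each child receives at most $m - \min\{l,m\} + 1 < m$ of its elements. This simultaneously kills the all-in-one-child branch, makes the budget strictly decrease along every path (bounding the recursion depth by $k$), and justifies enumerating only the first $\min\{k,l\}$ children of each ball. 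Once you have it, the quota branching becomes unnecessary overhead: the paper's algorithm simply collects the at most $2^k$ ``relevant'' elements that this budgeted recursion reaches and brute-forces over their $k$-subsets in time $O\bigl(\binom{2^k}{k} \cdot \fdelta(k)\bigr)$.
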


\begin{proof}[Proof sketch]
Our FPT algorithm %
navigates the ultrametric tree top-down by using its implicit representation, but this time it considers balls up to distance $k$ from the root.
In the worst case, the ultrametric tree is binary and we have to consider all balls up to depth $k$.
We then iterate over these balls (up to $2^k$ many), obtaining a set $S$ by selecting one solution form each ball. We show that if $\delta$ is weakly monotone, a $k$-diverse subset of $S$ is a $k$-diverse subset of $\sem{I}$. 
\end{proof}

	\section{Efficient computation of diverse answers to ACQs}\label{sec:acq}

In this section, we use the results for implicit representations from the previous section to obtain efficient algorithms for finding $k$-diverse subsets of the answers to acyclic CQ (ACQ) with respect to the ultrametric $\udR$ over tuples presented in Section~\ref{sec:ultrametric}. 
Note that here we study algorithms for ACQ in combined complexity (i.e., the query $Q$ is not fixed), in contrast to Corollary~\ref{cor:CQ-exp-rep} whose analysis is in data complexity (i.e., $Q$ is fixed). 
In the following, we start by recalling the definition of ACQ and discussing the ultrametric $\udR$. Then, we show our main results concerning computing diverse query answers for ACQ. 

Acyclic CQ is the prototypical subclass of conjunctive queries that allow for tractable query evaluation 
(combined complexity)~\cite{yannakakis1981algorithms,bagan2007acyclic}. We therefore also take ACQ as the 
natural starting point in our effort to develop efficient algorithms for finding $k$-diverse sets. Let $Q$ be a CQ like in (\ref{eq:cq}). A \emph{join tree} for $Q$ is a labeled tree $T = (V, E, \lambda)$ where $(V, E)$ is a undirected tree and $\lambda$ is a bijective function from $V$ to the atoms $\{R_1(\bx_1), \ldots, R_{m}(\bx_{m})\}$. Further, a join tree $T$ must satisfy that each variable $x \in \Var$ forms a connected component in $T$, namely, the set $\{v \in V \mid \text{$x$ appears in the atom $\lambda(v)$}\}$ is connected in~$T$. Then $Q$ is \emph{acyclic} iff there exists a join tree for $Q$. 
Also, we say that $Q$ is a \emph{free-connex ACQ} iff $Q$ is acyclic and the body together with the head of $Q$ is acyclic (i.e., admits a join tree).

For our algorithmic results over ACQ, we restrict to the ultrametric $\udR$ over tuples of a schema~$\sigma$, previously defined in Section~\ref{sec:ultrametric}. 
Arguably, $\udR$ is a natural ultrametric for comparing tuples of relations that has been used for computing diverse subsets in previous work~\cite{DBLP:journals/debu/VeeSA09,DBLP:conf/icde/VeeSSBA08}.
Let $Q$ be a CQ with head $Q(\bar{x})$. 
Note that the variable order $\bar{x}$ in $Q(\bar{x})$ is important for measuring the diversity of subsets of $\sem{Q}(D)$ for some database $D$. Concretely, if we have two CQ $Q$ and $Q'$ with the same body but with different orders in their heads, then diversity of the subsets of $\sem{Q}(D)$ and $\sem{Q'}(D)$ could be totally different.

By using Theorem~\ref{theo:ptime-implicit-rep} over ACQ and the ultrametric $\udR$, we can find quasilinear time algorithms with respect to $|D|$ for finding $k$-diverse sets of query answers.

\begin{theorem}\label{cor:acq-evaluation}
	Let $\delta$ be a subset-monotone diversity function of the ultrametric $\udR$ such that the running time of computing $\delta$ over a set of size $k$ is bounded by $O(\fdelta(k))$. Given an ACQ $Q$, a relational database $D$, and a value $k$ (in unary), a $k$-diverse subset of $\sem{Q}(D)$ with respect to $\delta$ can be computed in time $O(k \cdot |Q|  \cdot |D| \cdot \log(|D|) + k^2 \cdot \fdelta(k))$.
\end{theorem}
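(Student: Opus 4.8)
The strategy is to invoke Theorem~\ref{theo:ptime-implicit-rep} with the implicit schema $(\irI,\sem{\cdot})$ whose representations are pairs $(Q,D)$ of an acyclic CQ and a database, $\sem{(Q,D)} := \sem{Q}(D)$, and $|(Q,D)| := |Q| + |D|$, equipped with the ultrametric $\udR$ over tuples. Since $\delta$ is assumed subset-monotone with $\fdelta(k)$-time evaluation on $k$-subsets, Theorem~\ref{theo:ptime-implicit-rep} already gives a running time of $O(k \cdot \fT(I) + k^2 \cdot \fdelta(k))$, so the whole task reduces to exhibiting an \emph{implicit ultrametric tree} $(\irroot,\irchildren,\irmember)$ for this schema with delay/running time $\fT(I) = O(|Q| \cdot |D| \cdot \log|D|)$, i.e.\ quasilinear in $|D|$ and linear in $|Q|$.

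The heart of the argument is therefore a structural description of the ultrametric tree $\utree[\sem{Q}(D)]$ for $\udR$ and how to navigate it using acyclicity. Recall that under $\udR$ two answer tuples $Q(\bar a)$ and $Q(\bar a')$ are at distance $2^{-i}$ where $i$ is the first head-position on which they disagree (and all answer tuples come from the same relation $Q$, so distance $1$ never occurs). Hence a ball $\Ball_{\sem{Q}(D)}(Q(\bar a), 2^{-i})$ is exactly the set of answer tuples agreeing with $\bar a$ on the first $i$ head positions, and the children of such a ball in $\utree$ correspond to the distinct values appearing at head position $i+1$ among the answers that extend the fixed prefix. Thus each node of the ultrametric tree is identified by a prefix $(c_1,\dots,c_i)$ of head-values that is \emph{realizable}, i.e.\ extendable to a full answer of $Q$ over $D$; the root is the empty prefix, and leaves are full answer tuples. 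This is where the assumption of acyclicity is used: by Yannakakis-style processing (a semijoin pass over a join tree), one can, for a fixed partial assignment of the first $i$ head variables to constants $c_1,\dots,c_i$, decide in time $O(|Q|\cdot|D|)$ whether it extends to a full homomorphism, and---crucially---enumerate with polynomial delay the distinct constants $c_{i+1}$ at the $(i{+}1)$-st head variable that keep the prefix realizable (by repeatedly filtering the reduced database and projecting onto that variable; sorting the projection to deduplicate costs the extra $\log|D|$ factor). Concretely: $\irroot(I)$ returns the empty-prefix ID in $O(|Q|\cdot|D|)$ after one acyclicity check; $\irchildren(I,B)$ takes the prefix ID of $B$, performs one reduction pass restricted to that prefix, and streams the distinct next-position values with delay $O(|Q|\cdot|D|\cdot\log|D|)$; $\irmember(I,B)$ greedily completes the prefix of $B$ to one full answer in $O(|Q|\cdot|D|)$ by iterating the enumeration step. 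Each ID is a sequence of at most $\arity(Q)$ data values, which fits in $O(|Q|)$ registers as the framework allows (a parameter bounded by $|Q|$).

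The main obstacle---and the step deserving the most care---is making the enumeration of $\irchildren$ genuinely have \emph{polynomial delay independent of the number of children and of the subtree size}, together with getting the $\log|D|$ rather than a larger factor. One must be careful that fixing a prefix of head-values and then asking ``which next values are realizable'' does not require materializing all extensions: the right approach is to build the reduced (full-reducer-cleaned) database once per $\irchildren$ call, so that after reduction any value surviving at the target variable's column is guaranteed to extend to a full answer; then a single sorted pass over that column yields the distinct children with amortized $O(\log|D|)$ extra cost per child, and an $O(|Q|\cdot|D|)$ cost to produce the reduced instance before the first child is emitted. A secondary subtlety is that for \emph{non-free-connex} ACQs the head variables need not form a connex subtree, so one cannot simply walk a join tree of head atoms; instead the prefix constraint must be pushed as unary selections into the body before the semijoin passes, which is still fine for acyclicity but must be spelled out. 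Once the implicit ultrametric tree is in place with $\fT(I)=O(|Q|\cdot|D|\cdot\log|D|)$, Theorem~\ref{theo:ptime-implicit-rep} yields the claimed bound $O(k\cdot|Q|\cdot|D|\cdot\log|D| + k^2\cdot\fdelta(k))$ immediately.
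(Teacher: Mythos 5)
Your proposal is correct and follows essentially the same route as the paper's proof: reduce to Theorem~\ref{theo:ptime-implicit-rep}, identify the balls of $\udR$ over $\sem{Q}(D)$ with common prefixes of head values, and implement $\irroot$, $\irchildren$, $\irmember$ by substituting the prefix as constants into the (still acyclic) query and running a Yannakakis semijoin pass to enumerate the admissible next values with delay $O(|Q|\cdot|D|\cdot\log|D|)$. The one detail the paper is more explicit about is that ball IDs must be the \emph{maximal} common prefix (extending past positions with a unique admissible value), so that what $\irchildren$ emits are genuine children in $\utree[\sem{Q}(D)]$ rather than the ball itself.
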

\begin{proof}[Proof Sketch]
To apply Theorem~\ref{theo:ptime-implicit-rep}, we need to describe an implicit ultrametric tree for tuples in $\sem{Q}(D)$. A helpful property of ultrametric $\udR$ is that balls in $\Ball_{\sem{Q}(D)}$ can be represented by the common prefix of their tuples. More specifically, for every ball $B \in \Ball_{\sem{Q}(D)}$ there exist values  $c_1, \ldots, c_i$ such that $B = \{Q(\bar{a}) \in \sem{Q}(D) \mid \forall j \leq i. \,\bar{a}[j] = c_j\}$.  
Then, we can traverse the ultrametric tree of $\udR$ over $\sem{Q}(D)$, by managing partial outputs (i.e., prefixes) of $\sem{Q}(D)$. 

Let $Q$ be an ACQ like~(\ref{eq:cq}). To arrive at the methods $\irroot$, $\irchildren$, and $\irmember$, we modify Yannakakis algorithm~\cite{yannakakis1981algorithms} to find the following data values that extend a given prefix. 
Concretely, given values $c_1, \ldots, c_i$ that represent a ball $B$, we want to find all values $c$ such that $c_1, \ldots, c_i, c$ is the prefix of a tuple in $\sem{Q}(D)$. For this, we can consider the subquery $Q'(\bar{x}[i+1])) \leftarrow R_1(h(\bx_1)), \ldots, R_{m}(h(\bx_{m}))$ where $h$ is a partial assignment that maps $h(\bar{x}[j]) = c_j$ for every $j \leq i$ and $h(x) = x$ for any other variable $x \in \Var$.  The subquery $Q'$ is also acyclic and returns all the desired values $c$, such that $c_1 \ldots, c_i, c$ represents a child of $B$. Thus, running Yannakakis algorithm over $Q'$ and $D$, we can compute $\irchildren$ in time $O(|Q|\cdot |D| \cdot \log(|D|))$ and similarly for the methods $\irroot$ or $\irmember$.
\end{proof}

A natural next step to improve the running time of Theorem~\ref{cor:acq-evaluation} is to break the dependency between $k$ and $|Q|\cdot |D|\cdot \log(|D|)$. Towards this goal, we take inspiration from the work of Carmeli et al.~\cite{DBLP:journals/tods/CarmeliTGKR23}, which studied direct access to ranked answers of conjunctive queries. In this work, the algorithmic results also depend on the attribute order, characterizing which CQs and orders admit direct access to the results. 
For this characterization, the presence of a disruptive trio in the query is crucial. Let $Q$ be a CQ like (\ref{eq:cq}). We say that two variables $x$ and $y$ in $Q$ are neighbors if they appear together in $Q$ in some atom. Then we say that three positions $i, j, k$ (i.e., variables $\bx[i]$, $\bx[j]$, and $\bx[k]$) in the head of $Q$ form a \emph{disruptive trio} iff $\bar{x}[i]$ and $\bar{x}[j]$ are not neighbors in $Q$, and $\bar{x}[k]$ is a neighbor of $\bar{x}[i]$ and $\bar{x}[j]$ in $Q$, but $i <k$ and $j < k$ (i.e., $\bar{x}[k]$ appears after $\bar{x}[i]$ and $\bar{x}[j]$). For example, for the query $Q(x_1,x_2,x_3,x_4) \leftarrow R(x_1,x_2), S(x_2, x_4), T(x_4, x_3)$, the positions $2, 3, 4$ form a disruptive trio, but $1,2,3$ do not.   

In the following result, we show that free-connex ACQ and the absence of a disruptive trio are what we need to get better algorithms for computing $k$-diverse subsets.

\begin{theorem}\label{theo:disruptive-trio}
	Let $\delta$ be a subset-monotone diversity function of the ultrametric $\udR$ such that the running time of computing $\delta$ over a set of size $k$ is bounded by $O(\fdelta(k))$. Given a free-connex ACQ $Q$ without a disruptive trio, a relational database $D$, and a value $k$ (in unary), a $k$-diverse subset of $\sem{Q}(D)$ with respect to $\delta$ can be computed in time $O(|Q| \cdot |D|\cdot\log(|D|)  + k \cdot |Q| + k^2 \cdot \fdelta(k))$.
\end{theorem}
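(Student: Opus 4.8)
\textbf{Proof plan for Theorem~\ref{theo:disruptive-trio}.}
The plan is to refine the implicit-ultrametric-tree construction from the proof of Theorem~\ref{cor:acq-evaluation} so that the expensive query-processing work is done \emph{once}, in a preprocessing phase of cost $O(|Q|\cdot|D|\cdot\log(|D|))$, after which each call to $\irroot$, $\irchildren$, and $\irmember$ costs only $O(|Q|)$ (plus the usual $O(\fdelta(k))$ for evaluating $\delta$). Feeding such an implicit ultrametric tree into Theorem~\ref{theo:ptime-implicit-rep} then yields the claimed bound $O(|Q|\cdot|D|\cdot\log(|D|) + k\cdot|Q| + k^2\cdot\fdelta(k))$, since that theorem makes $O(k)$ traversal calls and $O(k^2)$ diversity evaluations. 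So the whole theorem reduces to: \emph{after near-linear preprocessing, extend-by-one-attribute queries on a free-connex ACQ without a disruptive trio can be answered in time $O(|Q|)$ per answer value.}

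First I would invoke the machinery of Carmeli et al.~\cite{DBLP:journals/tods/CarmeliTGKR23}: for a free-connex ACQ $Q$ whose head-variable order contains no disruptive trio, one can compute in time $O(|Q|\cdot|D|\cdot\log(|D|))$ a data structure supporting \emph{direct access} to $\sem{Q}(D)$ in the lexicographic order induced by $\bar x$. The key observation is that the balls of $\udR$ over $\sem{Q}(D)$ are exactly the sets of answer tuples sharing a common prefix $c_1,\dots,c_i$ of head values (as already noted in the proof sketch of Theorem~\ref{cor:acq-evaluation}), and under the lexicographic order each such ball corresponds to a \emph{contiguous interval} of answer ranks. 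Hence a ball can be represented as a pair of ranks $(\ell, r)$ — a constant number of RAM registers — and $\irmember(I,B)$ is simply a direct-access call returning the tuple of rank $\ell$, costing $O(|Q|)$ once the structure is built. For $\irroot(I)$ we return the interval $[1, |\sem{Q}(D)|]$, the size being available from the preprocessing.

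The main obstacle is implementing $\irchildren(I,B)$ with polynomial delay \emph{independent of $|D|$}. Given a ball $B = [\ell, r]$ with common prefix $c_1,\dots,c_i$, its children are the maximal sub-balls obtained by fixing the $(i{+}1)$-st head value; in rank terms these are the maximal contiguous sub-intervals of $[\ell,r]$ on which $\bar a[i{+}1]$ is constant. I would enumerate them by a form of exponential/galloping search combined with direct access: starting from rank $\ell$, read the tuple $t$ of rank $\ell$, read off its value $c := t[i{+}1]$, and binary-search within $[\ell, r]$ for the largest rank $r'$ whose tuple still has $\bar a[i{+}1] = c$ (monotonicity of the lexicographic order guarantees this is an interval); emit the child $[\ell, r']$, set $\ell \leftarrow r'+1$, and repeat. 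Each child costs $O(\log|D|)$ direct-access calls, i.e.\ $O(|Q|\cdot\log^2|D|)$ — still depending on $|D|$. To get the delay down to $O(|Q|)$, I would instead precompute during the preprocessing phase, for every prefix length $i$, an auxiliary sorted/indexed representation (or exploit that Carmeli et al.'s structure already organizes answers by a variable-elimination order compatible with $\bar x$, so the "next distinct value at level $i{+}1$" pointer is available in $O(1)$ after $O(|Q|)$ work). Concretely, the absence of a disruptive trio is precisely what allows the head order to be realized by a join-tree-compatible elimination order, so the partial assignment $h$ from the Theorem~\ref{cor:acq-evaluation} proof need not trigger a fresh Yannakakis run: the relevant projections are materialized once, and stepping to the next sibling is a pointer move. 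Verifying this last point — that the Carmeli et al.\ direct-access structure (or a mild augmentation of it) supports $O(|Q|)$-time "successor among children" queries — is the crux, and is where I would spend the bulk of the argument; everything else is bookkeeping plus a direct appeal to Theorem~\ref{theo:ptime-implicit-rep}.
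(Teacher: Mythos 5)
Your overall reduction is the right one and matches the paper's: do all query-processing work in a single $O(|Q|\cdot|D|\cdot\log(|D|))$ preprocessing phase, make each subsequent call to $\irroot$, $\irchildren$, $\irmember$ cost $O(|Q|)$, and plug the resulting implicit ultrametric tree into Theorem~\ref{theo:ptime-implicit-rep}. Representing balls by common prefixes of head values and invoking the layered join tree guaranteed by free-connexness plus the absence of a disruptive trio are also exactly the right ingredients.

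However, there is a genuine gap precisely at the step you yourself flag as ``the crux.'' The concrete mechanism you give for $\irchildren$ --- representing a ball as a rank interval $[\ell,r]$ in the lexicographic direct-access structure and locating each child's right endpoint by binary search --- has delay $O(|Q|\cdot\log^2|D|)$ per child, which gives $O(k\cdot|Q|\cdot\log^2|D|)$ for the traversal; this neither matches the claimed $k\cdot|Q|$ term nor is it dominated by the preprocessing term, since $k$ may greatly exceed $|D|$. The promised repair --- that the Carmeli et al.\ structure ``or a mild augmentation of it'' supports an $O(|Q|)$-time (in fact constant-delay) ``next distinct value at level $i{+}1$'' operation --- is asserted but never established, and it is exactly the technical content of the theorem. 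The paper avoids the detour through ranked access altogether: after the semijoin reduction and index construction on the layered join tree, the atom $R_i$ that is the root of the subtree of the join tree containing $\bx[i]$ mentions only head variables $\bx[j]$ with $j\le i$; hence, once the prefix $c_1,\dots,c_{i-1}$ is fixed, the admissible values for $\bx[i]$ are precisely the join partners in $R^D_i$ of the already-determined tuple of the parent atom $R_j$, and the precomputed indexes enumerate these with constant delay, with an additional $O(|Q|)$ per child to extend its prefix to maximal length. To complete your proof you would need to supply this (or an equivalent) argument rather than defer it.
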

\begin{proof}[Proof Sketch]
	Similar to Theorem~\ref{cor:acq-evaluation}, we use the prefix of tuples to represent balls and take advantage of the structure of a join tree and the absence of disruptive trios to implement an index over $D$. By~\cite{DBLP:journals/tods/CarmeliTGKR23}, the absence of disruptive trios ensures the existence of a \emph{layered join tree} whose layers follow the order of the variables in the head of $Q$. Then, by using the layered join tree, we can compute an index in time $O(|Q| \cdot |D|\cdot \log(|D|))$ that we can use to calculate the next data value in a prefix, like in~Theorem~\ref{cor:acq-evaluation}, but now in time $O(|Q|)$.
	Thus, after a common preprocessing phase, $\irroot, \irchildren$ and $\irmember$ run in time $O(|Q|)$.
\end{proof}

Note that in data complexity, the only remaining non-(quasi)linear term in Theorem~\ref{theo:disruptive-trio} is $k^2\cdot \fdelta(k)$ which arises since we have to reevaluate $\delta$ at each step to find the next greedily best pick.
For some specific diversity function this may not be necessary. 
As an example, for the Weitzman diversity function $\deltaW$ we can get rid of this term $k^2\cdot \fdelta(k)$ by smartly keeping track of which answer maximizes the diversity next.

\begin{theorem}\label{theo:disruptive-trio-W} 
	Let $\deltaW$ be the Weitzman diversity function over the ultrametric $\udR$. 
	Given a free-connex ACQ $Q$ without a disruptive trio, a relational database $D$, and a value $k$ (in unary), a $k$-diverse subset of $\sem{Q}(D)$ with respect to $\deltaW$ can be computed in time $O(|Q| \cdot |D|\cdot \log(|D|)  + k \cdot |Q|)$.
\end{theorem}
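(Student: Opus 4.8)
The plan is to run the greedy algorithm behind Theorems~\ref{theo:ptime-implicit-rep} and~\ref{theo:disruptive-trio} and to remove its only ingredient that does not already fit into the budget $O(|Q|\cdot|D|\cdot\log|D| + k\cdot|Q|)$, namely the re-evaluation of the diversity function at every step. Since $\deltaW$ is subset-monotone when it extends an ultrametric (Proposition~\ref{propo:subsetMonotone}), the greedy strategy is correct: we start from $S=\{\,\irmember(I,\irroot(I))\,\}$ and, $k-1$ times, add to $S$ a candidate $l$ from the current frontier of $\utree[\sem{Q}(D)]$ maximizing $\deltaW(S\cup\{l\})$. The frontier is navigated and maintained exactly as in the proof of Theorem~\ref{theo:disruptive-trio}: from the absence of a disruptive trio we obtain a layered join tree~\cite{DBLP:journals/tods/CarmeliTGKR23} and, after $O(|Q|\cdot|D|\cdot\log|D|)$ preprocessing, the operations $\irroot$, $\irchildren$, $\irmember$ run in time $O(|Q|)$ and the candidate set is kept up to date within a total budget of $O(k\cdot|Q|)$ over the whole run.

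The key observation is that, for an ultrametric, Equation~(\ref{equ:derivative}) holds for \emph{every} element, and hence $\deltaW(S\cup\{l\}) = \deltaW(S) + \udR(l,S)$ for every $l\notin S$. Thus maximizing $\deltaW(S\cup\{l\})$ amounts to choosing $l$ farthest from $S$, and we never need to recompute $\deltaW$: it suffices to keep the running value $\deltaW(S)$ (updated by adding the increment of the element just chosen) together with, for each frontier ball $B$, the value $\udR(b,S)$, which by Property~\ref{prop:radius} is the same for all $b\in B$ and equals $\radius(\tparent(B))$, where $\tparent(B)$ is the closest ancestor of $B$ in $\utree[\sem{Q}(D)]$ that already meets $S$. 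For the ultrametric $\udR$ this is a power $2^{-j}$ with $j$ at most the arity of $Q$, and it is available for free from the prefix-based representation of balls used by the frontier data structure. The greedy step thus reduces to extracting a frontier ball whose associated radius is maximal.

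What remains is to realize a priority queue over frontier balls keyed by $\radius(\tparent(B))$ within the $O(k\cdot|Q|)$ budget, i.e., without any logarithmic overhead. Two facts suffice. First, all answers of $Q$ are tuples over the same relation of some arity $a\le|Q|$, so only the $a$ radii $2^{-1},\dots,2^{-a}$ ever occur as keys; we can therefore bucket frontier balls by their associated level in an array of size $a$. Second, enlarging $S$ can only decrease distances, so the sequence of keys extracted over the $k$ steps is non-increasing; a single pointer to the current maximal non-empty bucket, moving monotonically for a total movement of $O(|Q|)$, yields $O(1)$ amortized time per queue operation. Combining the $O(|Q|\cdot|D|\cdot\log|D|)$ preprocessing, the $O(k\cdot|Q|)$ frontier maintenance from Theorems~\ref{theo:ptime-implicit-rep} and~\ref{theo:disruptive-trio}, and the $O(k\cdot|Q|)$ total cost of the $k$ now evaluation-free greedy steps gives the claimed bound.

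The part I expect to require the most care is making the priority-queue argument precise and checking that it dovetails with the existing frontier maintenance: when an element $l$ is added to $S$, the frontier ball that contained it is replaced by new frontier balls — one for each sibling met along the root-to-leaf path of $l$ in $\utree[\sem{Q}(D)]$ — and one must argue, reusing the on-demand enumeration via $\irchildren$ and the accounting already set up in the proof of Theorem~\ref{theo:disruptive-trio}, that creating these balls, computing a representative of each via $\irmember$, and inserting them into the correct buckets stays within $O(k\cdot|Q|)$ overall rather than blowing up by a factor of $|Q|$. Everything else — correctness of the greedy, the identity $\deltaW(S\cup\{l\}) = \deltaW(S) + \udR(l,S)$, and the monotonicity of the extracted keys — is immediate from the results already established.
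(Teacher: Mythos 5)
Your proposal is correct and follows essentially the same route as the paper: exploit the identity $\deltaW(S\cup\{b\})-\deltaW(S)=\udR(b,S)=2^{-|\bar{p}|-1}$ for a candidate from a frontier ball with prefix $\bar{p}$, and replace the repeated evaluation of $\deltaW$ by bucketing the frontier by prefix length in an array of size $O(|Q|)$. The paper settles for an $O(|Q|)$ scan of that array per greedy step (which already yields the $O(k\cdot|Q|)$ term), so your extra monotone-pointer amortization is a harmless refinement rather than a needed ingredient.
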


	\section{Related work}\label{sec:relatedwork}

\paragraph{Diversification}
Aiming for a small, {\em diverse} subset of the solutions has been adopted in many areas 
as a viable strategy of dealing 
with a solution space that might possibly be overwhelmingly big.
This is, in particular, the case in data mining, information retrieval, and web science, where
the term {\em ``diversification of search results''} is commonly used for the process of 
extracting a small diverse subset from a huge set of solutions, see e.g., 
\cite{DBLP:conf/kdd/SuD0W22,%
DBLP:conf/sigir/DemidovaFZN10,%
DBLP:journals/ftir/SantosMO15,%
DBLP:conf/www/GollapudiS09,%
DBLP:conf/www/SantosMO10} and the surveys  
\cite{DBLP:journals/kais/ZhengWQLG17,TKDE2024Survey}.
The diversity of solutions has also been intensively studied by the Artificial Intelligence (AI) community. 
Notably, this is the case in subfields of AI which are most closely related to 
database research, namely constraint satisfaction (recall that, from a logical point of view, 
solving constraint satisfaction problems and evaluating 
conjunctive queries are equivalent tasks) 
\cite{DBLP:conf/aaai/IngmarBST20,%
DBLP:conf/aaai/HebrardHOW05,%
DBLP:conf/ijcai/HebrardOW07,%
DBLP:conf/ijcai/PetitT15}
and answer set programming (which corresponds to datalog with unrestricted negation under stable model semantics) \cite{DBLP:journals/tplp/EiterEEF13}. 
In the database community, the diversification of query answers has been on the agenda for over a 
decade: the computation of diverse query results was studied in ~\cite{DBLP:journals/debu/VeeSA09,%
DBLP:conf/icde/VeeSSBA08} for relational data and in 
\cite{DBLP:journals/pvldb/LiuSC09} for XML data.
In \cite{DBLP:journals/pvldb/VieiraRBHSTT11}, a system was presented 
with an extension of SQL to allow for requesting diverse answers. 
In~\cite{DBLP:journals/tods/DengF14}, query result diversification is studied as 
a ``bi-criteria'' optimization problem that aims at finding $k$ query answers that maximize both, the diversity and the relevance of the answers.
Recent publications witness the renewed interest in the diversity of query answers
by the database systems community~\cite{DBLP:journals/vldb/NikookarEBSAR23,%
DBLP:journals/vldb/IslamAAR23} and the database theory community~\cite{ArenasCJR21,MerklPS23}.

\paragraph{Measuring diversity}
In \cite{DBLP:conf/aaai/IngmarBST20}, a whole framework for dealing with the 
diversity of subsets of the solutions
has been proposed. There, diversity is defined by first defining the distance between two solutions and then 
combining the pairwise distances via an aggregate function such as, for instance, 
sum, min, or max.
In \cite{weitzman1992diversity}, 
a more sophisticated way of aggregating pairwise distances has 
led to the definition of the diversity function $\deltaW$, 
which we have had a closer look at in our work.
Yet more complexity was introduced in 
\cite{DBLP:conf/kr/SchwindOCI16}, where the diversity of a subset of solutions not only takes the 
relationships between the chosen solutions into account but also their relationship with the solutions 
excluded from the subset.

For the distance between two solutions, any metric can be used. As is argued in \cite{weitzman1992diversity},
it ultimately depends on the application context which distance function (and, consequently, which diversity function)
is appropriate. Note that 
diversity and similarity can be seen as two sides of the same medal. Hence, all 
kinds of similarity measures studied in the 
data mining and 
information retrieval communities are, in principle, also candidates for the distance function;
e.g., the Minkowski distance with Manhattan and Euclidean
distance as important special cases as well as Cosine distance when solutions are represented as vectors, the edit
distance for solutions as strings, or the Jaccard Index for solutions as sets, etc., 
see e.g., \cite{book/GanEtAl2007,annDataScienceXuTian15}.

\paragraph{Ultrametrics}
The study of ultrametrics started in various areas of mathematics (such as real analysis, number theory, and 
general topology -- see \cite{ultrametrization}) 
in the early 20th century. Ultrametrics are particularly well suited for hierarchical clustering and, as such, 
they have many applications in various sciences such as 
psychology, physics, and biology
(see, e.g., \cite{UltrametricModelOfMind,%
UltrametricPhysics,%
weitzman1992diversity}) and, of course, 
also in data mining, 
see e.g., \cite{DataMiningHandbookArticle2007}.
Ultrametrics have also been used in database research and 
related areas: 
in \cite{DBLP:journals/tcs/HitzlerS03}, 
several convergence criteria for the fixed-point iteration of datalog programs (or, more generally, logic programs) with negation are defined. To this end, the set of possible ground atoms is divided into {\em levels} and 
two sets of ground atoms are considered as more diverse if they differ on an earlier level. Clearly, this is 
an ultrametric. 
In \cite{DBLP:journals/debu/VeeSA09,%
DBLP:conf/icde/VeeSSBA08},
the distance between tuples is defined by imposing an order on the attributes and considering two 
tuples as more diverse if they differ on an earlier attribute in this ordering
(see also Example~\ref{ex:tuplesUltraMetric} in the current paper). Again, this is clearly an ultrametric, even though it was not 
explicitly named as such in \cite{DBLP:journals/debu/VeeSA09,%
DBLP:conf/icde/VeeSSBA08}.

\paragraph{Computing diversity}
It should be noted that searching for diverse sets 
is, in general, an intractable problem. 
For instance, in \cite{MerklPS23}, NP-completeness of the 
$\diversityExplicit[\delta]$ problem%
\footnote{Strictly speaking, in \cite{MerklPS23}, $S$ was defined as the result set of an FO-query.
However, 
since data complexity was considered, we can of course compute $S$ upfront in polynomial time and may,
therefore, assume $S$ to be explicitly given.}
was proved even for the simple setting
where $S$ is a set of tuples of arity five and defining the 
diversity via the sum or min of the pairwise Hamming distances.  
Consequently, 
approximations or heuristics are typically proposed 
to compute diverse sets (see \cite{TKDE2024Survey} for a very recent survey on 
diversification methods).
In \cite{DBLP:journals/ai/BasteFJMOPR22}, the parameterized version of the $\diversityExplicit[\delta]$ problem was studied for cases where 
the problem of 
deciding the existence of a solution is fixed-parameter tractable (FPT)
w.r.t.\ the treewidth. 
As a 
    prototypical problem, the Vertex Cover problem was studied and it was shown that, when defining the
    diversity as the sum of the pairwise Hamming distances, then the $\diversityExplicit[\delta]$
problem is FPT w.r.t.\ the treewidth $w$ and the size $k$ of the desired diversity set. 
Moreover, it was argued in 
 \cite{DBLP:journals/ai/BasteFJMOPR22}
that analogous FPT-results for the $\diversityExplicit[\delta]$ problem apply to 
virtually any problem where the decision of the existence of a solution 
is FPT w.r.t.\ the treewidth.
In \cite{MerklPS23}, the $\diversityExplicit[\delta]$ problem was shown FPT
w.r.t.\ the size $k$ of the desired diversity set, when considering $S$ as a set of tuples and defining 
the diversity via a monotone aggregate function 
over the pairwise Hamming distances of the tuples.

However, {\em tractable}, {\em exact} methods for computing diverse sets
are largely missing with one notable exception: 
in \cite{DBLP:conf/icde/VeeSSBA08},
an efficient method for solving the $\diversityExplicit[\delta]$ problem
in a very specific setting is presented, where the diversity $\delta$ is defined as the sum over the ultrametric defined via an ordering of the attributes as recalled in Example~\ref{ex:tuplesUltraMetric}. 
Assuming the existence of a tree representation of the relation $S$ 
in the style of a Dewey tree known from 
XML query processing~\cite{DBLP:conf/sigmod/TatarinovVBSSZ02}, 
the algorithm in~\cite{DBLP:conf/icde/VeeSSBA08} finds a $k$-diverse set in $O(k)$ time.
Other than that, the field of tractable diversity computation is wide open and the main goal of this work is to fill this gap.
 	
	\section{Conclusions}\label{sec:conclusions}

In this work, we have studied the complexity of 3 levels of {\em diversity} problems. 
For the most basic problem $\diversityComputation[\delta]$ of computing the diversity $\delta(S)$ for a given set $S$ of elements, we have closed a problem left open in \cite{weitzman1992diversity} by 
proving intractability of this problem in case of the Weitzman diversity measure $\deltaW$.
We have then pinpointed the boundary between tractability and intractability for 
both, the $\diversityExplicit[\delta]$ and the $\diversityImplicit[\delta]$ problems
(i.e., the problems of maximizing the diversity of an explicitly or implicitly given 
set $S$, respectively) 
in terms of monotonicity properties of the diversity function $\delta$ extending 
an ultrametric.  
In particular, this has allowed us to identify tractable cases of 
the $\diversityImplicit[\delta]$ problem when considering acyclic conjunctive queries.

There are several natural directions of generalizing our results: clearly, they are  naturally extended to more general query classes than  
acyclic CQs such as
CQs with bounded (generalized or fractional) 
hypertree-width~\cite{DBLP:conf/pods/GottlobGLS16}. 
Less obvious is the extension of our tractability results to more general ultrametrics, which -- in addition to determining the first attribute (in a given order) {\em where} two tuples differ -- also introduce a measure 
{\em by how much} (again expressed as an ultrametric) the tuples differ in that attribute. 
Other directions of future work are concerned with studying other query languages over other data models such as (possibly restricted forms of) RPQs over graph data. 

\section*{Acknowledgments}
The work of Merkl and Pichler was supported by the Vienna Science and Technology Fund (WWTF) [10.47379/ICT2201]. The work of Arenas and Riveros was funded by ANID – Millennium Science Initiative Program – Code
ICN17\_0. Riveros was also funded by ANID Fondecyt Regular project 1230935. 
 
	\bibliographystyle{abbrv}
	\bibliography{biblio}
	
	\newpage
	\appendix
	\onecolumn

 	\section{Proofs of Section~\ref{sec:hardness}} 
    \label{app:hardnessWeitzman}

\subsection{Proof of Theorem~\ref{thm:weitzman:NP}}
	
\begin{proof}
Recall that we prove the 
$\mathsf{NP}$-hardness 
of $\diversityComputation[\deltaW]$
by reduction from the  \textsc{Independent Set} problem.
Let an arbitrary instance of \textsc{Independent Set} be given by a graph 
$G=(V, E)$ and integer $k$. Let  $|V| = n$. 
Then we set $S = V$ and 
$th = n - k + 2(k-1)$,
and we define the distance function $\dd$ on $S$ as follows: 
	$$
	\dd(u,v) = \begin{cases} 
		0 & \text{if $u = v$} \\  
		1 & \text{if $u$ and $v$ are adjacent in $G$} \\  
		2 & \text{otherwise}  
	\end{cases} 
	$$

 We first verify that $\dd$ satisfies the conditions of a metric: (1) $\dd(a,b) \geq 0$ for all $a,b \in S$ and
	$\dd(a,b) =  0$, if and only if $a = b$; 
	(2) $d$ is symmetric, i.e., $\dd(a,b) = \dd(b,a)$ for all $a,b \in S$; and 
	(3) for all pairwise distinct $a,b, \in S$, the triangle inequality holds, i.e., $\dd(a,c) \leq \dd(a,b) + \dd(b,c)$: indeed, this inequality holds, since we have $\dd(a,c) \leq 2$ and 
	$\dd(a,b), \dd(b,c) \geq 1$.
	It remains to show that $G$ has an independent set of size $k$, if and only if 
	$\deltaW(S) \geq th$. 
	
	For the ``only if'' direction, suppose that there exists an independent set $I\subseteq V$ of $G$ of size $k$.
	Let $S = V = \{v_1, v_2, \dots, v_n\}$ be an enumeration of the elements in $S$ such that the independent set 
	$I$ of $G$ is of the form $I = \{v_1,\dots, v_k\}$. 
	By Equation~(\ref{equ:deltaWS}), the following inequality holds: 
 \[
 \deltaW(S) \geq 
		   \sum_{i=k+1}^{n} \dd(v_{i}, \{v_{1}, \dots, v_{i-1} \}) + 
		\sum_{j=2}^{k}  \dd(v_{j}, \{v_{1}, \dots, v_{j-1} \})
	\]
Here, the first sum adds up the distances of $n-k$ elements $v_i$ from 
elements occurring before $v_i$ in this enumeration of $V = S$. 
Each such distance is at least 1 and, therefore, we have 
$\sum_{i=k+1}^{n} \dd(v_{i}, \{v_{1}, \dots, v_{i-1} \}) \geq n - k$.
For the second sum, recall that we are enumerating the elements in $V$ in such a way, that the first $k$ 
elements form an independent set of $G$.
Hence, by our definition of the distance function $\dd$, any vertex $v_j$ with $j \leq k$
has distance 2 from all other vertices in $\{v_{1}, \dots, v_{j-1} \}$. Hence, 
	$\sum_{j=2}^{k}  \dd(v_{j}, \{v_{1}, \dots, v_{j-1} \}) = 2(k-1)$ holds and, therefore, 
	$\deltaW(S) \geq n - k + 2(k-1) = th$ holds, as desired.
	
For the ``if''-direction, suppose that $\deltaW(S)\geq th =  n - k + 2(k-1)$ holds.
Let $\{v_1, \dots, v_n\}$ be an enumeration of the elements in $S = V$ that witnesses this diversity
according to Equation~(\ref{equ:deltaWS}), i.e., 
\[
		\deltaW(S)   = \ \sum_{i=2}^{n} \dd(v_{i}, \{v_{1}, \dots, v_{i-1} \})      
\]
By the definition of our distance function $d$, we have either $\dd(a,b) =1 $ or $\dd(a,b) = 2 $ for any two distinct elements $a,b \in S$. 
Hence, in order to get $\deltaW(S)\geq th =  n - k + 2(k-1)$,
there exist  (at least) $k-1$ elements 
	$v_{i_{k}}, \dots, v_{i_2}$, such that $i_{j}> i_\ell$ for $j > \ell$ and 
	$\dd(v_{i_j}, \{v_1,\dots, v_{i_{j}-1}\}) = 2.$
Moreover, we set $v_{i_1} := v_1$.
	Then $\dd(v_{i_j}, v_{i_\ell}) \geq \dd(v_{i_j}, \{v_1,\dots, v_{i_{j}-1}\}) = 2$ for all $j > \ell$. 
	Hence, by the definition of distance function $d$, 
	$v_{i_{k}}, \dots, v_{i_2}, v_{i_1}$ are pairwise non-adjacent in $G$, i.e., 
	they form an independent set of size $k$ in $G$.
\end{proof}

\subsection{Proof of Theorem~\ref{thm:weitzman:NPtuples}}
	
\begin{proof}
Suppose that an arbitrary instance of the \textsc{Independent Set} problem 
is given by a graph $G =(V,E)$ and integer $k$ with 
$V=\{v_1, \dots, v_n\}$ and $E = \{e_1, \dots, e_m\}$.
As was shown in \cite{DBLP:conf/ciac/AlimontiK97}, we may assume
that all vertices in $V$ have degree at most 3.

We define an instance of the $\diversityComputation[\deltaW]$ problem, 
where $S = \{t_1, \dots, t_n\}$ is a set of tuples and $th = 4 (n-k) + 5(k-1)$. 
The tuples in $S$ are constructed as follows (adapted from a construction used in \cite{MerklPS23}): We take as domain 
the set $\mathbb{D} = \{a_1, \dots, a_n, b_1, \dots, b_m\}$. 
The tuples in $S$  have arity 5, and 
we denote the $i$-th tuple $t_i$ in $S$
as $(t_{i,1}, \dots, t_{i,5})$. Then we construct the tuples $t_1, \dots, t_n$
by an iterative process as follows.

(1) First, for every $i \in \{1, \dots, n\}$, we initialize  $t_i = (a_i,a_i,a_i,a_i,a_i)$.
(2) Then we iterate through all edges $e_j \in E$ and do the following:  let $v_i$ and $v_{i'}$
be the endpoints of $e_j$. Let $k$ be an index, such that $t_{i,k} = a_i$ and $t_{i',k} = a_{i'}$, 
i.e., in the $k$-component, both $t_i$ and $t_{i'}$ still have their initial value. 
Then we set both $t_i$ and $t_{i'}$ to $e_j$.

Note that, in step (2) above, when processing an edge $e_j$, it is guaranteed that 
an index $k$ with $t_{i,k} = a_i$ and $t_{i',k} = a_{i'}$ exists. Here we
make use of the assumption that all vertices have degree at most 3. Hence, when processing 
edge $e_j$, each of the vertices $v_i$ and $v_{i'}$ has been considered at most twice in step (2) before.
Hence, in both $t_i$ and $t_{i'}$, at least 3 components still have the initial value 
$a_i$ and $a_{i'}$, respectively. By the pigeonhole principle, we may conclude that there
exists an index $k$, such that both $t_{i,k} = a_i$ and $t_{i',k} = a_{i'}$ still hold. 

As far as the Hamming distance $\dd(\cdot,\cdot)$ between any two distinct tuples 
$t_i,t_j$, is concerned, it is 
easy to verify that $\dd(t_i,t_j) = 4$ if $v_i,v_j$ are adjacent in $G$ and $\dd(t_i,t_j) = 5$  otherwise. 
Then the correctness proof of our problem reduction follows exactly the same lines as the proof of 
Theorem~\ref{thm:weitzman:NP} worked out above:
	
For the ``only if'' direction, suppose that there exists an independent set $I\subseteq V$ of $G$ of size $k$.
Let $V = \{v_1, v_2, \dots, v_n\}$ be an enumeration of the vertices in $V$ such that the independent set $I$ of $G$ is of the form $I = \{v_1,\dots, v_k\}$. 
By Equation~(\ref{equ:deltaWS}), the following inequality holds: 
 $$\deltaW(S) \geq 
		   \sum_{i=k+1}^{n} \dd(t_{i}, \{t_{1}, \dots, t_{i-1} \}) + 
		\sum_{j=2}^{k}  \dd(t_{j}, \{t_{1}, \dots, t_{j-1} \})$$
Again, the first sum adds up the distances of $n-k$ elements $t_i$ from 
elements occurring before $t_i$ in the enumeration of $S$ corresponding 
to the enumeration of $V$. 
Each such distance is at least 4 and, therefore, we have 
$\sum_{i=k+1}^{n} \dd(t_{i}, \{t_{1}, \dots, t_{i-1} \}) \geq 4(n - k)$.
For the second sum, recall that we are enumerating the elements in $V$ in such a way, that the first $k$ 
elements form an independent set of $G$.
Hence, as observed above, any tuple $t_j$ with $j \leq k$
has Hamming distance 5 from all other tuples in 
$\{t_{1}, \dots, t_{j-1} \}$. Hence, 
	$\sum_{j=2}^{k}  \dd(t_{j}, \{t_{1}, \dots, t_{j-1} \}) = 5(k-1)$ holds and, therefore, 
	$\deltaW(S) \geq 4(n - k) + 5(k-1) = th$ holds, as desired.
	
For the ``if''-direction, suppose that $\deltaW(S)\geq th =  4(n - k) + 5(k-1)$ holds.
Let $\{t_1, \dots, t_n\}$ be an enumeration of the elements in $S$ that witnesses this diversity
according to Equation~(\ref{equ:deltaWS}), i.e., 
$$
		\deltaW(S)   = \ \sum_{i=2}^{n} \dd(t_{i}, \{t_{1}, \dots, t_{i-1} \})      
$$
As observed above, for the Hamming distance $d$,
we have either $\dd(t,t') = 4 $ or $\dd(t,t') = 5 $ 
for any two distinct tuples $t,t' \in S$. 
Hence, in order to get $\deltaW(S)\geq th =  4(n - k) + 5(k-1)$,
there exist  (at least) $k-1$ elements 
	$t_{i_{k}}, \dots, t_{i_2}$, such that $i_{j}> i_\ell$ for $j > \ell$ and 
	$\dd(t_{i_j}, \{t_1,\dots, t_{i_{j}-1}\}) = 5.$
Moreover,  we set $t_{i_1} := t_1$.
	Then $\dd(t_{i_j}, t_{i_\ell}) \geq \dd(t_{i_j}, \{t_1,\dots, t_{i_{j}-1}\}) = 5$ for all $j > \ell$. 
	Hence, by the above observation on the Hamming distance between 
 tuples in $S$, 
	$v_{i_{k}}, \dots, v_{i_2}, v_{i_1}$ are pairwise non-adjacent in $G$, i.e., 
	they form an independent set of size $k$ in $G$.
\end{proof}

	\section{Proofs of Section~\ref{sec:ultrametric}} \label{app:ultrametric}

\subsection{Proof of Property~\ref{prop:ultrametric}}
\begin{proof}
We first prove Property (b): 
Suppose that $\ud(a_1,a_2) \leq r$. We have to show that then $\Ball(a_1, r) = \Ball(a_2, r)$ holds. We prove $\Ball(a_1, r) \subseteq \Ball(a_2, r)$. The other direction is symmetric.

Consider an arbitrary element $a \in \Ball(a_1, r)$. 
We have to show that also $a \in \Ball(a_2, r)$ holds.
By the strong triangle inequality, we have
$\ud(a_2,a) \leq \max\{\ud(a_2, a_1), \ud(a_1, a)\}$. Moreover, 
we are assuming $a \in \Ball(a_1, r)$. Hence, $\ud(a_1,a) \leq r$ holds. 
Together with $\ud(a_1,a_2) \leq r$ and by the symmetry property of metrics,
we thus have 
$\max\{\ud(a_2, a_1), \ud(a_1, a)\} \leq r$ and, therefore,
$\ud(a_2,a) \leq r$, i.e., $a \in \Ball(a_2, r)$.

To prove Property (a), let $\Ball(a_1, r_1) = B_1$ and $\Ball(a_2, r_2) = B_2$ for some $a_1, a_2 \in \U$ and $r_1, r_2 \in \bbQgeqz$. Moreover, suppose that $B_1 \cap B_2 \neq \emptyset$
holds. We have to show that then $B_1 \subseteq B_2$ or $B_2 \subseteq B_1$ holds.

Let $a \in B_1 \cap B_2$. Then, by definition, 
we have $\ud(a_1, a) \leq r_1$ and $\ud(a_2, a) \leq r_2$. 
By the strong triangle inequality, we have
$\ud(a_1,a_2) \leq \max\{\ud(a_1, a), \ud(a, a_2)\}$. 
Now suppose that $r_1 \leq r_2$ holds (the other case is symmetric). 
Then $\max\{\ud(a_1, a), \ud(a, a_2)\} \leq r_2$ and, therefore, 
also $\ud(a_1,a_2) \leq r_2$. 
By Property (b), we conclude that then $\Ball(a_1, r_2) = \Ball(a_2, r_2)$ holds. 
We are assuming $r_1 \leq r_2$. Hence, we have $\Ball(a_1, r_1) \subseteq \Ball(a_2, r_2)$.
\end{proof}

\subsection{Proof of Property~\ref{prop:radius}}
\begin{proof}
By contradiction, assume that there exists $a_1 \in B_1$ and $a_2 \in B_2$ such that $\ud(a_1, a_2) = r^* < \radius(B)$ (it cannot be bigger by the definition of $\radius(B)$). By Property~\ref{prop:ultrametric}(b), we know that $\Ball(a_1, r^*) = \Ball(a_2, r^*) = B^*  \in \Ball(\U)$ and $B^* \subsetneq B$. 
By $\tparent(B_1) = B$ it cannot happen that $B_1 \subsetneq B^*$, i.e., 
we cannot have $r_1 < r^* $. Instead, $r^* \leq r_1$ and, therefore, $B^* \subseteq B_1$ holds. 
Likewise, by $\tparent(B_2) = B$, it cannot happen that $B_2 \subsetneq B^*$, i.e., 
we cannot have $r_2 < r^* $. Instead, $r^* \leq r_2$ and, therefore, $B^* \subseteq B_2$ holds. 
Hence, by $B^* \subseteq B_1$ and $B^* \subseteq B_2$, we have 
$B^* \subseteq B_1 \cap B_2$ and, therefore,  
$B_1 \cap B_2 \neq \emptyset$, which contradicts the fact that $B_1$ and $B_2$ are disjoint (i.e., as children of $B$).   
\end{proof}

\subsection{Proof of Property~\ref{prop:build-tree}}

The construction of $\utree$ and $\radius(\cdot)$ works as follows.
First, we construct the graph $G=(S,S\times S)$ with edge weights $\ud$.
Second, we compute a minimum-weight spanning tree $\mathcal{T}$ of $G$.
Third, we initialize pointers $B$ from each $a\in S$ to $B(a)=\{a\}$.
Furthermore, for each $B(a)$ we store $C(B(a))=\emptyset$, $r(B(a))=0$.

Then, we iterate over the edges $uv$ in $\mathcal{T}$ in increasing order of the weight $\ud$ and do the following:
\begin{enumerate}
	\item Define $B_{new}=B(u)\cup B(v)$ and $r(B_{new})=\ud(u,v)$ and $C(B_{new}) = \emptyset$
	\item If $r(B(u))< \ud(u,v)$, add $B(u)$ to $\utree$ with children $C(B(u))$ and set $\radius(B(u)) = r(B(u))$.
	Furthermore, add $B(u)$ to $C(B_{new})$.
	\item Else, simply add $C(B(u))$ to $C(B_{new})$.
	\item If $r(B(v))< \ud(u,v)$, add $B(v)$ to $\utree$ with children $C(B(v))$.
	Furthermore, add $B(v)$ to $C(B_{new})$.
	\item Else, simply add $C(B(v))$ to $C(B_{new})$.
	\item Let the pointers of all $a\in B_{new}$ point to $B_{new}$, i.e., $B(a):=B_{new}$.
\end{enumerate}
In the end, we are left with the set $S$ every $a\in S$ points to, i.e, $B(a) = S$.
Therefore, finally, we also add $S$ to $\utree$ as the root with children $C(S)$ and set $\radius(\cdot) = r(S)$.

First note that this construction takes time $O(|S|^2)$ as there are only $O(|S|)$ many edges and each edge $uv$ is processed in linear time.

Now to the correctness.
Let $\mathcal{T}(a,r)$ be the subtree of $\mathcal{T}$ consisting of the nodes reachable from $a$ by using only edges with weight $\leq r$.
The algorithm iteratively computes these with $\mathcal{T}(u,r(B(u)))=B(u)$ whenever an edge $uv\in \mathcal{T}$ with $\ud(u,v)>r(B(u))$ is considered.
Furthermore, at that point, $C(B(u))$ are exactly those $\mathcal{T}(a,r)$ such that $a\in B(u)$, $r < r(B(u))$ but $r\leq r' < r(B(u))$ implies $\mathcal{T}(a,r) = \mathcal{T}(a,r')$.

We claim that $V(\mathcal{T}(a,r)) = \Ball_S(a,r)$ which would imply the correctness of the algorithm.
To that end, consider a $b\in \Ball_S(a,r)$.
In $G$, the edge $ab$ has weight $\ud(a,b) \leq r$ and thus, there must be a path from $a$ to $b$ in $\mathcal{T}$ only using edges with weights $\leq \ud(a,b)$ as $\mathcal{T}$ is a minimal-weight spanning tree.
Therefore, $b\in V(\mathcal{T}(a,r))$.

Conversely, consider a $b\in V(\mathcal{T}(a,r))$, i.e., there is a path $(a= p_1,\dots,p_l =b)$ in $\mathcal{T}$ with $\ud(p_i,p_{i+1}) \leq r$.
Thus, 
\[
\ud(a,b) \leq \max\{\ud (p_1,p_2), \ud (p_2,p_l)\}\leq \dots \leq \max\{\ud (p_1,p_2), \ud (p_2,p_3), \dots, \ud(p_{l-1},p_{l})\}\leq r
\]
and $b\in \Ball_S(a,r)$.

	\section{Proofs of Section~\ref{sec:explicit-rep}} \label{app:explicit-rep}

\subsection{Proof of Proposition \ref{propo:subsetMonotone}}

\begin{proof}	
	Let $A =\{a_1,\dots, a_t\} \subseteq \U,$ $B = \{b_1, \ldots, b_\ell\} \subseteq \U,$ and $B' = \{b_1', \ldots, b_\ell'\} \subseteq \U$ be such that $A \cap B = \emptyset = A \cap B'$, $\delta(B) \leq \delta(B'),$ and $\dd(a, b_i) = \dd(a, b_i')$ for every $a \in A$ and $i \leq \ell$.
	
	For $\delta =\deltasum$ we get
	\[
	\begin{array}{rcl}
		\deltasum(A \cup B) & = & \sum_{a, a' \in A} \dd(a, a') + \sum_{a \in A, b \in B} \dd(a,b) + \sum_{b,b'\in B} \dd(b,b')  \\
		& \leq & \sum_{a, a' \in A} \dd(a, a') + \sum_{a \in A, b \in B'} \dd(a,b) + \sum_{b,b'\in B'} \dd(b,b') \  = \  \deltasum(A \cup B').
	\end{array}
	\]
	
	For $\delta =\deltamin$ we get
	\[
	\begin{array}{rcl}
		\deltamin(A \cup B) & = & \min\{\min_{a\neq a' \in A} \dd(a, a'), \min_{a \in A, b \in B} \dd(a,b), \min_{b\neq b'\in B} \dd(b,b')\}  \\
		& \leq & \min\{\min_{a\neq a' \in A} \dd(a, a'), \min_{a \in A, b \in B'} \dd(a,b), \min_{b\neq b'\in B'} \dd(b,b')\} \\
		&  =  &  \deltamin(A \cup B').
	\end{array}
	\]
	
	For $\delta = \delta_W$ and an ultrametric distance function $\ud$ we get 
	\[
	\begin{array}{rcl}
		\deltaW(A \cup B) & = & \sum_{i=1}^t \min\{\ud(a_i,a_j),\ud(a_i,b)\mid i<j,b\in B\} + \sum_{i=1}^{l-1} \min\{\ud(b_i,b_j)\mid i<j\} \\
		& = & \sum_{i=1}^t \min\{\ud(a_i,a_j),\ud(a_i,b')\mid i<j,b'\in B'\} + \deltaW(B) \\
		& \leq & \sum_{i=1}^t \min\{\ud(a_i,a_j),\ud(a_i,b')\mid i<j,b'\in B'\} + \deltaW(B') \\
		& = & \sum_{i=1}^t \min\{\ud(a_i,a_j),\ud(a_i,b')\mid i<j,b'\in B'\} + \sum_{j=1}^{l-1} \min\{\ud(b_i,b_j)\mid i<j\} \\
		&  = & \deltamin(A \cup B').
	\end{array}
	\]	
	
	$\deltaW$ is not subset-monotone for arbitrary metrics $\dd$. 
	As an example consider $A = \{a,b_1,b_2,b_3\}, B=\{c_1,c_2,c_3,d_1,d_2\}, B'=\{c'_1,c'_2,c'_3,d'_1,d'_2\}$ with $\dd(a,b_i)=1=\dd(b_i,c_j)=\dd(c_i,d_j)=\dd(b_i,c'_j)=\dd(c'_i,d'_j)=\dd(c_i,c_j)=\dd(d'_1,d'_2)$. 
	All other distances equal $2$.
	Then, $\deltaW(A\cup B) = 12, \deltaW(A\cup B)' = 11, \deltaW(B)=5, \deltaW(B')=6$.
\end{proof}

\subsection{Proof of Theorem~\ref{theo:explicit-rep}}

\begin{proof}
	Let $S$ be a finite subset of the universe $\U$, $\ud$ an ultrametric over $\U$, and $\delta$ a subset-monotone diversity function of $\ud$. 
	By Proposition~\ref{prop:build-tree}, we can construct an ultrametric tree $\utree$ of $S$ in time $O(|S|^2)$. 
	For each vertex $B$ (i.e., a ball) of $\utree$, we maintain a function $\cand_B\colon \{0,\dots,\min\{k, |B|\}\} \rightarrow 2^{B}$ where $\cand_B(i)$ is a \emph{candidate} diverse subset of $B$ of size $i$. Formally, for every $i \in \{0,\dots,\min\{k, |B|\}\}$ we define:
	\[
	\cand_{B}(i) \ := \  \argmax_{A \subseteq B: |A| = i} \delta(A).
	\]
	Clearly, if we can compute $\cand_{S}$ for the root $S$ of $\utree$, then $\cand_{S}(k)$ is a diverse subset of $S$ of size $k$. 
	We can compute these functions $C_B$ in polynomial time using a dynamic programming approach.
	To that end, we compute $\cand_B$ for each $B \in \Ball_S$ in a bottom-up fashion over $\utree$. For every ball $B$, we can easily check that $\cand_B(0) = \emptyset$ and $\cand_B(1) = \{a\}$ for some $a \in B$. In particular, $\cand_{\{a\}} = \big\{0 \mapsto \emptyset, 1 \mapsto \{a\} \big\}$ is our base case for every leaf $\{a\}$ of $\utree$.
	For an inner vertex $B$ of $\utree$, the process is a bit more involved. 
	Let $B_1,\dots,B_l$ the children of $B$ in $\utree$ and assume that we already computed $\cand_{B_1},\dots, \cand_{B_l}$.
	
	We can now construct (see below) a binary tree $T_B$ with vertices $B_1,\dots, B_l, B_1^2:=\bigcup_{i=1}^2B_i, \dots, B_1^l:=\bigcup_{i=1}^lB_i=B$ and edges from $B_1^m=\bigcup_{i=1}^m B_i$ to $B_1^{m-1}=\bigcup_{i=1}^{m-1} B_i$ and $B_m$ for $1<m\leq l$.
	Thus, $B_1,\dots, B_l$ are the leaves and $B$ is the root of $T_B$.
	
	\begin{minipage}{0.48\textwidth}
		\begin{center}
			\begin{tikzpicture}[level distance=.9cm,
				level 2/.style={sibling distance=0.7cm}]

				\node (t2) at (5,0) {$B$}
				child { node {$B_1$} }
				child { node {$\dots$} }
				child { node {$B_l$} };
			\end{tikzpicture}
		\end{center}
	\end{minipage}
	\begin{minipage}{0.04\textwidth}
		$\implies$
	\end{minipage}
	\begin{minipage}{0.48\textwidth}
		\begin{center}
			\begin{tikzpicture}[level distance=.9cm,
				level 2/.style={sibling distance=0.9cm}]

				\node (t2) at (5,0) {$B=B_1^l$}
				child { node {$B_1^{l-1}$} 
					child { node {$\dots$} 
						child { node {$B_{1}^2$}
							child { node {$B_1$} }
							child { node {$B_2$} }
						}
						child {edge from parent[draw=white]} 
					}
					child { node {$B_{l-1}$} } edge from parent[draw=black]
				}
				child { node {$B_l$} };
			\end{tikzpicture}
		\end{center}
	\end{minipage}

	To construct $\cand_B$, we first construct intermediate results $\cand_{B_1^m}$ with 
	\[
	\cand_{B_1^m}(i) \ := \  \argmax_{A \subseteq {B_1^m}: |A| = i} \delta(A).
	\]
	
	We claim that, for every $i \in \{0,\dots,\min\{k, |B_1^m|\}\}$, we can calculate $\cand_{B_1^m}(i)$ as $\cand_{B_1^m}(i) = \cand_{B_1^{m-1}}(i_1) \cup \cand_{B_m}(i_2)$
	where:
	\[
	(i_1,i_2) = \argmax_{(j_1,j_2): j_1+j_2 = i} \delta(\cand_{B_1^{m-1}}(j_1) \cup \cand_{B_m}(j_2)).
	\]
	
	To see that the claim holds, let $A$ be a subset of $B_1^m$ with $i$-elements maximizing $\delta(A)$. Define $A_1 := A \cap B_1^{m-1}$ and $A_2 := A \cap B_m$, and their sizes $i_1 := |A_1|$ and $i_2 := |A_2|$, respectively. Due to the correctness of $\cand_{B_1^{m-1}}$, we know that $\delta(A_1) \leq \delta(\cand_{B_1^{m-1}}(i_1))$. Further, $\ud(a_1, a_2) = \ud(a_1',a_2) = \radius(B)$ for every $a_1 \in A_1, a_1' \in \cand_{B_1^{m-1}}(i_1), a_2 \in A_2$ by Property~\ref{prop:radius}. Then the conditions of weak subset-monotonicity are satisfied and $\delta(A_1 \cup A_2) \leq \delta(\cand_{B_1^{m-1}}(i_1) \cup A_2)$. Following the same argument, we can conclude that $\delta(\cand_{B_1^{m-1}}(i_1) \cup A_2) \leq \delta(\cand_{B_1^{m-1}}(i_1) \cup \cand_{B_m}(i_2))$, proving that $\cand_{B_1^{m-1}}(i_1) \cup \cand_{B_m}(i_2)$ is optimal. 
	
	Thus, given $\cand_{B_1^{m-1}}$ and $\cand_{B_m}$, computing $\cand_{B_1^{m}}$ takes time $O(k^2 \cdot \fdelta(k))$.
	Consequently, given $\cand_{B_1},\dots, \cand_{B_l}$, computing $\cand_{B}$ requires time $O(k^2 \cdot \fdelta(k)\cdot l)$.
	In total, given the ultrametric tree $\utree$ and proceeding bottom-up, we can compute $\cand_{S}$ in time $O(k^2 \cdot \fdelta(k)\cdot |S|)$
\end{proof}

\subsection{Proof of Theorem~\ref{theo:exp-hardness}}

\begin{proof}
	We first give the intuition behind the reduction and then present if formally.
	The strategy is to reduce from \textsc{SAT}.
	To that end, for a given formula $\phi(\bar{x})$, we define a set of tuples $S_\phi$ such that for every possible truth assignment $\alpha\colon \bar{x} \rightarrow \{t,f\}$ there is a subset $S'\subseteq S_\phi$ that encodes $\alpha$.
	Moreover, these subsets will all be equal in size and Pareto optimal.
	Thus, a diversity function $\delta$ defined as 
	\begin{align*}
		\delta(S') = \begin{cases}
			\ud(e_1,e_2) & S' = \{e_1, e_2\}\\
			|S'| & S' \text{ represents } \alpha_{S'}  \text{ and } \alpha_{S'} \vDash \phi\\
			|S'|-1 & \text{otherwise}
		\end{cases}
	\end{align*}
	then is monotone and there is a $S'$ s.t., $\delta(S')=|S'|$ if and only if $\phi$ is satisfiable.
	Technical complication occur, however, as $\delta$ is not allowed to depend on $\phi$ and $\delta$ is only allowed to use the pairwise distances among $S'$ to determine a value of diversity. 
	
	Now to the formal construction.
	Let $\phi(x_1,\dots,x_n)$ be a propositional formula whose binary representation $\bin(\phi)=(\bin(\phi)_1,\dots,\bin(\phi)_m)$ requires, w.l.o.g.,  $||\phi||=m = m(n) := cn\log n$ bits for some constant $c$.
	Note that $m\colon \mathbb{N}\rightarrow \mathbb{N}$ is an injection.
	The set of elements $S_\phi$ shall be tuples over $\{\phi, C_1,\dots, C_{m(3n+2m+2)}, B,1,0_1,\dots, 0_{m+1}, X,T,F_1,\dots,F_{n+1},D\}$.
	Concretely, $S$ (we omit $\phi$ from $S_\phi$ if $\phi$ is clear from the context) consists of the following elements $S=S^V \cup S^T \cup S^F \cup S^B \cup S^1 \cup S^0 \cup S^C$
	{\small
	\begin{alignat*}{10}
		S^V=\{&t^V_i=&&(\underbrace{\phi, \dots, \phi}_{m(2m+2n+3)}; V;&& \underbrace{D,\dots,D}_{2m+1};&&  \underbrace{D, \dots, D}_{i-1}, V,\underbrace{D, \dots, D}_{2n-2i+1}, V, \underbrace{D, \dots, D}_{i-1})&& \\
		&&&&&&&\quad\mid i=1,\dots, n\},\\
		S^T=\{&t^T_i=&&( \underbrace{\phi, \dots, \phi}_{m(2m+2n+3)}; V;&& \underbrace{D,\dots,D}_{2m+1};&&  \underbrace{D, \dots, D}_{i-1}, V,\underbrace{D, \dots, D}_{2n-2i+1}, T, \underbrace{D, \dots, D}_{i-1})&&\\
		&&&&&&&\quad\mid i=1,\dots, n\},\\
		S^F=\{&t^F_i=&&( \underbrace{\phi, \dots, \phi}_{m(2m+2n+3)}; V;&& \underbrace{D,\dots,D}_{2m+1};&&  \underbrace{D, \dots, D}_n, F_i, \underbrace{D, \dots, D}_n) &&\\
		&&&&&&&\quad\mid i=1,\dots, n+1\},\\
		S^B=\{&t^B_i=&&( \underbrace{\phi, \dots, \phi}_{m(2m+2n+3)}; B;&& \underbrace{D, \dots, D}_{i-1}, B,\underbrace{D, \dots, D}_{2m-2i+1}, B, \underbrace{D, \dots, D}_{i-1};&& \underbrace{D,\dots,D}_{2n+1})&& \\
		&&&&&&&\quad\mid i=1,\dots, m\},\\
		S^1=\{&t^1_i=&&( \underbrace{\phi, \dots, \phi}_{m(2m+2n+3)}; B;&& \underbrace{D, \dots, D}_{i-1}, B,\underbrace{D, \dots, D}_{2m-2i+1}, 1, \underbrace{D, \dots, D}_{i-1};&& \underbrace{D,\dots,D}_{2n+1}) &&\\
		&&&&&&&\quad\mid i=1,\dots, m \text{ and } 1 = \bin(\phi)_i\},\\
		S^0=\{&t^0_i=&&( \underbrace{\phi, \dots, \phi}_{m(2m+2n+3)}; B;&& \underbrace{D, \dots, D}_m, 0_i, \underbrace{D, \dots, D}_m;&& \underbrace{D,\dots,D}_{2n+1}) &&\\
		&&&&&&&\quad\mid i=1,\dots, m \text{ and } 0 = \bin(\phi)_i\} \\
		\cup \{&t^0_{m+1}=&&( \underbrace{\phi, \dots, \phi}_{m(2m+2n+3)}; B;&& \underbrace{D, \dots, D}_m, 0_{m+1}, \underbrace{D, \dots, D}_m;&& \underbrace{D,\dots,D}_{2n+1}) \}.&&\\
		S^C=\{& &&(\underbrace{\phi, \dots, \phi}_{m(2m+2n+3)}; C_i;&&\underbrace{D,\dots, D}_{2m+1};&&\underbrace{D,\dots, D}_{2n+1}) &&\\
		&&&&&&&\quad\mid i= 1, \dots, m(3n+2m+2) \}.
	\end{alignat*}
	}	

	The whole universe $\U$ consists of tuples that appear in some $S_\phi$ for some formula $\phi$.
	Each $S_\phi$ contain $(m_{\phi}+1)(3n_{\phi}+2m_{\phi}+2)$ elements where $||\phi|| = m_{\phi}$ and $n_{\phi} = m^{-1}(m_{\phi})$.
	
	The distance $\ud$ between two elements $e,e'$ is $\ud(e,e') := 2^{-i}$ where $i$ is the last position $e,e'$ agree on (and $\ud(e,e')=0$ if $e=e'$). 
	One can easily check that $\ud$ is an ultrametric.
	Thus, for elements $e\in S_{\phi}$ and $e'\in S_{\phi'}$, their distance is $\ud(e,e') := 1$ while for elements $e,e'\in S_{\phi}$, their distance is $\ud(e,e') \leq 2^{-m_{\phi}(2m_{\phi}+2n_{\phi}+3)}$ due to the prefix.
	Similarly, the distance between elements $e,e'\in S^V_{\phi}\cup S^T_{\phi}\cup S^F_{\phi}$< is $\ud(e,e') \leq 2^{-m_{\phi}(2m_{\phi}+2n_{\phi}+3)}2^{-2m_{\phi}-2}$, and so on $\dots$
		
	Now, given a set $S'\subseteq \U$, we classify $S'$ as \textit{erroneous}, if there are two tuples $(\phi, \dots), (\phi', \dots)\in S'$ such that $\phi \neq \phi'$.
	Thus, if $S'$ is not erroneous, $S'\subseteq S_\phi$ for some formula $\phi$.
	We say $S'$ is \textit{small} if $|S'| \leq m(3n+2m+2)$.
	Then, we define an isomorphism $f$ on $S'$ to some subset of $S_\phi$.
	Concretely,
	\begin{align*}
		f(t) = \begin{cases}
			t^V_i & t = t^T_i \text{ and } t^V_i\not\in S'\\
			t & \text{ otherwise}
		\end{cases}
	\end{align*}
	We say $S'\subseteq S$ of size $2m+2n+2 + m(3n+2m+2)$ is valid if $S^V\cup S^B\cup S^1\cup S^0 \cup S^C\subseteq f(S')$.
	For valid sets $S'\subseteq S$ we define the variable assignments $\alpha_{S'}\colon \bar{x} \rightarrow \{t,f\}$ where $\alpha(x_i) = t$ iff $t^T_i \in f(S')$.
	
	Now consider the following function $\delta\colon \setsU \rightarrow \bbQgeqz$:
	\begin{align*}
		\delta(S') = \begin{cases}
			\ud(e_1,e_2) & S' = \{e_1, e_2\}\\
			|S'| & S' \text{ is erroneous }\\
			|S'|-1 & S' \text{ is small } \\
			|S'| & S'\subseteq S_\phi \text{ is valid and } \alpha_{S'} \vDash \phi\\
			|S'|-1 & \text{otherwise}
		\end{cases}
	\end{align*}

	$\delta$ is well-defined and, for a given formula $\phi$, there is a $S'\subseteq S_\phi$ s.t., $\delta(S')=|S'|=2m+2n+2+ m(3n+2m+2)$ if and only if $\phi$ is satisfiable.
	
	\paragraph{Correctness}
	It remains to show that $\delta$ is monotone (this then also implies closedness under isomorphisms).
	For non-monotonicity, there have to be sets $A=\{a_1,\dots, a_l\},B=\{b_1,\dots,b_l\}\subseteq \U$ such that $\dd(a_i,a_j)\leq \dd(b_i,b_j)$ for all $1\leq i,j\leq l$ but $\delta(B)<\delta(A)$.
	Let $h$ be the function $h(a_i)=b_i$.
	Firstly, $B$ must at least contain 3 elements, thus, $2\leq \delta(B)=|B|-1 < |A|=\delta(A)$.
	Consequently, $B$ is not erroneous, i.e., $B\subseteq S_\phi$ for some formula $\phi$ and, importantly, $\dd(b_i,b_j)<1$.
	This implies that also all tuples of $A$ must start with the same formula $\phi'$ as $\dd(a_i,a_j)\leq \dd(b_i,b_j)<1$, i.e., $A\subseteq S_{\phi'}$ for some formula $\phi'$.
	
	Let $m_\phi:= ||\phi||, m_{\phi'}:= ||\phi'||, n_{\phi} := m^{-1}(m_{\phi}),$ and $n_{\phi'} := m^{-1}(m_{\phi'})$.
	Due to the length of  elements in $A$, the minimal distances between $a_i,a_j$ is $2^{-(m_{\phi'}+1)(2m_{\phi'}+2n_{\phi'}+3)+1}$.
	Likewise, due the prefixes $(\phi, \dots, \phi)$ of elements in $B$ of length $m_{\phi}(2m_{\phi}+2n_{\phi}+3)$, we have a bound on the maximal distances between $b_i,b_j$.
	Consequently, $2^{-(m_{\phi'}+1)(2m_{\phi'}+2n_{\phi'}+3)+1} \leq \dd(a_i,a_j) \leq \dd(b_i,b_j) \leq 2^{-m_{\phi}(2m_{\phi}+2n_{\phi}+3)}$, which implies $m_{\phi'}\geq m_{\phi}$.
	Thus, if $B$ would be small, $A$ would also be small.
	But $A$ cannot be small as $\delta(A)=|A|$ and, therefore, neither is small.
	
	Consequently, $A$ must be valid, and $S^B_{\phi'} \cup S^1_{\phi'} \cup S^0_{\phi'} \cup S^C_{\phi'} \subseteq A \subseteq S_{\phi'}$.
	Furthermore, $l=2m_{\phi'}+2n_{\phi'}+2 + m_{\phi'}(3n_{\phi'}+2m_{\phi'}+2)$ and $(m_{\phi}+1)(3n_{\phi}+2m_{\phi}+2) =|S_\phi|\geq |B| = l$ which implies $m_\phi \geq m_{\phi'}$.
	Hence, $m_\phi = m_{\phi'}=:m, n_\phi = n_{\phi'}=:n, |S_\phi| = |S_{\phi'}|$.

	Now consider elements $(\phi',\dots, \phi';C_i)\in S^C_{\phi'}\subseteq A$.
	The distance between $(\phi',\dots, \phi';C_i)$ and any element $a_j\neq (\phi',\dots, \phi';C_i), a_j\in A$ is $2^{-m(2m+2n+3)}$.
	Thus, this must also be the case for elements $h(\phi',\dots, \phi';C_i)$, i.e., $h(S^C_{\phi'}) = S^C_{\phi}\subseteq B$.
	
	Now consider elements $t^{(V/T/F)}_{i,\phi'}=(\phi',\dots, \phi';V;\dots)\in (S^V_{\phi'} \cup S^T_{\phi'} \cup S^F_{\phi'}) \cap A \subseteq A$ and $t^{(B/1/0)}_{j,\phi'}=(\phi',\dots, \phi';B;\dots)\in S^B_{\phi'} \cup S^1_{\phi'} \cup S^0_{\phi'} \subseteq A$.
	The distance between $t^{(V/T/F)}_{i,\phi'}$ and any element $t^{(B/1/0)}_{j,\phi'}$ is $2^{-m(2m+2n+3)}$ while the distance among elements $t^{(V/T/F)}_{i,\phi'}$ and among elements $t^{(B/1/0)}_{j,\phi'}$ is strictly less than $2^{-m(2m+2n+3)}$.
	Furthermore, since $n_{\phi'} \ll m_{\phi'}$, also $|(S^V_{\phi'} \cup S^T_{\phi'} \cup S^F_{\phi'}) \cap A| \ll |S^B_{\phi'} \cup S^1_{\phi'} \cup S^0_{\phi'}|$.
	Combining these arguments, we assert that $h((S^V_{\phi'} \cup S^T_{\phi'} \cup S^F_{\phi'}) \cap A) = (S^V_{\phi} \cup S^T_{\phi} \cup S^F_{\phi}) \cap B$ and $h(S^B_{\phi'} \cup S^1_{\phi'} \cup S^0_{\phi'}) = S^B_{\phi} \cup S^1_{\phi} \cup S^0_{\phi}$.

	First notice that we can assume, w.l.o.g., $t^B_{i,\phi}\in B$ if $t^1_{i,\phi}\in B$ due to both elements being indistinguishable.
	Furthermore, we can assume that $h^{-1}(t^B_{i,\phi}) \in S^{B}_{\phi'}$ if $h^{-1}(t^1_{i,\phi}) \in S^{B}_{\phi'}$ again due to both elements being indistinguishable.
	Now consider elements $\{t^B_{1,\phi'},\dots, t^B_{m,\phi'}\} = S^B_{\phi'}$ and $t^0_{1,\phi'}$ with distances
	\begin{align*}
		\dd(t^B_{i,\phi'}, t^B_{j,\phi'}) &= \begin{cases}
			2^{-m(2m+2n+3)}2^{-\min\{i,j\}} & i\neq j\\
			0 & i = j
		\end{cases}\\
		\dd(t^B_{i,\phi'}, t^0_{1, \phi'}) &= 2^{-m(2m+2n+3)}2^{-i} 
	\end{align*}
	Thus, we also have to achieve at least these distances among $h(S^B_{\phi'} \cup \{t^0_{1,\phi'}\})$.
	This is only possible (with the previous assumptions) if $h(t^B_{i,\phi'}) = t^B_{i,\phi}$ and $h(t^0_{1,\phi'}) \in S^0_\phi$.
	Furthermore, the distances of the remaining elements $t^0_{i,\phi'}\in S^0_{\phi'}, i \neq 1$ to $S^B_{\phi'}$ also imply that $h(t^0_{i,\phi'})\in  S^0_\phi$.
	Importantly, thus, $|S^0_{\phi'}|=|S^0_{\phi}|$.
	Moreover, the distances of elements $t^1_{i,\phi'}\in S^1_{\phi'}$ to $S^B_{\phi'}$ and $t^0_{1,\phi'}$ are
	\begin{align*}
		\dd(t^1_{i,\phi'}, t^B_{j,\phi'}) &= \begin{cases}
			2^{-m(2m+2n+3)}2^{-\min\{i,j\}} & i\neq j\\
			2^{-m(2m+2n+3)}2^{-(2m-i+2)} & i = j
		\end{cases}\\
		\dd(t^1_{i,\phi'}, t^0_{1, \phi'}) &= 2^{-m(2m+2n+3)}2^{-i} 
	\end{align*}
	Thus, also $h(t^1_{i,\phi'}) = t^1_{i,\phi}$.
	Importantly, thus, $S^1_{\phi} = \{t^1_{i,\phi} \mid t^1_{i,\phi'}\in S^1_{\phi}\}$.
	However, 
	\[
	\bin(\phi) = (\bin(\phi)_1,\dots, \bin(\phi)_m)
	\] 
	and $\bin(\phi)_i = 1$ if and only if $t^1_{i,\phi}\in S^1_{\phi}$.
	Thus, 
	\[
	\bin(\phi) = (\bin(\phi)_1,\dots, \bin(\phi)_m) = (\bin(\phi')_1,\dots, \bin(\phi')_m) = \bin(\phi'),
	\] 
	That is, $\phi = \phi'$.
	
	An analogous argument can be used to show that $(S^V_{\phi} \cup S^T_{\phi}) \cap A = (S^V_{\phi} \cup S^T_{\phi}) \cap B$ and $|S^F_{\phi}\cap A| = |S^F_{\phi}\cap B|$.
	We would have to proceed as follows:
	First, as $A$ is valid, $S^V_{\phi}\subseteq A$ and some $t^F_{i}\in A$.
	Thus, w.l.o.g., $h(S^V_{\phi}) = S^V_{\phi}$ and $h(t^F_{i})\in S^F$.
	Then, we can assert $h(S^F_{\phi}\cap A) = S^F_{\phi}\cap B$ and $h(S^T_{\phi}) = S^T_{\phi}$.
	
	Now, given $(S^V_{\phi} \cup S^T_{\phi}) \cap A = (S^V_{\phi} \cup S^T_{\phi}) \cap B$ and $|S^F_{\phi}\cap A| = |S^F_{\phi}\cap B|$, we can conclude that $B$ is in fact valid as $\alpha_B = \alpha_A$, and $\delta(B) = |B| = |A| = \delta(A)$.
	This completes the proof.	
\end{proof} 
	
	\section{Proofs of Section~\ref{sec:implicit-rep}} \label{app:implicit-rep}

\subsection*{Proof of Theorem~\ref{theo:ptime-implicit-rep}}

\newcommand{\mroot}{B_\texttt{root}}
\newcommand{\mnew}{B_\texttt{new}}
\newcommand{\mparent}{B_\texttt{parent}}
\newcommand{\irinit}{\texttt{init}}
\newcommand{\irnext}{\texttt{next}}
\newcommand{\ircurrent}{\texttt{current}}

\begin{algorithm}[t]
	\DontPrintSemicolon
	\KwIn{An instance $I \in \irI$ and $k \in \bbN$.}
	\KwOut{A $k$-diversity set $S \subseteq \sem{I}$ with respect to $\delta$.}
	$\mroot \gets \irroot(I)$ \\
	$S \gets \{\irmember(I, \mroot)\}$ \\
	$L \gets \{\mroot\}$ \\
	$\irchildren(I, \mroot).\irinit$ \\ 
	$\irchildren(I, \mroot).\irnext$ \\
	\While{$|S| < k \, \wedge \, L \neq \emptyset$ \ }{
		$B \gets \argmax_{B \in L} \delta(S \cup \{\irmember(I, \texttt{Children}(I,B).\ircurrent)\}$ \\
		$S \gets S \cup \{\irmember(I, \texttt{Children}(I,B).\ircurrent)\}$\\
		\If{$\irchildren(I,B).\irnext = \texttt{false}$} {
			$L \leftarrow L \setminus \{B\}$ \\
			\For{$B' \in \irchildren(I,B)$}{
				\If{$|B'| > 1$} {
					$L \leftarrow L \cup \{B'\}$ \\
					$\irchildren(I, B').\irinit$ \\
					$\irchildren(I, B').\irnext$ \\
				}
			}
		}
	}
	\Return{$S$}
	\caption{For fixed ultrametric $\ud$ and implicit representation $(\irI, \sem{\cdot})$ over a common $\U$, implicit ultrametric tree $(\irroot, \irchildren, \irmember)$, and  subset-monotone diversity function $\delta$ of~$\ud$, compute, for an instance $I\in \irI$, a $k$-diverse subset of $\sem{I}$.}
	\label{alg:divAnsImpSSM}
\end{algorithm}

\begin{proof}
	In Algorithm~\ref{alg:divAnsImpSSM}, we provide all the instructions for the algorithm of  Theorem~\ref{theo:ptime-implicit-rep}. This algorithm assumes a fixed implicit representation $(\irI, \sem{\cdot})$ over $\U$, including a fixed implicit ultrametric tree given by the methods $(\irroot, \irchildren, \irmember)$. In addition, the ultrametric $\ud$ over $\U$ and the subset-monotone diversity function $\delta$ of~$\ud$ are fixed. Then, given an instance $I \in \irI$ and a $k \in \bbN$, the algorithm computes a $k$-diverse set $S \subseteq \sem{I}$ with respect to $\delta$.
	
	Recall that, given a ball $B \in \Ball(\sem{I})$, the method $\irchildren(I, B)$ enumerates the children of $B$ in $\utree[\sem{I}]$ with polynomial delay. For using this enumeration procedure, we assume an iterator interface with methods $\irinit$, $\irnext$, and $\ircurrent$, such that:
	\begin{itemize}
		\item $\irchildren(I, B).\irinit$ starts the iteration, placing the $\ircurrent$ pointer to the first child of $B$;
		\item $\irchildren(I, B).\ircurrent$ retrieves the current child of $B$;
		\item $\irchildren(I,B).\irnext$ moves the $\ircurrent$ pointer to the next child of $B$, outputting $\texttt{true}$ if a next child exists, and $\texttt{false}$, otherwise.  
	\end{itemize}
	Note that $\irchildren(I, B).\ircurrent$ retrieves the same child of $B$ whenever we call it consecutively (i.e., without calling $\irchildren(I, B).\irinit$ or $\irchildren(I,B).\irnext$ in between). By the definition of an implicit ultrametric tree, the running times of methods $\irinit$, $\ircurrent$, and $\irnext$ are bounded by $O(\fT(I))$. 
	
	For the sake of simplification, we assume that the method $\irmember(I, B)$ always outputs the same solution as $\irmember(I, B_{\text{fc}})$ where $B_{\text{fc}}$ is the first-child of $B$ in the implicit ultrametric tree. In other words, if we call $\irchildren(I, B).\irinit$, then it always holds that:
	\[
	\irmember(I, B) = \irmember(I, \irchildren(I, B).\ircurrent). \tag{\ddagger}
	\] 
	This assumption simplifies the presentation of the algorithm considerably, and we can assume it without loss of generality. Indeed, if it does not hold, the algorithm will require an extra check, increasing the whole running time only by a constant. 
	
	Intuitively, Algorithm~\ref{alg:divAnsImpSSM} keeps a set $S$ of solutions, and its primary goal is to maximize the ``incremental'' diversity of adding a new element to $S$. This new element is chosen from the current balls of the ultrametric tree that have not been visited yet. For this purpose, the algorithm maintains a set $L$ of balls such that, for each $B \in L$, the algorithm is iterating through the children of $B$ by using the iteration interface $\irchildren(I, B)$. The algorithm picks the new element that maximizes the incremental diversity of $S$ greedily from $L$ by computing (line~7):
	\[
	B\gets \argmax_{B \in L} \delta(S \cup \{\irmember(I, \texttt{Children}(I,B).\ircurrent)\}.
	\]
	Then the algorithm adds $\irmember(I, \texttt{Children}(I,B).\ircurrent)$ to $S$ (line 8) and moves to the next children of $B$ (line 9) until $S$ has size $k$ or $L$ is empty (line 6).
	
	Following the above strategy, Algorithm~\ref{alg:divAnsImpSSM} starts by picking the root ball $\mroot = \irroot(I)$ and adds a first solution to $S$ (line 2). Then, it adds $\mroot$ to $L$ for iterating through its children (lines 3 and 4). By assumption $(\ddagger)$, the solution in $\mroot$ added to $S$ is the same as the first children of $\mroot$. For this reason, the algorithm skips the first children of $\mroot$ by calling $\irchildren(I, \mroot).\irnext$. 
	
	When Algorithm~\ref{alg:divAnsImpSSM} reaches the end of the children of a ball $B \in L$ (line 9), it removes $B$ from $L$ (line 10). After this, it iterates over all the children $B'$ of $B$ (line 11), adding $B'$ to $L$ whenever $B'$ is not a leaf node in the ultrametric tree, namely, $|B'| > 1$ (lines 12 and 13). Similar to the initial case, by assumption $(\ddagger)$ we can skip the first sibling $B_{\text{fc}}$ of $B'$, given that the solution that $B_{\text{fc}}$ can contribute to $S$ is already in $S$ (lines 14 and 15).
	
	\paragraph{Correctness} We prove the correctness of Algorithm~\ref{alg:divAnsImpSSM} in three steps.
	
	For the first step, we prove that, at the beginning of each iteration (line 6), $S$ and $L$ cover all solutions $\sem{I}$, i.e., $S\cup \bigcup_{B\in L} B = \sem{I}$.
	We show this by induction on the number of iterations. 
	This is certainly true before the first iteration, since $\mroot$ is the root of the ultrametric tree and $\mroot = \sem{I}$. 
	For any iteration, $S$ only grows while $L$ only changes if  $\irchildren(I,B).\irnext = \texttt{false}$ (line 9).
	Then, the algorithm removes $B$ from $L$ and adds all $B' \in \irchildren(I,B)$ to $L$ for which $|B'| > 1$. 
	Note that, if $|B'| = 1$, then $B' \subseteq S$ as at least one element $a\in B'$ of each child $B'$ of $B$ was added to $S$.
	Given that $\irchildren(I,B)$ forms a partition of $B$, we can assert that $B\subseteq S\cup\bigcup_{B'\in \irchildren(I,B), |B'|>1} B'$. 
	We conclude that $S\cup \bigcup_{B\in L} B = \sem{I}$ still holds.

	For the second step, we show that, at the beginning of each iteration, for any $a \in \sem{I} \setminus S$, there exists $B \in L$ such that $\delta(S\cup \{a\}) \leq \delta(S \cup \{b\})$ for $b = \irmember(I, \texttt{Children}(I,B).\ircurrent)$. In other words, in steps 7 and 8 the algorithm chooses an element that maximizes the incremental diversity of $S$. 
	To prove this, take any $a \in \sem{I} \setminus S$.
	By the first step, there must exists $B \in L$ such that $a \in B$. Let $B' =  \texttt{Children}(I,B).\ircurrent$ and $b = \irmember(I,B')$. 
	On the one hand, for every $s \in S \cap B$, it holds that  $\ud(s, a) \leq r(B)$. Given that $S \cap B' = \emptyset$ (i.e., $B'$ has not been considered yet), $\ud(s, b) = r(B)$ (by Property~\ref{prop:radius}). In particular, $\ud(s, a) \leq \ud(s, b)$.
	On the other hand, for every $s \in S \setminus B$, it holds that $\ud(s, a) = \ud(s, b)$ given that $a, b \in B$. Combining both cases, we have that $\ud(s, a) \leq \ud(s, b)$ for every~$s \in S$.
	Now, if we choose $A = S$, $B = \{a\}$, and $B' = \{b\}$, we conclude by subset-monotonicity that $\delta(S \cup \{a\}) \leq \delta(S \cup \{b\})$.
	
	For the last step, we prove that the algorithm always outputs a $k$-diverse set with respect to $\delta$. 
	Let $S = \{s_1, \ldots, s_k\}$ be the output of the algorithm where $s_1, \ldots, s_k$ is the order how the algorithm added the elements to $S$. 
	Towards a contradiction, assume that there exists $S' \subseteq \sem{I}$ of size $k$ such that $\delta(S) < \delta(S')$.  Let $M$ be the set of all $S'$ such that $\delta(S) < \delta(S')$ and $|S'| = k$. Pick one $S^* \in M$ that contains the longest prefix of $s_1,\ldots, s_k$, namely, $S^* = \argmax_{S' \in M} \{m \mid s_1, \ldots, s_m \in S'\}$. Then $s_1, \ldots, s_m \in S^*$ but $s_{m+1} \notin S^*$. 
	Also, let $s^* \in S^*$ be such that $\ud(s_{m+1}, S^* \setminus \{s_1, \ldots, s_m\}) = \ud(s_{m+1}, s^*)$ (i.e., $s^*$ is one of the closest elements to $s_{m+1}$ in $S^*\setminus \{s_1, \ldots, s_m\}$). Define $A = S^* \setminus \{s_1, \ldots, s_m, s^*\}$, $B =  \{s_1, \ldots, s_m, s^*\}$, and $B' = \{s_1, \ldots, s_m, s_{m+1}\}$.
	By the second step, we know that $\delta(B) \leq \delta(B')$. Furthermore, for every $a \in A$ we have that:
	\[
	\ud(a, s^*) \leq \max\{\ud(a,s_{m+1}), \ud(s_{m+1}, s^*)\} = \max\{\ud(a, s_{m+1}), \ud(s_{m+1},S^* \setminus \{s_1, \ldots, s_m\})\} = \ud(a, s_{m+1}).
	\]
	The remaining elements $B\setminus\{s^*\}$ are the same as $B'\setminus \{s_{m+1}\}$ and $\ud(a,s_i)=\ud(a,s_i)$.
	Then, applying subset-monotonicity, we get that:
	\[
	\delta(S^*) = \delta(A \cup B) \leq \delta(A \cup B')
	\]
	This means that $A \cup B' \in M$ but $A \cup B'$ has a longer prefix of $s_1,\ldots, s_k$ than $S^*$, which is a contradiction. We conclude that the output $S$ of Algorithm~\ref{alg:divAnsImpSSM} is a $k$-diverse set of $\sem{I}$ with respect to $\delta$. 
	
	\paragraph{Running time} By inspecting  Algorithm~\ref{alg:divAnsImpSSM}, one can check that the number of balls of the ultrametric tree that are used is at most $O(k)$. By caching the result of the functions $\irchildren$ and $\irmember$, we can bound the running time of these methods in the algorithm by $O(k \cdot \fT(I))$. 
	
	Now, let $\fdelta(n)$ be a function such that the running time of computing $\delta(A)$ over a set $A$ of size $n$ is bounded asymptotically by $\fdelta(n)$. One can check that $|L| \in O(k)$ and, for each iteration, we compute $B$ by calling $\delta$ exactly $|L|$-times over a set of size at most $k$. Then, the running time of line~7 takes at most $O(k \cdot \fdelta(k))$. Overall, we can bound the running time of Algorithm~\ref{alg:divAnsImpSSM} by:
	\[
	O(k \cdot \fT(I) + k^2 \cdot \fdelta(k)).
	\]
\end{proof}

\subsection{Proof of Proposition~\ref{prop:sum-min}}

\begin{proof}
	We show that $\deltasummin$ is subset monotone.
	Let $A\subseteq \U,B=\{b_{1},\dots,b_l\},B' = \{b'_1,\dots,b'_l\}$ with $\emptyset = A\cap B = A\cap B'$ be such that $\deltasummin(B)\leq \deltasummin(B')$ and $\ud(a,b_i) = \ud(a,b'_i)$ for any $a\in A$ and $ i\in \{1,\dots, l\}$.
	We need to show that $\deltasummin(A \cup B)\leq \deltasummin(A \cup B')$.
	To that end, we start by computing the contributions of $a\in A$, i.e.,
	\begin{align*}
		\ud(a,A\cup B \setminus \{a\}) &= \min \{ \min_{a'\in A:a'\neq a}\ud(a,a'),  \min_{b_i\in B} \ud(a,b_i) \} \\
		&= \min \{ \min_{a'\in A:a'\neq a}\ud(a,a'), \min_{b'_i\in B'} \ud(a,b'_i) \} \\
		&= \ud(a,A\cup B' \setminus \{a\}).
	\end{align*}
	
	For the elements $b_i\in B$ the argument is more involved.
	For that, let 
	\begin{align*}
		B(b_i) &:= \{b_j \in B\mid \ud(b_i,b_j) \leq \min \{ \ud(b_i,A), \ud(b_j,A)\}\}\subseteq B \\
		B'(b'_i)&:=  \{b'_j \in B'\mid \ud(b'_i,b'_j) \leq \min \{ \ud(b'_i,A), \ud(b'_j,A)\}\}\subseteq B'.
	\end{align*}
	Of course, $B(b_i) = B(b_j)$ for $b_j\in B(b_i)$.
	Furthermore, for $b_i,b_j \in B(b_i)$ there is a $a(b_i,b_j)\in A$ such that $\ud(b_i,b_j) \leq \min\{\ud(b_i,A), \ud(b_j,A)\} = \min \{ \ud(b_i,a(b_i,b_j)), \ud(b_j,a(b_i,b_j))\} $.
	Thus, also $\ud(b_i,a(b_i,b_j))= \ud(b_j,a(b_i,b_j))$.
	Then, also 
	\begin{align*}
		\ud(b'_i,b'_j) &\leq \max \{\ud(b'_i,a(b_i,b_j)), \ud(b'_j,a(b_i,b_j))\}\\
		&= \max \{\ud(b_i,a(b_i,b_j)), \ud(b_j,a(b_i,b_j))\}\\
		&= \ud(b_i,a(b_i,b_j)) = \ud(b_j,a(b_i,b_j))\\
		&= \min \{\ud(b_i,A), \ud(b_j,A)\}\\
		&= \min \{\ud(b'_i,A), \ud(b'_j,A)\}.\\
	\end{align*}
	Thus, $\{b'_j \mid b_j \in B(b_i)\} \subseteq B'(b'_i)$ for any $i\in \{1,\dots, l\}$.
	Analogously, we can argue that $\{b_j \mid b'_j \in B'(b'_i)\} \subseteq B(b_i)$.
	Therefore, $B'(b'_i)=\{b'_j \mid b_j \in B(b_i)\}$ and $B(b_i) = \{b_j \mid b'_j \in B'(b'_i)\}$.
	
	Furthermore, we show that $\ud(b_i,b_j) = \ud(b'_i,b'_j)$ if $b_j \not \in B(b_i)$ (equally, if $b_i \not \in B(b_j)$ or $b'_j \not \in B'(b'_i)$ or $b'_i \not \in B'(b'_j)$).
	To that end, let $b_j \not \in B(b_i)$.
	W.l.o.g., $\ud(b_i,A) \leq \ud(b_j,A)$ and, thus, $\ud(b_i,b_j) > \ud(b_i,A) = \ud(b_i,a)$ for some $a\in A$.
	Then, $\ud(b_i,b_j) = \ud(b_j,a)$ (recall for an ultrametric $\ud(\alpha,\beta)< \ud(\beta,\gamma)$ implies $\ud(\alpha,\gamma)=\ud(\beta,\gamma)$).
	Furthermore, $b'_j \not \in B'(b'_i)$ and thus, 
	\[
	\ud(b'_i,b'_j) > \min \{\ud(b'_i,A), \ud(b'_j,A)\} = \min \{\ud(b_i,A), \ud(b_j,A)\} = \ud(b_i,A) = \ud(b_i,a) = \ud(b'_i,a)\}.
	\]
	Therefore, also $\ud(b'_i,b'_j) =\ud(b'_j,a) = \ud(b_j,s) = \ud(b_i,b_j)$.
	
	Moreover, if $|B(b_i)|=|B'(b'_i)|=1$, then $\ud(b_i,A\cup B \setminus \{b_i\}) = \ud(b'_i,A \cup B' \setminus \{b'_i\}) \leq \ud(t'_i, T' -t'_i)$.
	
	Putting all this together, 
	\small
	\begin{align*}
		\sum_{i=1}^l &\ud(b_i,A\cup B \setminus \{b_i\}) \\
		&= \sum_{|B(b_i)|=1} (\ud(b_i, A\cup B \setminus \{b_i\}) + \ud(b_i,B \setminus \{b_i\}) - \ud(b_i,B \setminus \{b_i\}))+ \sum_{|B(b_i)| > 1 } \ud(b_i, A\cup B \setminus \{b_i\}) \\
		&= \sum_{|B(b_i)|=1} (\ud(b_i, A\cup B \setminus \{b_i\}) + \ud(b_i,B \setminus \{b_i\}) - \ud(b_i,B \setminus \{b_i\}))+ \sum_{|B(t_i)| > 1 } \ud(b_i, B\setminus \{b_i\}) \\
		&= \sum_{|B(b_i)|=1} (\ud(b_i, A\cup B \setminus \{b_i\}) - \ud(b_i,B \setminus \{b_i\}))+ \sum_{i=1}^l \ud(b_i, B\setminus \{b_i\}) \\
		&= \sum_{|B(b_i)|=1} (\ud(b_i, A\cup B \setminus \{b_i\}) - \ud(b_i,B \setminus \{b_i\}))+ \deltasummin(B) \\
		&\leq \sum_{|B'(b'_i)|=1} (\ud(b'_i, A\cup B' \setminus \{b'_i\}) - \ud(b'_i,B'\setminus \{b'_i\}))+ \deltasummin(B') \\
		&= \sum_{|B'(b'_i)|=1} (\ud(b'_i, A\cup B' \setminus \{b'_i\}) - \ud(b'_i,B' \setminus \{b'_i\}))+ \sum_{i=1}^l \ud(b'_i, B'\setminus \{b'_i\}) \\
		&= \sum_{|B'(b'_i)|=1} (\ud(b'_i, A\cup B' \setminus \{b'_i\}) + \ud(b'_i,B' \setminus \{b'_i\}) - \ud(b'_i,B' \setminus \{b'_i\}))+ \sum_{|B'(b'_i)| > 1 } \ud(b'_i, A\cup B' \setminus \{b'_i\}) \\
		&= \sum_i \ud(b'_i,A\cup B'\setminus\{b'_i\}).
	\end{align*}
	\normalsize
	In total, 
	\begin{align*}
		\deltasummin(A \cup B) &= \sum_{a\in A} \ud(a,A\cup B \setminus \{a\}) + \sum_{i=1}^l \ud(b_i,A\cup B \setminus \{b_i\})\\
		&\leq \sum_{a\in A} \ud(a,A\cup B' \setminus \{a\}) + \sum_{i=1}^l \ud(b'_i,A\cup B' \setminus \{b'_i\}) = \deltasummin(A\cup B').
	\end{align*}
	This concludes the proof.
\end{proof}

\subsection{Proof of Theorem~\ref{theo:hardness-sum-min}}

\begin{proof}
	The strategy is to reduce from \textsc{SAT}.
	To that end, we start by defining a fixed implicit representation $(\irI, \sem{\cdot})$ over some universe $\U$ which, together with a fixed ultrametric $\ud$, admits a polynomial time computable implicit ultrametric tree.
	The set of instances $\irI$ is the same as for \textsc{SAT}, i.e., propositional formulas $\phi(\bar{x})$ where, w.l.o.g, $\bar{x} = (x_1,\dots,x_n)$.
	The set $\U$ then consists of tuples $(\alpha,1), (\alpha,0)$ where $\alpha$ is an arbitrary truth assignment $\alpha\colon \bar{x} \rightarrow \{t,f\}$.
	Given a proposition formula $\phi(\bar{x})\in \irI$, the finite set of solutions $\sem{\phi(\bar{x})}\subseteq \U$ shall be $\sem{\phi(\bar{x})}=$
	\begin{align*}
		&\{(\alpha, 0) \mid \alpha \colon \bar{x} \rightarrow \{t,f\}\} \cup {}\\
		&\{(\alpha, 1) \mid \alpha \colon \bar{x} \rightarrow \{t,f\}, \alpha \vDash \phi\}.
	\end{align*}
	That is, $\sem{\phi(\bar{x})}$ consists of all truth assignments $\alpha$ (appended with a 0) and all models $\alpha$ of $\phi$ (appended with a 1).
	The distance between two elements $(\alpha,b),(\alpha', b')$ is $\ud((\alpha,b),(\alpha', b')) = 3^{-i}$ where $i$ is the last position $\alpha, \alpha'$ agree on, i.e., they differ first on $\alpha(x_{i+1}), \alpha'(x_{i+1})$ (or only one of $\alpha, \alpha'$ is defined on $x_{i+1}$ or they differ on $b,b'$).
	
	Note that there is a polynomial time computable implicit ultrametric tree $(\irroot, \irchildren, \irmember)$ for $\ud$ and $(\irI, \sem{\cdot})$.
	To that end, let $\phi(x_1,\dots,x_n)$ be an instance $\phi(x_1,\dots,x_n)\in \irI$ and let $S:=\sem{\phi(x_1,\dots,x_n)}$.
	Then, consider for each $\beta\colon \{x_1,\dots,x_i\} \rightarrow \{t,f\}$ where $i\in \{0,\dots, n\}$ the sets $B(\beta)=$
	\begin{align*}
		&\{(\alpha, 0) \mid \alpha \colon X \rightarrow \{t,f\}, \alpha(x_1)=\beta(x_1),\dots, \alpha(x_i)=\beta(x_i)\} \cup {}\\
		&\{(\alpha, 1) \mid \alpha \colon X \rightarrow \{t,f\}, \alpha(x_1)=\beta(x_1),\dots, \alpha(x_i)=\beta(x_i), \alpha \vDash \phi\}.
	\end{align*}
	The vertices of $\utree$ are exactly these balls $B(\beta)$ together with the singletons $\{(\alpha, 0)\},\{(\alpha, 1)\}$ for models $\alpha$ of $\phi(x_1,\dots,x_n)$.
	Thus, we can use $\beta, (\alpha, 0), (\alpha, 1)$ as IDs for our implicit ultrametric tree.
	
	Given this representation, $\irroot(\phi(x_1,\dots,x_n))$ simply outputs $B(\{\})=S$.
	We can compute the children of a ball $B(\beta)$ for partial truth assignments $\beta\colon \{x_1,\dots,x_i\} \rightarrow \{t,f\}, i\in \{0,\dots, n-1\}$ as $B(\beta\cup \{x_{i+1}\mapsto t\})$ and $B(\beta\cup \{x_{i+1}\mapsto f\})$.
	For total truth assignments $\beta\colon \{x_1,\dots,x_n\} \rightarrow \{t,f\}$, we have to check whether $\beta \vDash \phi$.
	If $\beta \not \vDash\phi$ then $B(\beta) = \{(\beta,0)\}$ and there are no children to enumerate while if $\beta\vDash \phi$ then $B(\beta)= \{(\beta, 0), (\beta, 1)\}$ and we enumerate both children $\{(\beta, 0)\}, \{(\beta, 1)\}$.
		
	Now, we proceed to the reduction.
	That is, we reduce a propositional $\phi(x_1,\dots, x_n)$ to the instance $\phi(x_1,\dots, x_n)\in \mathcal{\irI}$ itself but ask for a $(n+2)$-diverse subset $S'$ of $\sem{\phi(x_1,\dots,x_n)}$.
	
	We claim that such a $(n+2)$-diverse subset $S'$ has a diversity of at least $3^{-n} +\sum_{0\leq i \leq n} 3^{-i} = \frac{3+3^{-n}}{2}$ only if $S'$ contains a model of $\phi$ (recall that we are considering the diversity function $\deltasummin$).
	Moreover, if there exists a model of $\phi$, there also exists such a set $S'$.
	
	Let us first consider the ``if''-direction, i.e., let us assume there exists a model $\alpha$ of $\phi$.
	Then, let us define for $i\in \{1,\dots, n\}$
		\[
		\alpha_i (x_j) := \begin{cases}
			\alpha(x_j) & i\neq j\\
			\lnot \alpha & i = j
		\end{cases}
		\]
	and let us consider the set $S'=\{(\alpha_1,0),\dots, (\alpha_n,0), (\alpha,0), (\alpha,1)\}$.
	The diversity of $S'$ is $\deltasummin(S')=$
	\begin{align*}
		\deltasummin(S') &= d((\alpha,0), S' \setminus \{(\alpha,0)\}) + d((\alpha,1), S' \setminus \{(\alpha,1)\}) +\sum_{i=1}^n d((\alpha_i,0), S' \setminus \{(\alpha_i,0)\}) \\
		& = 3^{-n} + 3^{-n} +\sum_{i=1}^n 3^{i-1} = 3^{-n} + \sum_{i=0}^n 3^{i}.
	\end{align*}
	
	For the other direction, first consider a 2-diverse subset, which has a diversity of $2$.
	It has two elements $(\alpha_0,b_0), (\alpha_1,b_1)$ which already differ on $x_1$, i.e., both contribute a diversity of $1$.
	Then, a 3-diverse subset has a diversity of $1 + \frac{2}{3}$.
	It has the two previous elements together with an element $(\alpha_2,b_2)$ which, w.l.o.g., agrees with $\alpha_1$ on n $x_1$ but differs on $x_2$, i.e., $(\alpha_0,b_0)$ contributes a diversity of $1$ and $(\alpha_1,b_1)$ and $(\alpha_2,b_2)$ a diversity of $\frac{1}{3}$.
	Then, a 4-diverse subset has a diversity of $1 + \frac{1}{3} + \frac{1}{3}^2$.
	It has the three previous elements together with an element $(\alpha_3,b_3)$ which, w.l.o.g., agrees with $\alpha_2$ on $x_1$ and $x_2$ but differs on $x_3$, i.e., $(\alpha_0,b_0)$ contributes a diversity of $1$, $(\alpha_1,b_1)$ contributes a diversity of $\frac{1}{3}$, and both $(\alpha_2,b_2)$ and $(\alpha_3,b_3)$ a diversity of $\frac{1}{3}^2$.
	
	We can continue this until the size $|S| = n+1$ with diversity $\frac{3+3^{-n+1}}{2}$.
	The elements are $((\alpha_i,b_i))_{i=0,\dots,n}$ and $(\alpha_j,b_j),(\alpha_i,b_i), j<i$ agree on the variables $x_1,\dots,x_j$.
	Thus, $(\alpha_i,b_i)$ for $i< n$ contributes a diversity of $\frac{1}{3}^{i}$ and $(\alpha_{n},b_{n})$ also contributes a diversity of $\frac{1}{3}^{n-1}$.
	
	To continue the scheme to $|S| = n+2$ with diversity $\frac{3+3^{-n}}{2}$, the elements $(\alpha_{n},b_{n})$  and $(\alpha_{n+1},b_{n+1})$ now crucially have to agree on $\alpha_{n}, \alpha_{n+1}$ but differ on $b_{n}, b_{n+1}$.
	However, this is possible only if $(\alpha_{n}, 1) = (\alpha_{n+1}, 1)\in \sem{\phi}$, i.e., $\alpha_{n} = \alpha_{n+1}$ is a model of $\phi$.   
\end{proof}

\subsection*{Proof of Theorem~\ref{theo:fpt-implicit-rep}}

\begin{proof}
	\begin{algorithm}[t]
		\SetKwProg{Fn}{Function}{:}{}
		\KwIn{An instance $I \in \irI$ and $k \in \bbN$.}
		\KwOut{A $k$-diversity set $S' \subseteq \sem{I}$ with respect to $\delta$.}		
		$\mroot \gets \irroot(I)$ \\
		$S \leftarrow RelevantElements(I, \mroot, k)$\\
		\Return{$\argmax_{S'\subseteq S, |S'| = k} \delta(S')$}\\
		\Fn{$RelevantElements (I, B$, $k)$}{
			\If{$k=1$ or $|B| = 1$}{
				\Return$\{\irmember(I,B)\}$
			}
			$\irchildren(I,B).\irinit$ \\
			$C \leftarrow \{\texttt{Children}(I,B).\ircurrent\}$ \\
			\While{$\irchildren(I, B).\irnext = \texttt{true} \land |C| < k$ }{
				$C \leftarrow C \cup \{\texttt{Children}(I,B).\ircurrent\}$\\
			}
			$S \leftarrow \{\}$\\
			\For{$B_{\texttt{child}} \in C$} {
				$S \leftarrow S \cup RelevantElements(I, B_{\texttt{child}}, k-|C|+1)$
			}
			\KwRet{$S$}
		}
		\caption{For fixed ultrametric $\ud$ and implicit representation $(\irI, \sem{\cdot})$ over a common $\U$, implicit ultrametric tree $(\irroot, \irchildren, \irmember)$, and weakly monotone diversity function $\delta$ of~$\ud$, compute, for an instance $I \in \irI$, a $k$-diverse subset of $\sem{I}$ .}
		\label{alg:divAnsImpWM}
	\end{algorithm}

	In Algorithm~\ref{alg:divAnsImpWM}, we provide all the instructions for the algorithm of  Theorem~\ref{theo:fpt-implicit-rep}. This algorithm assumes a fixed ultrametric $\ud$ together with implicit representation $(\irI, \sem{\cdot})$ over a common universe $\U$.
	In addition, the algorithm makes use of an polynomial time computable implicit ultrametric tree given by the methods $(\irroot, \irchildren, \irmember)$ and a weakly monotone diversity function $\delta$ of~$\ud$. 
	Then, given an instance $I \in \irI$ and a $k \in \bbN$, the algorithm computes a $k$-diverse set $S' \subseteq \sem{I}$ with respect to $\delta$.
	
	We assume an iterator interface for $\irchildren$ as we did in the proof of Theorem~\ref{theo:ptime-implicit-rep}, i.e., with methods $\irinit$, $\irnext$, and $\ircurrent$. 
	Furthermore, $\irroot, \irchildren, \irmember$ all run in time $O(\fT(I))$ with $\fT(I)\leq |I|^\ell$ for some constant $\ell$.
	
	Intuitively, Algorithm~\ref{alg:divAnsImpWM} computes a set of ``relevant elements'' $S\subseteq \sem{I}$ such that there exists a $k$-diverse subset $S'$ of $\sem{I}$ which is also a subset of $S$.
	However, the size of $S$ is bounded by $2^k$ and, thus, computing a $k$-diverse subset of $S$ is possible in time $O(\binom{2^k}{k}\cdot\fdelta(k))$.
	
	To compute $S$, the algorithm proceeds recursively, starting with $\mroot = \sem{I}$ for which we are looking for elements such that the $k$ most diverse ones are among them (call in line 2).
	In the recursion, instead of $\mroot$ we could have any ball $B\in \Ball_\sem{I}$.
	
	Then, if $k=1$, it does not matter which element $a\in B$ is selected, or if $|B| = 1$, we can simply select the $a\in B$ as it is the only element we have at our disposal (lines 5,6).
	
	Otherwise, there are at least 2 children of $B$.
	To that end, let $B_1,\dots, B_l$ be the children of $B$.
	In that case, we recurse on the children $B_i$ with $i\in \{1,\dots, \min\{k,l\}\}$ and we are looking for elements of $B_i$ such that the $k- \min\{k,l\}+1$ most diverse ones are among them (the children as collected in lines 7-10 and the recursion happens in line 13).
	The reason behind this is that (due weak monotonicity) there exists a $k$-diverse subset $S'$ of $B$ such that $S'$ has at least 1 element from each of the children $B_i$ with $i\in \{1,\dots, \min\{k,l\}\}$.
	Intuitively, if $S'$ does not intersect some $B_i$, we can simply replace any $a\in S'\cap B_j$ from any $B_j\neq B_i$ with any $a'\in B_i$.
	
	The union of the elements deemed relevant for the children $B_i$ are then together the elements deemed relevant for $B$ (lines 11-14).
	
	\paragraph{Correctness} 
	We now prove the correctness of Algorithm~\ref{alg:divAnsImpWM}.
	To that end, we verify the following condition for any $I\in \irI, B\in \Ball_{\sem{I}}, k'\leq k$ by induction on $k'$:
	Let $S = RelevantElements(I, B, k')$. 
	For any $k$-subset $S'\subseteq \sem{I}$ with $|S'\cap B| =: m \leq  k'$, there exists a $m$-subset $A\subseteq S$ such that $\delta(S') \leq \delta(S'\setminus B \cup A)$ $(\ddagger)$.
	
	For $|B|=1$ we have $S = RelevantElements(I, B, k') = B$ and thus we can simply select $A := B$.
	
	For $k'=1$ we have $S = RelevantElements(I, B, k') = \{a\}$ where $a=\irmember(I,B)$.
	Then, let $S'\subseteq \sem{I}$ be as required.
	If $m=0$ we can again simply select $A := \emptyset$.
	Thus, assume $m=1$ and let $b$ be such that $S'\cap B = \{b\}$.
	Now consider $A = \{a\}$.
	For any $s\in S'\setminus B = S' \setminus \{b\}$ we have
	\[
	\ud(b,s) \leq \max\{\ud(b,a), \ud(a,s)\} = \ud(a,s)
	\]
	since $a,b$ are in the same ball $B\in \Ball_{\sem{I}}$ but $s$ is not in $B$.
	Thus, due to weak monotonicity, $\delta(S') \leq \delta(S'\setminus B \cup A)$
	
	Now consider a $I\in \irI, B\in \Ball_{\sem{I}}, 1<|B|, 1 < k'\leq k$ and assume $(\ddagger)$ holds for any $k''<k'$.
	Let $B_1,\dots, B_l$ be the children of $B$.
	Furthermore, let $S'\subseteq \sem{I}$ be as required.
	We define $S'_I:= S'\setminus B$ and $S'_B:= S'\cap B$.
	Let $B_i$ be a child of $B$ such that $B_i\cap S'_B = \emptyset$.
	Then, for any $a\in B_i, b\in S'_B$ and $s\in S'\setminus \{b\}$, again
	\[
	\ud(b,s) \leq \max\{\ud(b,a), \ud(a,s)\} = \ud(a,s).
	\]
	Thus, $\delta(S')\leq \delta(S'\setminus \{b\} \cup \{a\})$.
	It suffices to show Condition $(\ddagger)$ for $S'\setminus \{b\} \cup \{a\}$ (playing the role of $S'$) as this is strictly harder to achieve.
	Thus, we can require, w.l.o.g., $S'\cap B_1 \neq \emptyset, \dots, S'\cap B_{\min\{l,m\}} \neq \emptyset$.
	Furthermore, $|S'\cap B_i|\leq m-\min\{l,m\}+1\leq k'-\min\{l,k'\}+1<k'$ for all $i\leq \min\{l,m\}$.
	Thus, by the Condition $(\ddagger)$, there exist $(|S'\cap B_i|)$-subsets $A_i\subseteq RelevantElements(I, B_i, k'-\min\{l,k'\}+1)$ for which
	\[
	\delta(S'\setminus B \cup \bigcup_{j<i} A_j \cup \bigcup_{i\leq j}(S'\cap B_j)) \leq \delta(S'\setminus B \cup \bigcup_{j\leq i} A_j \cup \bigcup_{i< j}(S'\cap B_j)).
	\]
	Applying this from $i=1$ to $i=\min\{l,m\}$ and defining 
	\[
	A := \bigcup_{i=1}^{\min\{l,m\}} A_i\subseteq \bigcup_{i=1}^{\min\{l,m\}} RelevantElements(I, B_i, k'-\min\{l,k'\}+1) = RelevantElements(I, B, k')
	\] 
	this gives us
	\[
	\delta(S') = \delta(S'\setminus B \cup \bigcup_{1\leq j}(S'\cap B_j)) \leq \dots \leq \delta(S'\setminus B \cup \bigcup_{j\leq \min\{l,m\}} A_j) = \delta(S'\setminus B \cup A)
	\]
	as required.
	
	We can conclude that Condition $(\ddagger)$ holds for $B=\sem{I}$ and $k'=k$.
	Thus, for any $k$-subset $A\subseteq \sem{I}$, there exists a $k$-subset $S'\subseteq S:=RelevantElements(I, \sem{I}, k)$ such that $\delta(A)\leq \delta(S')$.
	Consequently, $S$ contains at least one $k$-diverse subset of $\sem{I}$.
	
	\paragraph{Running time} 
	By considering the recursion tree Algorithm~\ref{alg:divAnsImpWM} implicitly traverses, one can check that the number of balls that are used from the the ultrametric tree is at most $O(2^k)$ (the worst that can happen is that every ball has 2 children and we have to explore the whole (binary) ultrametric tree till depth $k$). 
	Thus, the recursion takes time at most $O(2^k \cdot |I|^\ell)$.
	
	Lastly, computing the $\argmax$ in line 3 may take up to $O(\binom{2^k}{k}\cdot \fdelta(k))$ time where $\fdelta(k)$ represents the time required to compute $\delta$ over a set of $k$ elements.´
	Note that all of this is fixed parameter tractable.
\end{proof}
 	
	\section{Proofs of Section~\ref{sec:acq}} \label{app:acq}

\subsection*{Proof of Theorem~\ref{cor:acq-evaluation}}

\begin{proof}
	We want to apply Theorem~\ref{theo:ptime-implicit-rep} and, therefore, we need to give an implicit ultrametric tree $(\irroot, \irchildren, \irmember)$ which runs in time $O(|Q| \cdot |D|\cdot \log(|D|))$.
	To that end, we first define IDs for the balls $\Ball_{\sem{Q}(D)}$ which will be subsequently used by the implicit ultrametric tree.
	Concretely, for every ball $B \in \Ball_{\sem{Q}(D)}$, there exists values  $c_1, \ldots, c_i$ such that $B = \{Q(\bar{a}) \in \sem{Q}(D) \mid \forall j \leq i. \,\bar{a}[j] = c_j\}$.
	This means that all answers in $B$ agree on the values  $c_1, \ldots, c_i$ and these form a common prefix.
	Moreover, these prefixes are as long as possible, i.e., $i$ is as big as possible, and we can use $c_1, \ldots, c_i$ to uniquely identify $B$.
	
	To implement the methods $\irroot$, $\irchildren$, and $\irmember$, we modify Yannakakis algorithm~\cite{yannakakis1981algorithms}.
	Recall that Yannakakis algorithm proceeds in the following manner (we only sketch the preprocessing phase) on an ACQ $Q(\bar{x}) \leftarrow R_1(\bx_1), \ldots, R_{m}(\bx_{m})$:
	\begin{enumerate}
		\item The $R_i$ are arranged in a tree structure (in a join tree).
		\item Each $R_i$ gets assigned a unique copy $R^D_i$ of the corresponding table in $D$.
		\item The $R^D_i$ are semijoined as to delete dangling tuples.
	\end{enumerate}
	Thus, after the preprocessing phase, $R_i(h(\bx_{i}))\in R^D_i$ for some $h_i \colon \bx_{i} \rightarrow \D$ if and only if $h_i$ can be extended to a $h\colon \Var \rightarrow \D$ such that $Q(h(\bx))\in \sem{Q}(D)$.
	Thus, the admissible values 
	\[
	ad(x):= \{d\in \D \mid \exists h\colon \Var\rightarrow\D \text{ s.t., } h(x)=d \text{ and } Q(h(\bx))\in \sem{Q}(D)\}
	\]
	of a variable $x\in \Var$ can be compute in time $O(|D|)$ by inspecting a table $R^D_i$ where $x$ appears in $\bx_i$.
	Therefore, to compute the common prefix of $\mroot:=\sem{Q}(D)$, we simply have to iteratively go through $\bx[1], \dots, \bx[|\bx|]$ to find the first $\bx[i]$ such that $|ad(\bx[i])|\neq 1$.
	The common prefix of $\mroot$ then is $c_1, \ldots, c_{i-1}$ where $\{c_1\} = ad(\bx[1]), \dots, \{c_{i-1}\} = ad(\bx[i-1])$ (for $i=1$ the common prefix is the empty prefix $\epsilon$).
	Computing this prefix takes time $O(|Q|\cdot |D| \cdot \log(|D|))$ and is exactly what $\irroot$ does.
	
	To compute the children of a ball $B\in \Ball_{\sem{Q}(D)}$ with common prefix $\bar{c} = (c_1,\dots, c_{i-1})$ we can do the following:
	Consider the query $Q'(\bar{x}[i],\dots,\bar{x}[|\bar{x}|]) \leftarrow R_1(h(\bx_1)), \ldots, R_{m}(h(\bx_{m}))$ where $h$ is an partial assignment that maps $h(\bar{x}[j]) = c_j$ for every $j \leq i-1$ and $h(x) = x$ for any other variable $x \in \Var$. 
	I.e., we plugged the prefix $\bar{c}$ into the query $Q$.
	We can then compute the admissible values of the next variable $\bx[i]$ for the prefix $\bar{c}$, i.e., the set
	\[
	ad_{\bar{c}}(\bx[i]):= \{d\in \D \mid \exists h\colon \Var\rightarrow\D \text{ s.t., } h(\bx[i])=d \text{ and } Q'(h(\bar{x}[i],\dots,\bar{x}[|\bar{x}|]))\in \sem{Q'}(D)\}.
	\]
	This may take time up to $O(|Q|\cdot |D| \cdot \log(|D|))$.
	Note that $|ad_{\bar{c}}(\bx[i])|\geq 1$ as otherwise the prefix of $B$ would have been longer.
	Then, to enumerate the children we iterate through $c_i\in ad_{\bar{c}}(\bx[i])$.
	For a $c_i$, we plug in the new prefix $(\bar{c},c_i)$ into $Q$ which results in the query $Q'_{c_i}(\bar{x}[i+1],\dots,\bar{x}[|\bar{x}|]) \leftarrow R_1(h(\bx_1)), \ldots, R_{m}(h(\bx_{m}))$.
	Then, let $B_{c_i}$ be the answers $\sem{Q'_{c_i}}(D)$ prepended by the prefix $(\bar{c},c_i)$.
	Note that $B_{c_i}$ is a child of $B$ and we can compute the prefix of it as the prefix $\bar{c}'_{c_i}$ of $\sem{Q'_{c_i}}(D)$ prepended by the prefix $(\bar{c},c_i)$, i.e., it is $(\bar{c},c_i, \bar{c}'_{c_i})$ ($\bar{c}'_{c_i}$ may be the empty prefix).
	All of this takes time $O(|Q|\cdot |D| \cdot \log(|D|))$ for each $c_i$ and, thus, this is also the delay we get for $\irchildren$.
	
	Lastly, given a ball $B\in \Ball_{\sem{Q}(D)}$ with common prefix $\bar{c} = (c_1,\dots, c_{i-1})$ we can easily compute a $b\in B$ by computing any answer $a\in \sem{Q'}(D)$ with Yannakakis algorithm and prepend it with $\bar{c}$.
	Thus, $\irmember$ also only requires time $O(|Q|\cdot |D| \cdot \log(|D|))$.
\end{proof}

\subsection*{Proof of Theorem~\ref{theo:disruptive-trio}}

\begin{proof}
	We proceed similar to the proof of Theorem~\ref{cor:acq-evaluation} but in the absence of a disruptive trio we can move to an extension of Yannakakis algorithm developed in~\cite{DBLP:journals/tods/CarmeliTGKR23} which takes the order of the head variables $\bx$ in consideration.
	Intuitively, the absence of a disruptive trio ensures the existence of a \emph{layered join tree} whose layers follow the order of the variables in the head of $Q$.
	This will allow us to improve the runtime of $\irroot, \irchildren$ and $\irmember$ to $O(|Q|)$ if we allow a common preprocessing of $O(|Q|\cdot |D|\cdot \log(|D|))$.
	Thus, be inspecting how we arrive at the run time in Theorem~\ref{theo:ptime-implicit-rep} this then totals to the run time as required.
	
	We start by recalling the steps taken by the algorithm presented in~\cite{DBLP:journals/tods/CarmeliTGKR23} (we only sketch the preprocessing phase) on a free-connex ACQ $Q(\bar{x}) \leftarrow R_1(\bx_1), \ldots, R_{m}(\bx_{m})$ without a disruptive trio:
	\begin{enumerate}
		\item Projections of some $R_i(\bx_i)$ are possibly added to the query such that the resulting query has a layered join tree.
		However, the semantics of the query remains unchanged and, thus, we assume that $Q$ already includes all of these projections needed.
		\item The $R_i$ are arranged in a rooted tree structure $T$ (a layered join tree).
		As $Q$ is free-connex we assume that the free variables $\bx$ appear in a subtree which includes the root.
		\item Each $R_i$ gets assigned a unique copy $R^D_i$ of the corresponding table in $D$.
		\item The $R^D_i$ are semijoined as to delete dangling tuples.
		Furthermore, for each $R_i$ with child $R_j$, indexes are created such that we can access the join partners $t_j \in R^D_j$ of each $t_i\in R^D_i$ in constant time and with constant delay.
		\item As $Q$ is free-connex, we can now remove the bounded variables.
		Thus, w.l.o.g., we may assume $Q$ to be a full CQ, i.e., all $\bx_i$ are all sequences of variables in $\bx$.
	\end{enumerate}
	All of this preprocessing only requires $O(|Q|\cdot |D|\cdot \log(|D|))$ time.
	Furthermore, due to the layered join tree we can assume, w.l.o.g., that $R_i$ is the root of the subtree of $T$ containing the variable $\bx[i]$ and if $R_i$ is the parent of $R_j$ that $i<j$.
	Also, to simplify the subsequent presentation we assume the variables in $\bx_i$ to be ordered according to $\bx$ and that variables are not repeated within $\bx_i$ nor $\bx$.
	
	Now, we can proceed to define the algorithms $(\irroot, \irchildren, \irmember)$ similar to the proof of Theorem~\ref{cor:acq-evaluation} but it will no longer be necessary to recompute join trees and we can always remain in $T$.
	To that end, recall the definition of the prefix of a ball $B\in \Ball_{\sem{Q}(D)}$ as the values $c_1, \ldots, c_i$ such that $B = \{Q(\bar{a}) \in \sem{Q}(D) \mid \forall j \leq i. \,\bar{a}[j] = c_j\}$, and the admissible values for a variable $x\in \Var$, i.e.,
	\[
	ad(x):= \{d\in \D \mid \exists h\colon \Var\rightarrow\D \text{ s.t., } h(x)=d \text{ and } Q(h(\bx))\in \sem{Q}(D)\}.
	\]
	Furthermore, we also define the admissible values of a variable $x\in \Var$ given a prefix $\bar{c}=(c_1,\dots, c_{i-1})$.
	Slightly different to before, we define
	\[
	ad_{\bar{c}}(\bx[i]):= \{d\in \D \mid \exists h\colon \Var\rightarrow\D \text{ s.t., } h((\bx[1],\dots,\bx[i-1]))=\bar{c}, h(x[i])=d \text{ and } Q(h(\bar{x}))\in \sem{Q}(D)\}.
	\]
	
	Note that computing $ad(\bx[1])$ is easy as $\bx_1$ can only contain the variable $\bx[1]$ due to the fact that $R_1$ is the root of $T$ but only the root of the subtree containing the variable $\bx[1]$.
	Thus, $R_1(\bx[1])$ is part of the query and $ad(\bx[1]) = R^D_1$.
	If $|R^D_1| = |ad(\bx[1])|>1$, the prefix of $\mroot:=\sem{Q}(D)$ is the empty prefix $\epsilon$ and there is nothing more to do for $\irroot$.
	Otherwise, we proceed to $ad(\bx[2])$.
	
	To that end, let in general $\{c_1\} = ad(\bx[1]), \dots, \{c_{i-1}\} = ad(\bx[i-1])$ and we are looking at whether $ad(\bx[i])$ is of size 1 or greater than 1.
	To that end, consider $R_i$ which is the root of the subtree of $T$ containing the variable $\bx[i]$.
	Hence, in $Q$, it may only appear together with variables $\bx[j]$ with $j\leq i$.
	But we know all of them only have 1 admissible value, thus, $R^D_i$ is the same as $ad(\bx[i])$ where values $c_j$ for the $x[j]$ which appear in $\bx_i$ and where $j<i$ are prepended to $ad(\bx[i])$.
	I.e., for $h\colon \{\bx[1], \dots, \bx[i-1]\} \rightarrow \D, h(\bx[l]) := c_l$ 
	\[
	R^D_i = \{(h(\bx_i \setminus \{\bx[i]\}), d) \mid d\in ad(\bx[i])\}.
	\]
	Thus, if $|R^D_i| = |ad(\bx[i])|>1$, the prefix of $\mroot:=\sem{Q}(D)$ is $(c_1,\dots, c_{i-1})$ and there is nothing more to do for $\irroot$.
	Otherwise, we proceed to $ad(\bx[i+1])$.
	
	In total, applying the preprocessing as sketched above, $\irroot$ only requires $O(|Q|)$ time.
	
	Now lets proceed to computing the children of a ball $B\in \Ball_{\sem{Q}(D)}$.
	To that end, let $\bar{c}=(c_1,\dots, c_{i-1})$ be the common prefix of $B$.
	We go to $R_i$ and its parent $R_j$ where $j<i$.
	Similar to before, we can determine $ad_{\bar{c}}(\bx[i])$ by inspecting $R_i$.
	To that end, let $h\colon \{\bx[1], \dots, \bx[i-1]\} \rightarrow \D, h(\bx[l]) := c_l$ .
	Then,
	\[
	\{ (h(\bx_i \setminus \{\bx[i]\}), d) \in R^D_i \} = \{(h(\bx_i \setminus \{\bx[i]\}), d) \mid d\in ad_{\bar{c}}(\bx[i])\}.
	\]
	The left hand side are the tuples of $R^D_i$ that adhere to the prefix $\bar{c}$ while the right hand side are the admissible values of $\bx[i]$ given the prefix $\bar{c}$ prepended by the values of $\bar{c}$ that correspond to variables appearing in $\bx_i$.
	However, given the indexes on the parent $R_j$, we can compute the left hand side with constant delay.
	To see let consider $t_j:= h(\bx_j) \in R^D_j$ and notice that $\{ (h(\bx_i \setminus \{\bx[i]\}), d) \in R^D_i \}$ are exactly the join partners of $t_j$ in $R^D_i$.
	Thus, we can iterate through $c_i\in ad_{\bar{c}}(\bx[i])$ with constant delay.
	
	Now let $B_{c_i}\in \Ball_{\sem{Q}(D)}$ be the answers with prefixes $(\bar{c},c_i)$.
	Note that $B_{c_i}$ is a child of $B$ but we still have to extend this prefix to the maximal prefix for $B_{c_i}$.
	To that end, we have to inspect $ad_{(\bar{c},c_i)}(\bx[i+1])$.
	However, we already know how to compute this by following the same argumentation as for $ad_{\bar{c}}(\bx[i])$.
	If $|ad_{(\bar{c},c_i)}(\bx[i+1])|>1$ we stop and assert that $(\bar{c},c_i)$ is the correct maximal prefix.
	Otherwise, we continue to a $l>1$ such that $\{c_{i+1}\}= ad_{(\bar{c},c_i)}(\bx[i+1]), \dots, \{c_{i+l-1}\} = ad_{(\bar{c},c_i)}(\bx[i+l-1])$ and $|ad_{(\bar{c},c_i)}(\bx[i+l])|>1$.
	Then, $(\bar{c},c_i, \dots, c_{i+l-1})$ is the maximal prefix of $B_{c_i}$.
	In total, $\irchildren$ has a delay of at most $O(|Q|)$.
	
	Lastly, given a ball $B\in \Ball_{\sem{Q}(D)}$ with common prefix $\bar{c} = (c_1,\dots, c_{i-1})$ we can easily compute a $b\in B$.
	To do that, we iterate through $R_i,\dots, R_m$ in this order.
	Let $h\colon \{\bx[1], \dots, \bx[i-1]\} \rightarrow \D, h(\bx[l]) := c_l$.
	We process $R_i$ with parent $R_j$ by considering the tuple $t_j:=h(\bx_j)\in R^D_j$ and simply select the first join partner $t_i\in R_i$.
	Then, $t_i$ assigns a value to $\bx[i]$ which we call $c_i$.
	Now we can proceed to $R_{i+1}$ with the prefix $(c_1,\dots, c_{i})$, i.e., the previous prefix extended by $c_i$.
	Also this process, i.e., $\irmember$, only requires $O(|Q|)$ time.
		
	By reinspecting how we arrive at the run time in Theorem~\ref{theo:ptime-implicit-rep} -- in particular what role the run times of $(\irroot, \irchildren, \irmember)$ play -- this then totals to the run time as required.
\end{proof}

\subsection*{Proof of Theorem~\ref{theo:disruptive-trio-W}}

\begin{proof}
	Let us reconsider the proof of Theorem~\ref{theo:disruptive-trio} and the definitions used there.
	Furthermore, let us consider the execution of Algorithm \ref{alg:divAnsImpSSM} using the implicit ultrametric tree developed in the proof of Theorem~\ref{theo:disruptive-trio}.
	To that end, let $S, L$ be as they are at the start of some loop iteration, i.e., in line 6.
	Moreover, let $L=\{B_1,\dots, B_l\}$ and let $\bar{p}_i$ be the prefix corresponding to the ball $B_i\in \Ball_{\sem{Q}(D)}$ and $\bar{c}_i$ be the prefix corresponding to the current children, i.e., of the ball $\irchildren(I,B_i).\ircurrent$.
	This means that there is an answer $h\in S \subseteq \sem{Q}(D)$ with the prefix $p_i$ for any $i=1,\dots,l$ but there is no answer with the prefix $\bar{c}_{i}[1],\dots, \bar{c}_{i}[|\bar{p}_i +1|]$.
	Therefore, the incremental diversity of each $b_i:=\irmember(I,\irchildren(I,B_i).\ircurrent)$ is
	\[
	\delta(S\cup \{b_i\})-\delta(S) = \udR(b_i,S) = 2^{-|\bar{p}_i|-1}.
	\]
	Thus, by storing $L$ as an array (of length $|Q|$) of sets with $B_i$ in the set at position $|\bar{p}_i +1|$, we can find a $B$ as required in line 7 in time $|Q|$.
		
	By reinspecting how we arrive at the run time in Theorem~\ref{theo:ptime-implicit-rep} -- in particular why the term $k^2\cdot f(k)$ arises -- this then totals to the run time as required.
\end{proof}

\end{document}